\newtheorem{theorem}{Theorem}[section]
\newtheorem{lemma}{Lemma}[section]
\newtheorem{corollary}{Corollary}[section]
\newtheorem{claim}{Claim}[section]
\newcommand{\qed}{\hfill $\Box$ \bigbreak}
\newenvironment{proof}{\noindent {\bf Proof.}}{\qed}
\newenvironment{proofclaim}{\noindent{\bf Proof of the claim.}}{\hfill$\star$}
\title{{\bf Graph Exploration: The Impact of a Distance Constraint}}
\author{
St\'ephane Devismes\thanks{
MIS Lab., Universit\'{e} de Picardie Jules Verne, France. E-mail: stephane.devismes@u-picardie.fr}
\and
Yoann Dieudonn\'{e}\thanks{
MIS Lab., Universit\'{e} de Picardie Jules Verne, France. E-mail: yoann.dieudonne@u-picardie.fr}
\and
Arnaud Labourel\thanks{
Aix Marseille Univ, CNRS, LIS, Marseille, France. Email: arnaud.labourel@lis-lab.fr}
}
\date{ }
\begin{document}
\sloppy
\maketitle
\begin{abstract}
A mobile agent, starting from a node $s$ of a simple undirected
connected graph $G=(V,E)$, has to explore all nodes and edges of $G$
using the minimum number of edge traversals. To do so, the agent uses a deterministic algorithm that allows it to gain information on $G$ as it traverses its edges. 
During its exploration, the agent must
always respect the constraint of knowing a path of length at most $D$
to go back to node $s$. The upper bound $D$ is fixed as being equal to
$(1+\alpha)r$, where $r$ is the eccentricity of node $s$ (i.e., the
maximum distance from $s$ to any other node) and $\alpha$ is any
positive real constant. This task has been introduced by Duncan et
al.~\cite{DuncanKK06} and is known as \emph{distance-constrained
exploration}.

The \emph{penalty} of an exploration algorithm running in $G$ is the
number of edge traversals made by the agent in excess of $|E|$. In
\cite{PanaiteP99}, Panaite and Pelc gave an algorithm for solving
exploration without any constraint on the moves that is guaranteed to
work in every graph $G$ with a (small) penalty in
$\mathcal{O}(|V|)$. Hence, a natural question is whether we could
obtain a distance-constrained exploration algorithm with the same
guarantee as well.

In this paper, we provide a negative answer to this question. We also
observe that an algorithm working in every graph $G$ with a linear
penalty in $|V|$ cannot be obtained for the task of
\emph{fuel-constrained exploration}, another variant studied in the
literature.

This solves an open problem posed by Duncan et
al. in~\cite{DuncanKK06} and shows a fundamental separation with the
task of exploration without constraint on the moves.
\vspace*{1cm}

\noindent{\bf Keywords:} exploration, graph, mobile agent.

\vspace*{2cm}
\end{abstract}

\pagebreak

\section{Introduction}

\subsection{Background}

Exploring an unknown environment is a fundamental task with numerous applications, including target localization, data gathering, and map construction. The considered environments can vary considerably in type: they may be terrains, networks, three-dimensional spaces, etc. In hazardous situations, it is generally more appropriate to delegate exploration to a mobile artificial entity, hereinafter referred to as an agent, which could be an autonomous robot or a vehicle remotely controlled by a human.

In this paper, we focus our attention on the task of exploration by a mobile agent of a network modeled as a graph, and more specifically on a variant where the agent must always know a path from its initial node to its current node that does not exceed a given upper bound. This requirement finds its justification in applications where the agent needs to always keep the possibility of returning quickly to its base or to always maintain a certain distance to ensure ongoing communication with its base.

\subsection{Model and Problem Definition}\label{sec:model}

An agent is assigned the task of exploring a simple\footnote{In \cite{DuncanKK06}, where the task of distance-constrained exploration that we will study has been introduced, the authors do not mention explicitely whether they consider simple graphs or multigraphs. A closer look at their work can reveal, though, that their results hold for multigraphs. However, it is important to note that, in the context of the impossibility proof that we will conduct, the choice of focusing only on simple graphs does not lead to a loss of generality. In fact, it broadens the scope of our result, ensuring that it holds for both simple graphs and multigraphs.} finite undirected graph $G=(V,E)$, starting from a node $s$, called the \emph{source node}. The exploration requires visiting (resp. traversing) at least once every node (resp. every
edge). Thus, $G$ is supposed to be connected. We denote by ${\tt deg}(v)$ the degree of a node $v$ in $G$, and by $|V|$ (resp. $|E|$) the order (resp. size) of $G$.

We make the same assumption as in \cite{DuncanKK06} which
states that the agent has unbounded memory and can recognize already
visited nodes and traversed edges.
This is formalized as follows. Nodes of $G$ have arbitrary pairwise
distinct labels that are positive integers. The label of $s$ will be denoted by $l_s$. Distinct labels, called
{\em port numbers} and ranging from $0$ to ${\tt deg}(v)-1$, are
arbitrarily assigned locally at each node $v$ to each of its incident
edge. Hence, each edge has two ports, one per endpoint, and there
is no a priori relationship between these two ports. At the beginning, the
agent located at node $s$
learns only its degree and its label. To explore the graph, the
agent applies a deterministic algorithm that makes it act in steps: at each step,
the algorithm selects a port number $p$ at the current node $v$ (on
the basis of all information that has been memorized so far by the
agent) and then asks the agent to traverse the edge having port $p$ at
$v$. At that point, the agent traverses the given edge at any finite positive speed and, once it enters the adjacent node, it learns (only) its
degree, its label as well as the incoming port number. Afterwards, the agent starts
the next step, unless the algorithm has terminated.

Roughly speaking, the memory of the agent is the total information it has collected since the beginning
of its exploration. We formalize it as follows: the
memory of the agent, after the first $t$ edge traversals, is the
sequence $(M_0,M_1,\dots ,M_t)$, where $M_0$ is the information of the
agent at its start and $M_i$ is the information acquired right after
its $i$th edge traversal. Precisely, the initial memory $M_0$ is
represented by the 4-tuple $(l_0,d_0,-1,-1)$, where $l_0$ and $d_0$
correspond to the label of the source node and its degree
respectively.  Then, if $i\geq 1$, $M_i$ is the 4-tuple
$(l_{i},d_i,p_i,q_i)$ where $l_i$ is the identifier of the occupied
node after the $i$th edge traversal, $d_i$ the degree of that node,
$p_i$ is the port by which the agent has left the node $l_{i-1}$ to
make the $i$th edge traversal, and $q_i$ the port by which the agent
entered node $l_i$ via its $i$th edge traversal.


Without any constraint on the moves, the task of exploration in the above model is called \emph{unconstrained exploration}. However, in this paper, we study a variant introduced in \cite{DuncanKK06} and called \emph{distance-constrained exploration}. In this variant, the agent must deal with the additional constraint of always knowing a path (i.e., a sequence of already explored edges) of length at most $D$ from its current location to node $s$. The value $D$ is fixed to $(1+\alpha)r$, where $r$ and $\alpha$ respectively correspond to the eccentricity of node $s$ and an arbitrary positive real constant.  In \cite{DuncanKK06}, two scenarios are considered: one where both $r$ and $\alpha$ are initially known to the agent, and another where only $\alpha$ is known. In this paper, we adopt the more challenging scenario from a proof-of-impossibility perspective, where both $r$ and $\alpha$ are initially known to the agent, in order to give our result a greater impact.

An instance of the problem is defined as the tuple $(G,l_s,\alpha)$. We will denote by $\mathcal{I}(\alpha,r)$ the set of all instances $(G,l_s,\alpha)$ such that the eccentricity of the source node $s$ (of label $l_s$) in $G$ is $r$. 

An important notion involved in this paper is that of \emph{penalty}
incurred during exploration. First discussed by Pelc and Panaite in
\cite{PanaiteP99}, the \emph{penalty} of an exploration algorithm
running in $G$ is the number of edge traversals in excess of
$|E|$ made by the  mobile agent. In their seminal paper
\cite{PanaiteP99}, Panaite and Pelc gave an algorithm for solving
unconstrained exploration, which is guaranteed to work in every graph
$G$ with a (small) penalty of $\mathcal{O}(|V|)$. In the light of this result, an intriguing question naturally arises: \\

{\it Would it be possible to design a distance-constrained exploration algorithm working in every graph $G$ with a penalty of $\mathcal{O}(|V|)$?}

\subsection{Our Results}
\label{sec:ourresults}
In this paper, we provide a negative answer to the above question. This negative answer is strong as we show that for any positive real $\alpha$ and every integer $r\geq 6$, there is no algorithm that can solve distance-constrained exploration for every instance $(G=(V,E),l_s,\alpha)$ of $\mathcal{I}(\alpha,r)$ with a penalty of $\mathcal{O}(|V|)$. 

Moreover, we observe that the impossibility result remains true for another variant of constrained exploration studied in the literature, known as \emph{fuel-constrained exploration}. In this variant, which was introduced by Betke et al. in \cite{BetkeRS95}, the context remains the same, except that now the agent must face a slightly different constraint: it has a fuel tank of limited size that can be replenished only at its starting node $s$. The size of the tank is $B=2(1+\alpha)r$, where $\alpha$ is any positive real constant and $r$ is the eccentricity of $s$. The tank imposes an important constraint, as it forces the agent to make at most {$\lfloor B\rfloor$} edge traversals before having to refuel at node $s$, otherwise the agent will be left with an empty tank and unable to move, preventing further exploration of the graph. The underlying graph and node $s$ are chosen so that $r\alpha\geq 1$ (if not, fuel-constrained exploration is not always feasible).

Through our two impossibility results, we show a fundamental
separation with the task of unconstrained exploration. We also solve
an open problem posed by Duncan et al. in \cite{DuncanKK06}, who asked
whether distance-constrained or fuel-constrained exploration could be
achieved with a penalty of $\mathcal{O}(|V|)$ in every
graph.\footnote{In \cite{DuncanKK06}, the authors also investigated a
third variant known as \emph{rope-constrained exploration}, but they
did not include it in their open problem. This point is further
detailed at the end of the related work section.}


\subsection{Related Work}

The problem of exploring unknown graphs has been the subject of
extensive research for many decades. It has been studied in scenarios
where exploration is conducted by a single agent and in scenarios
where multiple agents are involved. In what follows, we focus only on
work related to the former case (for the latter case, a good starting
point for the curious reader is the survey of Das \cite{Das19}).

In the context of theoretical computer science, the exploration problem was initially approached by investigating how to systematically find an exit in a maze (represented as a finite connected subgraph of the infinite 2-dimensional grid in which the edges are consistently labeled North, South, East, West). In \cite{Shannon51}, Shannon pioneered this field with the design of a simple finite-state automaton that can explore any small maze spanning a $5 \times 5$ grid. More than twenty years later, Budach \cite{Budach75} significantly broadened the scope of study by showing that no finite-state automaton can fully explore all mazes. Blum and Kozen showed in \cite{BlumK78} that this fundamental limitation can be overcome by allowing the automaton to use only two pebbles. The same authors naturally raise the question of whether one pebble could suffice. A negative answer to this question is provided by Hoffmann in \cite{Hoffmann81}.

In the following decades, much of the attention shifted to the exploration of arbitrary graphs, for which the solutions dedicated to mazes become obsolete. Studies on the exploration of arbitrary graphs can be broadly divided into two types: those assuming that nodes in the graph have unique labels recognizable by the agent, and those assuming anonymous nodes.

As for mazes, exploration of anonymous graphs has been investigated
under the scenario in which the agent is allowed to use pebbles
\cite{BenderFRSV02,ChalopinDK10,DieudonneP12,DudekJMW91}. In
\cite{DudekJMW91}, Dudek et al. showed that a robot provided with a
single movable pebble can explore any undirected anonymous graph using
$\mathcal{O}(|V|\cdot|E|)$ moves. If the pebble is not movable and can
only be used to mark the starting node of the agent, it is shown in
\cite{ChalopinDK10} that exploration is still possible using a number
of edge traversals polynomial in $|V|$. In \cite{DieudonneP12}, the
authors investigated the problem in the presence of identical
immovable pebbles, some of which are Byzantine (a Byzantine pebble can
be unpredictably visible or invisible to the agent whenever it visits
the node where the pebble is located) and designed an exploration
algorithm working in any undirected graph provided one pebble always remains
 fault-free. In the case where the graph is directed,
exploration becomes more challenging due to the impossibility of
backtracking and may even be sometimes infeasible if the graph is not
strongly connected. However, in the scenario where the agent evolves
in strongly connected directed graphs, Bender et al. proved that
$\Theta(\log \log |V|)$ movable pebbles are necessary and sufficient
to achieve a complete exploration.

When no marking of nodes is allowed, exploration of undirected anonymous graphs becomes obviously more difficult, and detecting the completion of the task cannot be always guaranteed without additional assumptions. If randomization is allowed, we know from \cite{AleliunasKLLR79} that a simple random walk of length $\mathcal{O}(\Delta^2|V|^3\log |V|)$ can cover, with high probability, all nodes of an undirected graph $G=(V,E)$ of maximal degree $\Delta$. Such a random walk can thus be used to fully explore a graph (and eventually stop), by initially providing the agent with an upper bound on $|V|$ and by making it go back and forth over each edge incident to a node of the random walk. If randomization is not allowed, all nodes of an undirected anonymous graph can still be visited, using universal exploration sequences (known as UXS) introduced in~\cite{KOUCKY2002717}. Formally, a sequence of integers $x_1,x_2,\dots,x_k$ is said to be a UXS for a class $\mathcal{G}$ of graphs, if it allows an agent, starting from any node of any graph $G\in\mathcal{G}$, to visit at least once every node of $G$ in $k+1$ steps as follows. In step $1$, the agent leaves its starting node $v_1$ by port $0$ and enters some node $v_2$. In step $2\leq i \leq k+1$, the agent leaves its current node $v_i$ by port $q=(p+x_{i-1})\mod {\tt deg}(v_i)$, where $p$ is the port by which it entered $v_i$ in step $i-1$, and enters some node $v_{i+1}$. In~\cite{Reingold08}, it is proven that for any positive integer $n$, a UXS of polynomial length in $n$, for the class of graphs of order at most $n$, can be computed deterministically in logarithmic space and in time polynomial  in $n$. As with random walks, this can be used to fully explore a graph, by initially providing the agent with an upper bound $n$ on $|V|$ and by instructing it to traverse each edge incident to a node visited by a UXS for the class of graphs of order at most $n$. Therefore, whether deterministically or not, it can be easily shown from \cite{AleliunasKLLR79,Reingold08} that an agent can explore any undirected anonymous graph of order at most $n$ even without the ability to mark nodes as long as it has a memory of size $\mathcal{O}(\log n)$: in \cite{FraigniaudIPPP05} Fraigniaud et al. gave a tight lower bound by showing that a memory of size $\Omega(\log n)$ is necessary.

The exploration of labeled graphs was a subject of research in \cite{AlbersH00,AwerbuchBRS99,AwerbuchK98,BetkeRS95,DengP99,DuncanKK06,PanaiteP99}, with a particular focus on minimizing the number of edge traversals required to accomplish the task.

In \cite{DengP99}, Deng and Papadimitriou showed an upper bound of
$d^{\mathcal{O}(d)}|E|$ moves to explore any labeled directed graph
that is strongly connected, where $d$, called the deficiency of the graph, is the minimum number of edges
that have to be added to make the graph Eulerian. This is subsequently
improved on by Albers and Henzinger in \cite{AlbersH00}, who gave a
sub-exponential algorithm requiring at most $d^{\mathcal{O}(\log
  d)}|E|$ moves and showed a matching lower bound of $d^{\Omega(\log
  d)}|E|$. For arbitrary labeled undirected graphs, the fastest
exploration algorithm to date is the one given by Panaite and Pelc in
\cite{PanaiteP99}, which allows an agent to traverse all edges using
at most $|E|+\mathcal{O}(|V|)$ moves. Their algorithm thus entails at
most a linear penalty in $|V|$, which is
significantly better than other classic strategies of exploration (for
instance, the standard Depth-First Search algorithm, which takes
$2|E|$ moves, has a worst-case penalty that is quadratic in
$|V|$). This upper bound of $\mathcal{O}(|V|)$ is asymptotically tight 
as the penalty can be in $\Omega(V)$, even in some Eulerian graphs. In particular, it is the case when an agent is required to explore a simple line made of an odd number $|V|\geq 3$ of nodes, by starting from the median node: the penalty is then necessarily at least $\frac{|V|-1}{2}$. 

The papers
\cite{AwerbuchBRS99,AwerbuchK98,BetkeRS95,DuncanKK06} are the closest
to our work. As mentioned in Section~\ref{sec:ourresults}, the problem
of fuel-constrained exploration of a labeled graph was introduced by
Betke et al. in~\cite{BetkeRS95}. In this paper, the authors gave
fuel-constrained algorithms using $\mathcal{O}(|E|+|V|)$ moves but for
some classes of graphs only. This line of study was then continued
in~\cite{AwerbuchBRS99} (resp. in~\cite{AwerbuchK98}) in which is
provided an $\mathcal{O}(|E|+|V|{\log^2|V|})$ algorithm (resp. an
$\mathcal{O}(|E|+|V|^{1+o(1)})$ algorithm) working in arbitrary
labeled graphs. Finally, Duncan et al. \cite{DuncanKK06} obtained
solutions requiring at most $\mathcal{O}(|E|+|V|)$ edge traversals for
solving fuel-constrained and distance-constrained
explorations. Precisely, the authors did not directly work on these
two variants, but on a third one called \emph{rope-constrained
exploration}, in which the agent is tethered to its starting node $s$
by a rope of length $L\in\Theta(r)$ that it unwinds by a length of $1$
with every forward edge traversal and rewinds by a length of $1$ with
every backward edge traversal. Hence, they designed an
$\mathcal{O}(|E|+|V|)$ algorithm for rope-constrained exploration
which can be transformed into algorithms solving the two other
constrained exploration problems with the same complexity. However,
these algorithms all carry a worst-case penalty proportional to
$|E|$ even when $|E|\in\Theta(|V|^2)$. As an open problem, Duncan et al. then asked whether we could
obtain algorithms for solving constrained explorations with a
worst-case penalty linear in $|V|$, as shown in \cite{PanaiteP99} when
there is no constraint on the moves. Precisely, Duncan et
al. explicitly raised the question for fuel-constrained and
distance-constrained explorations and not for rope-constrained
exploration. This is certainly due to the following straightforward
observation: at any point of a rope-constrained exploration, the
difference between the number of forward edge traversals and the
number of backward edge traversals can never exceed $L$. Thus, the
penalty of any rope-constrained exploration algorithm executed in any
graph $G=(V,E)$ is always at least $|E|$-$L$, which can be easily seen
as not belonging to $\mathcal{O}(|V|)$ in general, since $L$ belongs
to $\Theta(r)$ and thus to $\mathcal{O}(|V|)$.

\section{Preliminaries}
\label{sec:preli}
In this section, we introduce some basic definitions, conventions, operations, and results that will be used throughout the rest of the paper.

We call \emph{consistently labeled graph} a simple undirected labeled
graph in which nodes have pairwise distinct identifiers and in which
port numbers are fixed according to the rules given in
Section~\ref{sec:model}. We will sometimes use simple undirected
graphs with no identifiers and with no port numbers, especially when
describing intermediate steps of constructions: these graphs will be
called \emph{unlabeled graphs}. However, when the context makes it
clear or when the distinction between labeled and unlabeled graphs is
insignificant, we will simply use the term \emph{graph}.

A graph $G$ is said to be \emph{bipartite} if its nodes can be
partitioned into two sets $U$ and $U'$ such that no two nodes within
the same set are adjacent. If $G$ is also connected, such partition
$\{U,U'\}$ is unique and referred to as the \emph{bipartition} of $G$.
A graph is said to be \emph{regular} (resp. \emph{$k$-regular})
if each of its nodes has the same degree (resp. has degree
$k$). 

Below is a simple lemma about regular bipartite graphs. It follows from the fact that a $k$-regular bipartite graph of order $2n$, with $n\geq k\geq 1$, can be constructed by taking two sets of nodes $U=\{u_0,u_1,\ldots,u_{n-1}\}$ and $U'=\{u'_0,u'_1,\ldots,u'_{n-1}\}$ and then by adding edges so that for every $i$ and $j$ $\in [0..n-1]$, there is an edge between $u_i$ and $u'_j$ iff $j=(i+x)\mod n$ for some $x\in [0..k-1]$.
\begin{lemma}
\label{lem:biexist}
For every positive integer $k$, and for every integer $n\geq k$, there
exists a $k$-regular bipartite graph whose bipartition $\{U,U'\}$ satisfies $|U|=|U'|=n$.
\end{lemma}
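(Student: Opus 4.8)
The plan is to simply verify that the explicit construction recalled just above the statement does the job. Fix integers $n\geq k\geq 1$, let $U=\{u_0,u_1,\ldots,u_{n-1}\}$ and $U'=\{u'_0,u'_1,\ldots,u'_{n-1}\}$, and put an edge between $u_i$ and $u'_j$ exactly when $j\equiv i+x \pmod n$ for some $x\in[0..k-1]$. Two things must then be checked: that $\{U,U'\}$ is a valid partition witnessing bipartiteness with $|U|=|U'|=n$, and that every node has degree exactly $k$.

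Bipartiteness is immediate: by construction every edge has one endpoint in $U$ and the other in $U'$, so no two nodes lying in the same part are adjacent; moreover $|U|=|U'|=n$ by definition. For the degree count, fix $i\in[0..n-1]$. The neighbours of $u_i$ are exactly the nodes $u'_j$ with $j\in\{(i+x)\bmod n : x\in[0..k-1]\}$. Since $0\leq x<k\leq n$, the $k$ shifts $0,1,\ldots,k-1$ are pairwise incongruent modulo $n$, hence this set contains exactly $k$ distinct indices and ${\tt deg}(u_i)=k$. Symmetrically, $u'_j$ is adjacent to $u_i$ iff $i\equiv j-x \pmod n$ for some $x\in[0..k-1]$, and the same argument (again using $k\leq n$) shows that there are exactly $k$ such indices $i$, so ${\tt deg}(u'_j)=k$. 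Therefore the graph is $k$-regular, which completes the verification.

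The only point requiring any care — and it is the sole, very mild, obstacle — is the use of the hypothesis $k\leq n$: it is precisely what guarantees that the $k$ consecutive shifts remain distinct modulo $n$, so that the construction neither introduces parallel edges (keeping the graph simple) nor produces a node of degree smaller than $k$. No further machinery is needed; one may note in passing that the resulting graph is a bipartite circulant graph.
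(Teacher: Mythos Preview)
Your proof is correct and follows exactly the approach the paper indicates just before the lemma: you take the explicit circulant-type construction and verify bipartiteness and $k$-regularity, using $k\leq n$ to ensure the $k$ shifts are pairwise distinct modulo $n$. This is precisely what the paper has in mind, only spelled out in more detail.
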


The impossibility proof described in Section~\ref{sec:imp} closely relies on a family of graphs $\mathcal{F}(\ell,w,r)$, where $\ell\geq 2$, $w\geq 4$ and $r\geq 6$ are integers. Precisely, $\mathcal{F}(\ell,w,r)$ is defined as the set of all consistently labeled graphs that can be constructed by successively applying the following five steps.

\begin{itemize}
\item{\bf Step~1.} Take a sequence of $\ell$ sets $V_1,V_2,V_3,\ldots,V_{\ell}$, each containing $w$ nodes. For every $1\leq i \leq \ell$, assign pairwise distinct labels to nodes of $V_i$ ranging from $w(i-1)+1$ to $wi$: these nodes will be called the \emph{nodes of level $i$}. Then, for every $1\leq i\leq \ell-1$, add edges from nodes of $V_i$ to nodes of $V_{i+1}$ so that the graph induced by $V_i\cup V_{i+1}$ is a $\lfloor\frac{w}{4}\rfloor$-regular bipartite graph denoted $B_i$ whose bipartition is $\{V_i,V_{i+1}\}$ (such a graph exists by Lemma~\ref{lem:biexist}).
  
\item{\bf Step~2.} This step is made of $\ell-1$ phases, and, in each phase $1\leq i \leq \ell-1$, we proceed as follows. Let $\beta=w\lfloor\frac{w}{4}\rfloor$ (which corresponds to the number of edges of the subgraph induced by $V_i\cup V_{i+1}$). Take a sequence of $\lfloor\frac{7\beta}{8}\rfloor$ new nodes $(g^i_1,g^i_2,g^i_3,\ldots,g^i_{\lfloor\frac{7\beta}{8}\rfloor})$ (by ``new nodes'', we precisely mean nodes that do not belong (yet) to the graph under construction). These nodes will be called \emph{gadgets}. For each $1\leq j \leq \lfloor\frac{7\beta}{8}\rfloor$, assign label $w\ell+ (i-1)\lfloor\frac{7\beta}{8}\rfloor+j$ to gadget $g^i_j$. Then, take a node $u$ of $V_i$ and a node $v$ of $V_{i+1}$ such that $u$ and $v$ are neighbors, remove the edge between $u$ and $v$, and finally add an edge between $u$ and $g^i_j$ as well as $v$ and $g^i_j$.
\item{\bf Step~3.} Take a pair of new nodes $(v_1,v_2)$. Assign label $0$ (resp. label $w\ell+ (\ell-1)\lfloor\frac{7\beta}{8}\rfloor+1$) to node $v_1$ (resp. node $v_2$). Then, add an edge between $v_1$ (resp. $v_2$) and each node of $V_1$ (resp. each gadget). Node $v_2$ will be called the \emph{critical node}.

\item{\bf Step~4.} Take a line $L$ made of $(r-3)$ new nodes. Add an edge between the critical node $v_2$ and one of the endpoints of $L$. Then, for each node $u$ of $L$, assign label $w\ell+ (\ell-1)\lfloor\frac{7\beta}{8}\rfloor+1+d$ where $d$ is the distance from node $v_2$ to $u$. The nodes of $L$ will form what will be called the \emph{tail} of the graph and the farthest endpoint of $L$ from $v_2$ will be called the \emph{tail tip}.
\item{\bf Step~5.} For every node $u$ belonging to the graph resulting from the successive application of steps 1 to 4, arbitrarily add pairwise distinct port numbers to each of its incident edges ranging from $0$ to ${\tt deg}(u)-1$.
\end{itemize}
Let us give some additional definitions. For every $1\leq i \leq \ell-1$, the layer $\mathcal{L}_G(i,i+1)$ of a graph $G$ resulting from the above construction is the set made of the following edges: those connecting a node of level $i$ with a node of level $i+1$, which will be called the \emph{green edges} of $\mathcal{L}_G(i,i+1)$, and those connecting a gadget of $(g^i_1,g^i_2,g^i_3,\ldots,g^i_{\lfloor\frac{7\beta}{8}\rfloor})$ with a node of level $i$ or $i+1$, which will be called the \emph{red edges} of $\mathcal{L}_G(i,i+1)$. When it is clear from the context, we will sometimes omit the reference to the underlying graph and use the notation $\mathcal{L}(i,i+1)$ instead of $\mathcal{L}_G(i,i+1)$. 

An example of a graph belonging to $\mathcal{F}(4,8,7)$ is depicted, with its key components, in Figure~\ref{fig:example}.

\begin{figure}[!t]
    \centering
    \includegraphics[width=0.7\textwidth]{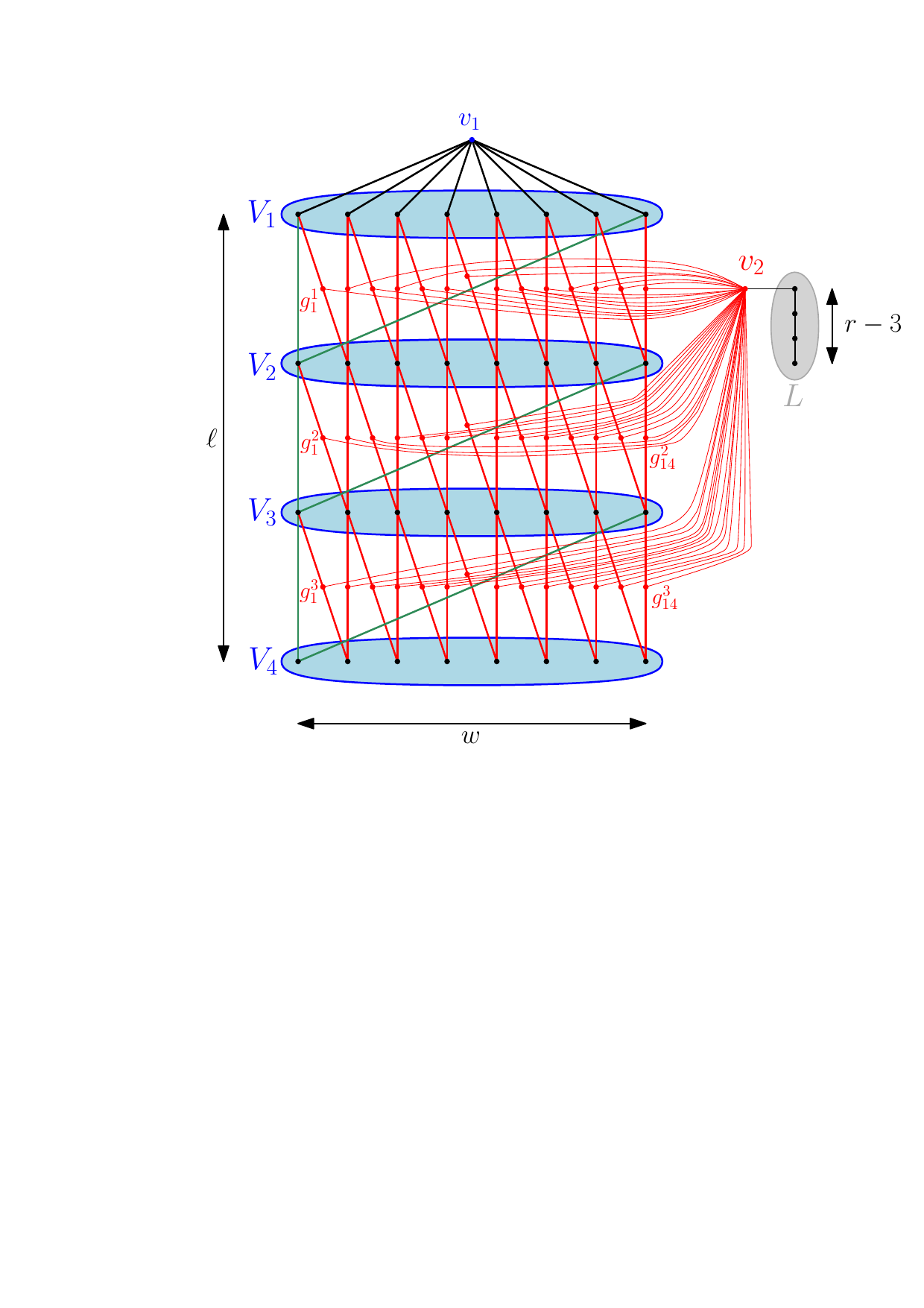}
    \caption{Example of a graph in $\mathcal{F}(4,8,7)$.}
    \label{fig:example}
\end{figure}

The next lemma is a direct consequence of our $5$-step construction.

\begin{lemma}
\label{lem:conse}
Let $\ell\geq 2$, $w\geq 4$ and $r\geq 6$ be integers. Let $\beta=w\lfloor\frac{w}{4}\rfloor$. For each graph $G$ of $\mathcal{F}(\ell,w,r)$, we have the following properties:
\begin{enumerate}
\item $G$ is consistently labeled. It contains $\ell-1$ layers and $\ell$ levels. 
\item The order of $G$ is $w\ell+ (\ell-1)\lfloor\frac{7\beta}{8}\rfloor+r-1$ and its number of gadgets is $(\ell-1)\lfloor\frac{7\beta}{8}\rfloor$.
\item For every $1\leq i \leq \ell-1$, the number of red edges (resp. green edges) in $\mathcal{L}_G(i,i+1)$ is $2\lfloor\frac{7\beta}{8}\rfloor$ (resp. $\lceil\frac{\beta}{8}\rceil$).
\item Two nodes that are incident to the same green edge in $G$ cannot be the neighbors of the same gadget.
\end{enumerate} 
\end{lemma}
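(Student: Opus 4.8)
The plan is to verify the four items one by one, each by a direct reading of Steps~1--5; the whole thing is bookkeeping, with only two small points (flagged at the end) that deserve a moment of care.

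\emph{Item 1.} Step~5 assigns, by its very wording, port numbers obeying the rules of Section~\ref{sec:model}, so ``consistently labeled'' reduces to checking that the identifiers handed out in Steps~1--4 are pairwise distinct. I would display them as consecutive blocks of integers: $v_1$ receives $0$; the level nodes receive $1,\dots,w\ell$ (Step~1); the assignment $(i,j)\mapsto w\ell+(i-1)\lfloor\frac{7\beta}{8}\rfloor+j$ is a bijection from $\{1,\dots,\ell-1\}\times\{1,\dots,\lfloor\frac{7\beta}{8}\rfloor\}$ onto $\{w\ell+1,\dots,w\ell+(\ell-1)\lfloor\frac{7\beta}{8}\rfloor\}$, which are exactly the gadget labels (Step~2); $v_2$ receives the next integer (Step~3); and the tail nodes receive the following $r-3$ integers, one for each distance value $d=1,\dots,r-3$ from $v_2$ (Step~4). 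The counts ``$\ell$ levels, $\ell-1$ layers'' are then immediate from the definitions of $V_1,\dots,V_\ell$ and of $\mathcal{L}_G(i,i+1)$ for $1\le i\le\ell-1$.

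\emph{Items 2 and 3.} For item 2 I would add up the nodes introduced in each step: $w\ell$ in Step~1, $(\ell-1)\lfloor\frac{7\beta}{8}\rfloor$ gadgets in Step~2, $v_1$ and $v_2$ in Step~3, $r-3$ tail nodes in Step~4, none in Step~5, giving order $w\ell+(\ell-1)\lfloor\frac{7\beta}{8}\rfloor+r-1$ and gadget count $(\ell-1)\lfloor\frac{7\beta}{8}\rfloor$. For item 3, fix $i$; right after Step~1 the edges between levels $i$ and $i+1$ are exactly those of the $\lfloor\frac{w}{4}\rfloor$-regular bipartite graph $B_i$ on $V_i\cup V_{i+1}$, hence $w\lfloor\frac{w}{4}\rfloor=\beta$ of them. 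Phase $i$ of Step~2 then, for each of its $\lfloor\frac{7\beta}{8}\rfloor$ gadgets, deletes one such (green) edge and creates two (red) edges at that gadget; here I would note that this is well defined because $\lfloor\frac{7\beta}{8}\rfloor<\beta$ (as $\beta=w\lfloor\frac{w}{4}\rfloor\ge4$), so a neighbouring pair in $V_i\times V_{i+1}$ always remains available to be broken. Since no later step touches either family (Step~3 adds only $v_2$--gadget and $v_1$--$V_1$ edges, Step~4 only tail edges, Step~5 nothing), $\mathcal{L}_G(i,i+1)$ ends with $2\lfloor\frac{7\beta}{8}\rfloor$ red edges and $\beta-\lfloor\frac{7\beta}{8}\rfloor$ green edges, and the claim follows from the elementary identity $\beta-\lfloor\frac{7\beta}{8}\rfloor=\lceil\frac{\beta}{8}\rceil$, which one checks on $\beta\bmod 8$.

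\emph{Item 4.} Take a green edge $\{u,v\}$ with $u$ of level $i$ and $v$ of level $i+1$, and suppose for contradiction that some gadget $g$ is adjacent to both. By Steps~2 and~3, every gadget has exactly three neighbours --- the critical node $v_2$, one node of its lower level, and one node of its upper level --- and no later step changes this. Hence, to be adjacent both to a level-$i$ and to a level-$(i+1)$ node, $g$ must be a phase-$i$ gadget, and then $u$ and $v$ are its two level neighbours; but those are precisely the endpoints of the green edge that was deleted when $g$ was introduced in Step~2, and no step ever re-adds an edge between two level nodes, so $\{u,v\}$ is not an edge of $G$, contradicting that $\{u,v\}$ is a green edge of $G$. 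The only mildly delicate points of the whole proof are the well-definedness check $\lfloor\frac{7\beta}{8}\rfloor<\beta$ and the accompanying ``Steps~3--5 leave the counted edge sets untouched'' bookkeeping in item~3; everything else is plain substitution into the construction.
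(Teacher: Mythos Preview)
Your proposal is correct and matches the paper's treatment: the paper states the lemma as ``a direct consequence of our $5$-step construction'' and provides no proof, so your careful item-by-item unpacking of Steps~1--5 is exactly the intended verification, just made explicit.
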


The lemma below addresses the eccentricity of the node labeled $0$ as
well as the length of some paths originating from it in graphs of the
family we have proposed. As mentioned in Section~\ref{sec:model}, a path is viewed as a sequence of edges (and not a sequence of nodes).

\begin{lemma}
\label{lem:ecc}
Let $\ell\geq 2$, $w\geq 16$ and $r\geq 6$ be integers. Let $G$ be a graph of $\mathcal{F}(\ell,w,r)$. We have the following two properties:
\begin{enumerate}
\item The eccentricity of the node of label $0$ in $G$ is $r$ and for every $2\leq i\leq \ell$, each node of level $i$ in $G$ is at distance at most $2$ from a gadget.
\item For every $1\leq i\leq \ell$, the length of a path from the node of label $0$ to any node of level $i$ in $G$ is at least $i$ if this path does not pass through the critical node of $G$.
\end{enumerate}
\end{lemma}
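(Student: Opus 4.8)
The plan is to handle the two items separately, \emph{Item~2} being the shorter one. Write $v_1$ for the node of label $0$. A path from $v_1$ to a node of level $i$ that avoids the critical node $v_2$ is exactly a path of $G':=G-v_2$ starting at $v_1$, and in fact a path inside the connected component of $v_1$ in $G'$, since the tail of $G$ hangs only from $v_2$ and thus forms a separate component of $G'$. I would attach a potential to the nodes of that component, namely $f(v_1)=0$, $f(v)=j$ for each node $v$ of level $j$, and $f(g)=j+\tfrac12$ for each gadget $g$ created in phase $j$. A direct inspection of the five construction steps shows that the edges of $G'$ meeting the component of $v_1$ are exactly the edges joining $v_1$ to the nodes of $V_1$, the green edges, and the red edges; the first two kinds change $f$ by exactly $1$ and the red ones by $\tfrac12$. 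Hence any path from $v_1$ to a node $v$ of level $i$ that avoids $v_2$ has length at least $|f(v)-f(v_1)|=i$, which is Item~2.

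For \emph{Item~1}, I would first prove the statement ``each node of level $i$, for $2\le i\le\ell$, is at distance at most $2$ from a gadget'', since the eccentricity bound relies on it. Fix such an $i$ and a node $v$ of level $i$. If some edge of $B_{i-1}$ incident to $v$ is subdivided during phase $i-1$, then $v$ is adjacent to a gadget, and we are done. Otherwise the $\lfloor w/4\rfloor$ neighbours of $v$ in $B_{i-1}$ are all still joined to $v$ by green edges. I would then bound the number $m$ of nodes of $V_{i-1}$ \emph{all} of whose $B_{i-1}$-edges remain green: each such node is incident to $\lfloor w/4\rfloor$ green edges of the layer $\mathcal{L}(i-1,i)$, and these $m\lfloor w/4\rfloor$ green edges are pairwise distinct, so $m\lfloor w/4\rfloor\le\lceil\beta/8\rceil$ by Lemma~\ref{lem:conse}(3), giving $m\le\tfrac{w}{8}+1$. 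Since $w\ge 16$, this is strictly smaller than $\lfloor w/4\rfloor$; hence among the $\lfloor w/4\rfloor$ green neighbours of $v$ in $V_{i-1}$ at least one, say $u$, has a subdivided $B_{i-1}$-edge and is therefore adjacent to a gadget, so $v$ reaches that gadget in two steps through $u$.

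Once this is available, I would determine the eccentricity of $v_1$. The pivotal fact is that $d(v_1,v_2)=3$: node $v_1$ is adjacent precisely to the nodes of $V_1$ and $v_2$ precisely to the gadgets together with the endpoint of the tail adjacent to it, and these two neighbourhoods are disjoint, so $d(v_1,v_2)\ge 3$; on the other hand $\lfloor 7\beta/8\rfloor\ge 1$, so a phase-$1$ gadget exists, it is adjacent to some node of $V_1$, and this yields a walk of length $3$ from $v_1$ to $v_2$. Consequently every gadget is within distance $4$ of $v_1$, hence by the statement above every node of every level is within distance $6\le r$ of $v_1$, node $v_2$ is at distance $3$, and the tail node at distance $d$ from $v_2$ is at distance $3+d\le r$ from $v_1$; thus the eccentricity of $v_1$ is at most $r$. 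For the matching lower bound, the tail is a pendant path hanging from $v_2$, so every path from $v_1$ to the tail tip passes through $v_2$, and the tip is therefore at distance exactly $d(v_1,v_2)+(r-3)=r$.

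The step I expect to be the real obstacle is the counting in Item~1: everything hinges on the inequality $m<\lfloor w/4\rfloor$, that is, on controlling how many level-$(i-1)$ nodes can keep all of their $B_{i-1}$-edges green, and this is exactly where the hypothesis $w\ge 16$ and the green-edge count of Lemma~\ref{lem:conse}(3) are needed. The remaining ingredients (the potential argument for Item~2, the value $d(v_1,v_2)=3$, and the pendant-path observation) are routine.
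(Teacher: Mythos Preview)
Your proposal is correct and follows essentially the same route as the paper's proof: the same counting argument on green edges of $\mathcal{L}(i-1,i)$ for the distance-to-gadget claim (the paper phrases it as the contradiction $(\lfloor w/4\rfloor)^2\le\lceil\beta/8\rceil$, you as the bound $m<\lfloor w/4\rfloor$), the same computation $d(v_1,v_2)=3$ for the eccentricity, and the same ``must cross every level'' idea for Item~2, which you package neatly as a potential argument. One cosmetic point: your sentence ``every node of every level is within distance $6\le r$'' tacitly relies on the distance-$2$-to-gadget claim, which is stated only for $i\ge 2$; nodes of level~$1$ are handled separately (they are adjacent to $v_1$), so you should say so explicitly.
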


\begin{proof}
We first show that  each node of level $i \in [2..\ell]$ in $G$ is at distance at most $2$ from a gadget. To that goal, assume by contradiction that a node $p$ of some level $i \in [2..\ell]$ is at distance at least 3 from every gadget. In view of the construction of family $\mathcal{F}$, this implies that $p$ is the neighbor of $\lfloor\frac{w}{4}\rfloor$ nodes of level $i-1$ and each of them is incident to $\lfloor\frac{w}{4}\rfloor$ green edges of $\mathcal{L}_G(i-1,i)$. Moreover, by Lemma~\ref{lem:conse}, we know that the total number of green edges in $\mathcal{L}_G(i-1,i)$ is $\lceil\frac{\beta}{8}\rceil$ with $\beta=w\lfloor\frac{w}{4}\rfloor$. So, we have the following inequalities:

\begin{align*}
  \left(\left\lfloor\frac{w}{4}\right\rfloor\right)^2 & \leq \left\lceil\frac{\beta}{8}\right\rceil\\
  \left(\frac{w}{4}-1\right)^2 & < \left\lceil\frac{\beta}{8}\right\rceil\\
\intertext{Since $\lceil\frac{n}{m}\rceil = \lfloor\frac{n-1}{m}\rfloor+1$, we have}
\left(\frac{w}{4}-1\right)^2 & < \left\lfloor\frac{\beta-1}{8}\right\rfloor+1\\
\intertext{Since $\beta=w\lfloor\frac{w}{4}\rfloor$, we have}
\left(\frac{w}{4}-1\right)^2 & < \left\lfloor\frac{w\lfloor\frac{w}{4}\rfloor-1}{8}\right\rfloor+1\\
\left(\frac{w}{4}-1\right)^2 & < \frac{w^2}{32}+1\\
 -\frac{w^2}{32}+\frac{w}{2}  & > 0\\
\end{align*}
Now, $-\frac{w^2}{32}+\frac{w}{2} > 0$ if and only if $w \in (0,16)$. Since $w \geq 16$, this leads to a contradiction. Hence, each node of level $i \in [2..\ell]$ in $G$ is at distance at most $2$ from a gadget.
  
 We now show that the eccentricity of the node of label 0 is
 $r$. Recall that in the construction, $v_1$ is the node of label
 0. So, we have to show that the eccentricity of $v_1$ is $r$. Recall
 also that there are several types of nodes: namely, the critical node
 $v_2$, gadgets, nodes of the line $L$ and nodes in $V_i$ with $i \in
 [1..\ell]$.
  First, by construction, nodes of $V_1$ are at distance 1 from $v_1$.
  Then, all paths from the critical node $v_2$ to $v_1$ have length at
  least 3 since they pass through a gadget and a node of $V_1$. Now,
  there are nodes of $V_1$ linked to gadgets that are themselves
  connected to $v_2$. Consequently, $v_2$ is at distance exactly 3
  from $v_1$. Furthermore, being neighbor of $v_2$, each gadget is at
  distance at most 4 from $v_1$.
  Still by construction, all paths from nodes of $L$ to $v_1$ contain
  the critical node $v_2$ and are at distance at most $r-3$ from
  $v_2$; in particular the tail tip is exactly at distance $r-3$ from
  $v_2$. Overall, all nodes of $L$ are then at distance at most $r$
  from $v_1$; in particular, the tail tip is at distance exactly $r$
  from $v_1$.
   Finally, consider any node $u$ of some set $V_i$ with $i \in
   [2..\ell]$. We know that $u$ is at distance at most 2 from a gadget
   and so at distance at most 6 from $v_1$.
  As a result, it follows that the eccentricity of $v_1$ is $r$, which concludes the proof of the first property.

  For the second property, consider any path $P$ from $v_1$ to a node $u$ of some
  set $V_i$ with $i \in [1..\ell]$ that does not pass through $v_2$.
  By construction, $P$ should first reach $V_1$ and then
  traverse all levels from $1$ to
  $i-1$ before reaching $u$. Consequently, $P$ has length at least $i$, which means that the second property holds.
\end{proof}


In the impossibility proof of Section~\ref{sec:imp}, we will need to perform three operations on graphs. Let $G$ be a graph of $\mathcal{F}(\ell,w,r)$, with $\ell\geq 2$, $w\geq 4$ and $r\geq 6$. The first operation is ${\tt switch\mbox{-}ports}(G,v,p_1,p_2)$. If $v$ is a node of $G$ and $p_1$ and $p_2$ are distinct non-negative integers smaller than ${\tt deg}(v)$, this operation returns a graph $G'$ corresponding to $G$ but with the following modification: the edge having port $p_1$ (resp. $p_2$) at $v$ in $G$ now has port $p_2$ (resp. $p_1$) at $v$ in $G'$. Otherwise, it simply returns $G$ without any modification.

The second operation is ${\tt switch\mbox{-}edges}(G,v_1,v_2,p_1,p_2)$. Assume that for some $1\leq i\leq \ell-1$, $p_1$ and $p_2$ are the ports of green edges of $\mathcal{L}_G(i,i+1)$ at nodes $v_1$ and $v_2$ respectively, and these two nodes belong to a same level in $G$. Also assume that the node $v'_1$ (resp. $v'_2$) that is reached by taking port $p_1$ (resp. $p_2$) from $v_1$ (resp. $v_2$) is not a neighbor of $v_2$ (resp. $v_1$) and that $v_2$ and $v'_1$ (resp. $v_1$ and $v'_2$) are not the neighbors of a same gadget. In this case, this operation returns a graph $G'$ corresponding to $G$ but with the following modifications: the edge $e_1$ (resp. $e_2$), originally attached to port $p_1$ (resp. $p_2$) at node $v_1$ (resp. $v_2$), is detached from this node and reattached to node $v_2$ (resp. $v_1$) with port $p_2$ (resp. $p_1$) assigned at its newly attached node. Otherwise, it simply returns $G$ without any modification. An illustration of ${\tt switch\mbox{-}edges}$ is shown in Figure~\ref{fig:switchmove}.a.

Notice that the requirements for modifying $G$ using ${\tt switch\mbox{-}edges}(G,v_1,v_2,p_1,p_2)$ are essential. Indeed, the assumption that $v'_1$ (resp. $v'_2$) is not a neighbor of $v_2$ (resp. $v_1$) guarantees that we do not add an already existing edge: consequently, the graph obtained after the modification  is still simple with the same number of green edges and node's degrees are left unchanged. Moreover, the assumption that   $v_2$ and $v'_1$ (resp. $v_1$ and $v'_2$) are not the neighbors of a same gadget ensures that the fourth property of Lemma~\ref{lem:conse} remains true after the modification. Hence, if $G$ is a graph of $\mathcal{F}(\ell,w,r)$, then the graph returned by ${\tt switch\mbox{-}edges}(G,v_1,v_2,p_1,p_2)$ still belongs to $\mathcal{F}(\ell,w,r)$.

  Finally, the third operation is ${\tt
    move\mbox{-}gadget}(G,e,g)$. Assume that for some $1\leq i\leq \ell-1$, $e$ is a green edge of
  $\mathcal{L}_G(i,i+1)$ and $g$ is a gadget incident to a red edge of $\mathcal{L}_G(i,i+1)$. Let $v_1$ and $v_2$ be the two nodes connected to $e$.
  Let $u_i$ (resp. $u_{i+1}$) be the neighbor of level $i$ (resp. $i+1$)
  of the gadget $g$. In the following, we denote by $p_{(u,u')}$ the
  port number at node $u$ of the edge $\{u,u'\}$ in $G$ before applying
  the operation. 
The operation then returns a graph $G'$ obtained after applying on
  $G$ the following modifications in this order:
  \begin{enumerate}
  \item Remove the edges $e=\{v_1,v_2\}$, $\{u_i,g\}$, and $\{u_{i+1},g\}$.

  \item Add the edge $\{u_i,u_{i+1}\}$. Assign to this edge the port number $p_{(u_i,g)}$ (resp.  $p_{(u_{i+1},g)}$) at node $u_i$ (resp.
    $u_{i+1}$).

    \item Add the edge $\{v_1,g\}$. Assign to this edge the port number
      $p_{(v_1,v_2)}$ (resp. $p_{(g,u_i)}$) at node $v_1$ (resp. $g$).
      
    \item Add the edge $\{v_2,g\}$. Assign to this edge
      the port number $p_{(v_2,v_1)}$ (resp. $p_{(g,u_{i+1})}$) at node $v_2$ (resp. $g$).
  \end{enumerate}
  
In other cases, the operation simply returns $G$ without any modification. An illustration of ${\tt move\mbox{-}gadget}$ is shown in Figure~\ref{fig:switchmove}.b. Broadly speaking, when ${\tt move\mbox{-}gadget}$ causes a graph modification, a green edge is ``split'' into two red edges, while two red edges are ``merged'' into a single green edge.

\begin{figure}[!t]
\begin{center}
  \begin{minipage}[t]{0.49\linewidth}
    \centering
	\includegraphics[width=0.9\textwidth]{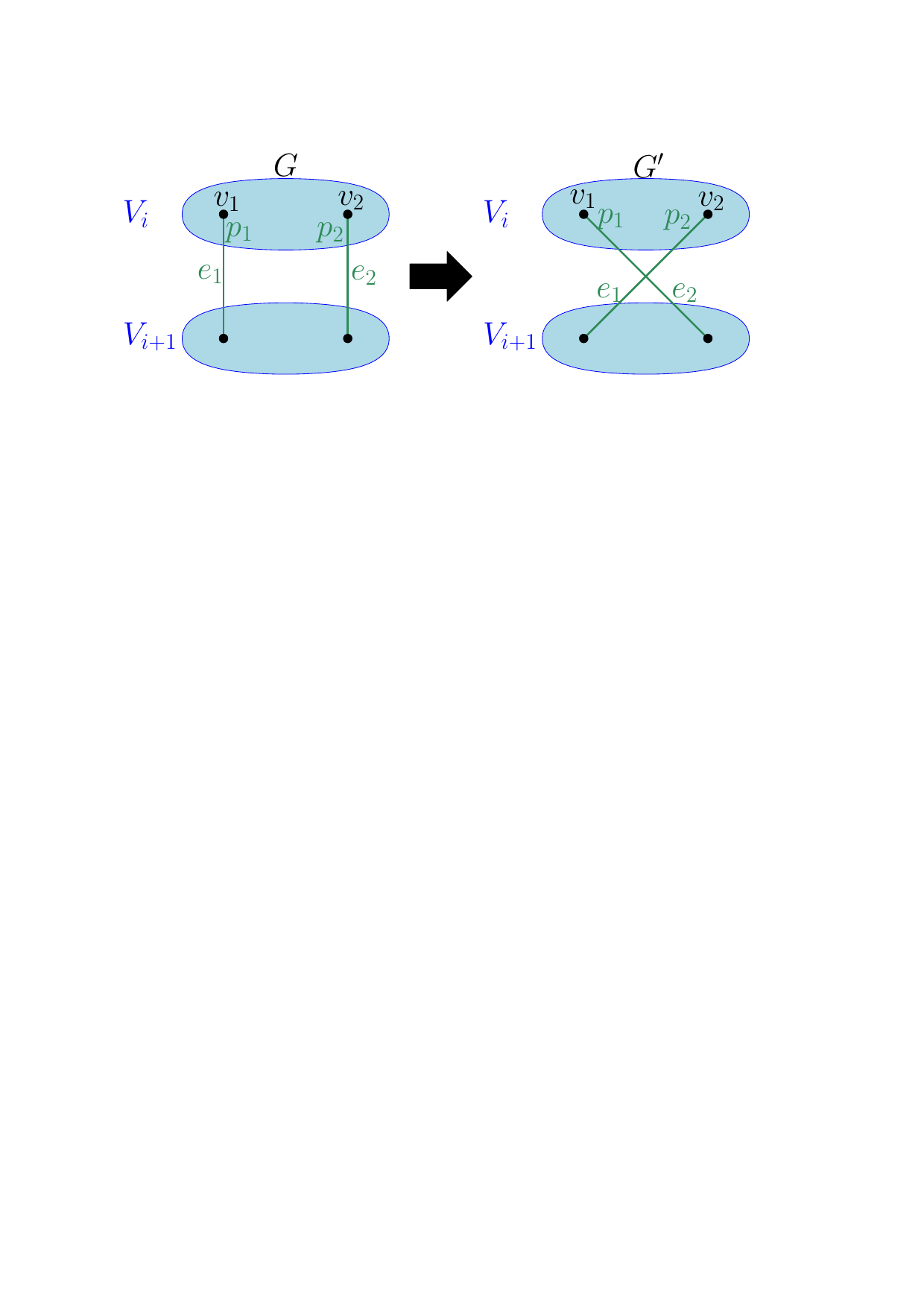}\\
    {\footnotesize ($a$) Operation ${\tt switch\mbox{-}edges}(G,v_1,v_2,p_1,p_2)$}
  \end{minipage}
  \begin{minipage}[t]{0.49\linewidth}
    \centering
	\includegraphics[width=0.9\textwidth]{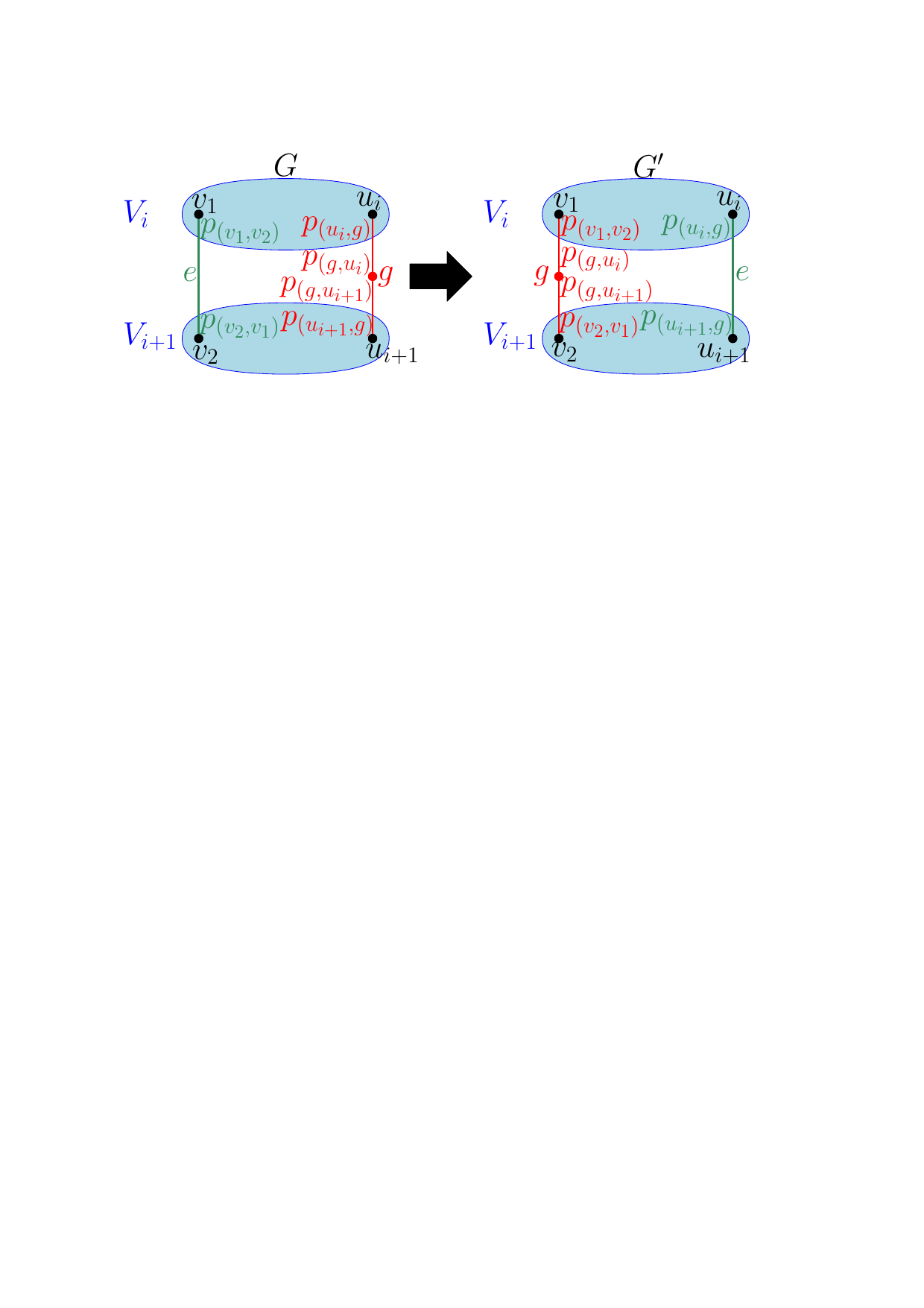}\\
    {\footnotesize ($b$) Operation ${\tt move\mbox{-}gadget}(G,e,g)$}
  \end{minipage}
\end{center}
 \caption{Illustrations of operations ${\tt switch\mbox{-}edges}$ and ${\tt move\mbox{-}gadget}$.}
\label{fig:switchmove}
\end{figure}

In view of the construction of $\mathcal{F}(\ell,w,r)$ and of the definitions of the three above operations, we have the following lemma.

\begin{lemma}\label{lem:modifGraph}
Let $\ell\geq 2$, $w\geq 4$ and $r\geq 6$ be integers. Let $G$ be a graph of $\mathcal{F}(\ell,w,r)$ and let $G'$ be the graph returned after the application of ${\tt switch\mbox{-}ports}(G,*,*,*)$, ${\tt switch\mbox{-}edges}(G,*,*,*,*)$ or ${\tt move\mbox{-}gadget}(G,*,*)$. $G'$ belongs to $\mathcal{F}(\ell,w,r)$.
\end{lemma}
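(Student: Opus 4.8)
The plan is to verify that each of the three operations, when it actually modifies $G$, produces a graph that still satisfies the defining constraints of $\mathcal{F}(\ell,w,r)$ — or more precisely, that it could have been produced by the five-step construction. Since $\mathcal{F}(\ell,w,r)$ is defined as the set of all consistently labeled graphs obtainable from the five steps, the cleanest route is to show that the output $G'$ has: (i) the same node set with the same labels as $G$ (so Step~1's labeling and the labelings in Steps~2--4 are untouched); (ii) for each $1\le i\le \ell-1$, the same number of green edges ($\lceil\beta/8\rceil$) and red edges ($2\lfloor 7\beta/8\rfloor$) in layer $\mathcal{L}_{G'}(i,i+1)$, with green edges joining level-$i$ to level-$(i+1)$ nodes and red edges joining a gadget of that layer to a level-$i$ or level-$(i+1)$ node; (iii) unchanged degrees at every node, so that the port-number assignment of Step~5 remains valid (ports at each node still range over $0..{\tt deg}(u)-1$ with no repeats); and (iv) the structural property from Lemma~\ref{lem:conse}(4) that two endpoints of a green edge never share a gadget neighbor. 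If all of these hold and the resulting graph is simple, then $G'$ is a consistently labeled graph realizable by the construction, hence $G'\in\mathcal{F}(\ell,w,r)$.

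The argument splits into the three cases. For ${\tt switch\mbox{-}ports}$: if the operation modifies $G$, it only swaps two port numbers at one node, so the underlying unlabeled graph, all vertex labels, all degrees, and the green/red classification of edges are literally unchanged; the port assignment at $v$ is still a bijection onto $\{0,\dots,{\tt deg}(v)-1\}$; hence trivially $G'\in\mathcal{F}(\ell,w,r)$. For ${\tt switch\mbox{-}edges}(G,v_1,v_2,p_1,p_2)$: the text has already observed (in the paragraph following the operation's definition) exactly why the preconditions make this work — $v_1,v_2$ lie in the same level, the two edges being moved are green edges of the same layer, the ``not already a neighbor'' condition keeps the graph simple and preserves the green-edge count and all degrees, and the ``not sharing a gadget'' condition preserves Lemma~\ref{lem:conse}(4). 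Labels are untouched since only edge endpoints (not vertices) are reassigned, and the port at the newly attached endpoint is set to the port that was vacated there, so port assignments remain valid bijections. Thus I would essentially cite that discussion and conclude $G'\in\mathcal{F}(\ell,w,r)$.

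The ${\tt move\mbox{-}gadget}(G,e,g)$ case is the one needing the most care, and I expect it to be the main obstacle. Here $e=\{v_1,v_2\}$ is a green edge of $\mathcal{L}_G(i,i+1)$ with $v_1\in V_i$, $v_2\in V_{i+1}$, and $g$ is a gadget of the same layer with level-$i$ neighbor $u_i$ and level-$(i+1)$ neighbor $u_{i+1}$. After the operation, the edge $\{u_i,u_{i+1}\}$ is a new green edge of $\mathcal{L}_{G'}(i,i+1)$, while $\{v_1,g\}$ and $\{v_2,g\}$ are the two new red edges incident to $g$; so in this layer exactly one green edge is replaced by one green edge and two red edges by two red edges — counts are preserved. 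All degrees are preserved: $u_i$ loses edge $\{u_i,g\}$ but gains $\{u_i,u_{i+1}\}$, similarly $u_{i+1}$; $v_1$ loses $\{v_1,v_2\}$ but gains $\{v_1,g\}$, similarly $v_2$; $g$ loses its two edges to $u_i,u_{i+1}$ but gains two to $v_1,v_2$. No vertex labels change. The port-number bookkeeping in steps 2--4 of the operation's definition is designed precisely so that each node retains a valid bijective port assignment: every port number freed at a node is immediately reused at that same node for a new incident edge. The two subtle points to check are simplicity and Lemma~\ref{lem:conse}(4): I need $\{u_i,u_{i+1}\}$, $\{v_1,g\}$, $\{v_2,g\}$ not to be pre-existing edges of $G$ after the removals, and I need that no green edge of $G'$ has its two endpoints sharing a gadget. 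For simplicity: $u_i$ and $u_{i+1}$ are the two neighbors of gadget $g$, so by Lemma~\ref{lem:conse}(4) they are \emph{not} joined by a green edge in $G$, and since they lie in levels $i$ and $i+1$ they are joined by no other kind of edge either, so $\{u_i,u_{i+1}\}$ is new; and $v_1,v_2$ are not neighbors of $g$ in $G$ because the only neighbors of $g$ are $u_i,u_{i+1}$ (and $v_2$, the critical node, aside — here $g$ is an ordinary gadget inside a layer, and its neighbors are exactly $u_i$, $u_{i+1}$, and the node $v_2$ of label $w\ell+(\ell-1)\lfloor 7\beta/8\rfloor+1$ from Step~3, which I should note is a different vertex from the $v_2$ endpoint of $e$ — a potential notation clash to flag and dispatch). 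For Lemma~\ref{lem:conse}(4) in $G'$: the only green edge whose status could change is $\{u_i,u_{i+1}\}$, and I must show $u_i,u_{i+1}$ share no gadget neighbor in $G'$; they shared only $g$ in $G$ (any other common gadget neighbor would already violate (4) via... actually $u_i,u_{i+1}$ weren't joined by a green edge in $G$, so (4) says nothing directly — instead I argue from the construction that each layer's gadgets have distinct neighbor-pairs, or more simply that $g$ is removed as a common neighbor and no new common gadget neighbor is created since the only new edges at $u_i,u_{i+1}$ are $\{u_i,u_{i+1}\}$ itself). Having checked these, $G'$ is a consistently labeled graph that arises from the construction, so $G'\in\mathcal{F}(\ell,w,r)$, completing the proof.
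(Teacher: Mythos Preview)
Your proposal is correct and is precisely the kind of case-by-case verification the paper has in mind: the paper does not give a formal proof of this lemma at all, stating only that it follows ``in view of the construction of $\mathcal{F}(\ell,w,r)$ and of the definitions of the three above operations,'' with the key observations for ${\tt switch\mbox{-}edges}$ already spelled out in the paragraph preceding the lemma (which you rightly cite). Your write-up simply fills in the details the paper leaves to the reader, and the overall structure---check that node labels, degrees, port bijections, simplicity, green/red edge counts, and Lemma~\ref{lem:conse}(4) are all preserved---is exactly right.

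One minor imprecision worth tightening in the ${\tt move\mbox{-}gadget}$ case: you assert that $v_1,v_2$ are not neighbors of $g$ in $G$, but the operation's preconditions do not exclude, say, $v_1=u_i$ (only Lemma~\ref{lem:conse}(4) forbids \emph{both} coincidences simultaneously). This does not break anything---if $v_1=u_i$ then the edge $\{v_1,g\}$ is removed and immediately re-added with a different port at $v_1$, and the graph stays simple---but your simplicity argument should allow for this degenerate overlap. Similarly, your check of Lemma~\ref{lem:conse}(4) for the new green edge $\{u_i,u_{i+1}\}$ can be made crisp by observing that the ``underlying'' bipartite graph $B_i$ (obtained by collapsing each gadget of layer $i$ back to an edge between its two level neighbors) is literally unchanged by ${\tt move\mbox{-}gadget}$; since $B_i$ is simple, no two gadgets of $G$ share the same level-neighbor pair, and hence $u_i,u_{i+1}$ have no common gadget neighbor other than $g$, which is no longer their neighbor in $G'$.
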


\section{Impossibility Result for Distance-constrained Exploration}\label{sec:imp}

This section is dedicated to our proof showing the impossibility of solving distance-constrained exploration with a penalty that is always linear in the order of the graph. In what follows, we first give the intuitive ingredients that are behind our proof, and then the formal demonstration.

\subsection{Overview of the Proof}
\label{sec:int}

\subsubsection{In a Nutshell}
At a very high level, the proof can be viewed as composed of two main stages. The first stage consists in showing  that for every positive real $\alpha$, every integer $r \geq 6$, and every algorithm $A(\alpha, r)$ solving distance-constrained exploration for all instances of $\mathcal{I}(\alpha, r)$, there exists at least one instance $I=(G,0, \alpha)$ such that $G=(V,E) \in \mathcal{F}(\lfloor(1+\alpha)r\rfloor+1,w,r)$ and for which algorithm $A$ incurs a penalty of at least $\Omega(w^2)$ (for some arbitrarily large $w$). This result is achieved through a constructive proof in which an adversary, whose goal is to maximize the penalty, constructs ``on the fly'' the instance $I$ based on the step-by-step choices of algorithm $A$. If $w$ belonged to $\Theta(|V|)$, our work would essentially be completed here, as it would mean that algorithm $A$ would sometimes incur a penalty of $\Omega(|V|^2)$. Unfortunately, using Lemma~\ref{lem:conse}, we can show that $w$ is in $\mathcal{O}(\sqrt{|V|})$, which prevents us from concluding directly. This is where the second stage comes into the picture. By taking a closer look at our first stage, we will see that a penalty of at least $\Omega(w^2)$ is already incurred by algorithm $A$ with instance $I$ before the agent makes the $\mu$th edge traversal that leads it to explore a red edge of $G$ for the first time. In particular, before this traversal, in each layer, only green edges have been explored and no gadget nodes has been visited. However, the number of nodes that are not gadgets in $G$ is in $\Theta(r\times w)$ (cf. Lemma~\ref{lem:conse}). Therefore, if we can transform instance $I$ into an instance $I' = (G'=(V',E'),0, \alpha)$ of $\mathcal{I}(\alpha, r)$ by ``reducing'' the number of gadget nodes from $\Theta(r\times w^2)$ (cf. Lemma~\ref{lem:conse}) to $\Theta(w)$, while keeping the number of non-gadget nodes unchanged and ensuring that the execution of $A$ with instance $I$ is exactly the same as with $I'$ (from the point of view of the agent's memory) during the first $\mu$ edge traversals, we are done. Indeed, in this case, $|V'|\in \Theta(r\times w)$ and we can prove that algorithm $A$ necessarily incurs at least a penalty of order $\left(\frac{|V'|}{r}\right)^2$ with instance $I'$. Hence, it suffices to choose $w$ large enough so that $r^2$ is in $o(|V'|)$, to conclude that algorithm $A$ cannot guarantee a linear penalty in the order of the graph. Actually, it is precisely this transformation that is conducted in the second stage. Notice that proceeding in two stages, instead of a single one (in which the adversary would construct instance $I'$ directly), turns out to be more convenient because instance $I$, despite having too many nodes regarding the incurred penalty, is easier (from a pedagogical point of view) to construct dynamically than the instance $I'$ that has the appropriate number of nodes.

\subsubsection{Under the Hood}
Having described the general outline of our proof, let us now go deeper into the details of our two stages, by starting to give more intuitive explanations about the first one and, in particular, the behavior of the adversary. Here, the goal of the adversary is to construct an instance of $\mathcal{I}(\alpha, r)$ that creates a bad scenario for the algorithm. What could be a bad scenario?
%
It could be one in which the agent, starting from node $0$ in a graph
of $\mathcal{F}(\lfloor(1+\alpha)r\rfloor+1,w,r)$, frequently ``goes
down'' until nodes at the penultimate level
$\lfloor(1+\alpha)r\rfloor$ while (1) there are still unexplored
edges incident to these nodes that lead to the last level and (2) no
red edge has been explored yet.
Indeed, what ensures that nodes at the last level are within distance $r$ from the source node are paths that pass exclusively through the critical node, which can only be accessed by traversing at least one red edge. Red edges and the critical node are essential components to make the eccentricity of the source node ``small''. As long as no red edge is explored, the critical node remains unexplored, and thus the agent lacks information about the shortest paths within the underlying graph. In particular, under these circumstances, if it reached a node of the last level $\lfloor(1+\alpha)r\rfloor+1$, the shortest path it would know to the source node would be of length at least  $\lfloor(1+\alpha)r\rfloor+1$ (cf. Lemma~\ref{lem:ecc}), which would violate the distance constraint. Hence, in such a scenario, each time the agent reaches a node of the penultimate level with possible edges leading to nodes of the last level, the algorithm cannot risk instructing the agent to traverse an unexplored edge. Actually, it has no other choice but to ask the agent to go back to the previous level by traversing an already explored edge, thereby incrementing the incurred penalty. And if this occurs at least $\Omega(w^2)$ times, the adversary successfully completes the first stage.

Hence, the question that naturally comes to mind is how to construct a graph implying such a bad scenario. As briefly mentioned earlier, the adversary can achieve this by dynamically modifying the underlying graph according to the choices made by the algorithm. More precisely, the adversary initially takes any instance $(G_0,0,\alpha)$ of $\mathcal{I}(\alpha, r)$ such that $G_0\in \mathcal{F}(\lfloor(1+\alpha)r\rfloor+1,w,r)$. Then, it emulates algorithm $A(\alpha, r)$ with this instance, but by paying particular attention to what happens before each edge traversal and making some possible modifications before each of them, and only if coherent and appropriate. By ``coherent'', we mean that a modification must not conflict with the prefix of the execution already made, and, in particular, with the agent's memory. For example, the adversary must not remove an already explored edge or change the degree of an already visited node. And by ``appropriate'', we mean that the modifications must be made only to enforce the bad scenario described above. Consequently, in the situation where the algorithm asks the agent to traverse an edge that has been explored before, the adversary does nothing, in order to preserve the coherence of the ongoing execution, while increasing the penalty. It also does nothing in the situation where the algorithm instructs the agent to take a port that makes it go from the source node (resp. a node of some level $i$) to a node of level $1$ (resp. $i+1$) that is incident to at least one unexplored green edge leading to a node of level $2$ (resp. $i+2$ if $i+2\leq \lfloor(1+\alpha)r\rfloor+1$), because this is essential for the bad scenario to unfold. In fact, if the adversary is lucky enough with its initial choice of $G_0$ to always get one of these situations (not necessarily the same) without intervening, during a sufficiently long prefix of the execution, we can inductively prove that the negative scenario occurs with $G_0$. Unfortunately, the adversary may not be so lucky. However, it can circumvent this, and enforce the bad scenario, by bringing modifications to the underlying graph (formally described in Algorithm~\ref{alg:alg1}) using clever combinations of the three operations ${\tt switch\mbox{-}ports}$, ${\tt move\mbox{-}gadget}$, and ${\tt switch\mbox{-}edges}$ defined in Section~\ref{sec:preli}. Each of these operations has a very specific role. Precisely, whenever possible, the adversary uses ${\tt switch\mbox{-}ports}$ to make the agent ``go down'', ${\tt move\mbox{-}gadget}$ to force the agent to traverse a green edge instead of a red one, and ${\tt switch\mbox{-}edges}$ to ensure that a node reached by the agent still has unexplored edges leading downward (which is important to subsequently continue the descent process and eventually enforce the agent to traverse an already explored edge). These operations are performed carefully to always ensure the coherence of the execution, while ensuring that the underlying graph remains in $\mathcal{F}(\lfloor(1+\alpha)r\rfloor+1,w,r)$ (cf. Lemma~\ref{lem:adv1}). One of the main challenges of the first stage consists in showing that the adversary can indeed intervene, with these operations, for a sufficiently long time to ultimately guarantee the desired penalty of $\Omega(w^2)$ before the first visit of a red edge and thus of a gadget. In fact, the general idea of the proof revolves around the following three key points (demonstrated through Lemmas~\ref{lem:adv2} and \ref{lem:penagadget}). First, the adversary can prevent the agent from visiting any red edge at least until half the green edges of some layer $\mathcal{L}(i,i+1)$ have been explored at some time $\lambda$. Second, as long as this event has not occurred, after each descending traversal of a green edge of $\mathcal{L}(i,i+1)$ (i.e., from level $i$ to level $i+1$), the agent will traverse an already explored edge before making another descending traversal of a green edge of $\mathcal{L}(i,i+1)$. This will happen either by the ``agent's own choice'' or because the adversary forces the agent to do so by bringing it to the penultimate level, as previously described. Hence, the incurred penalty is at least of the order of the number of descending traversals of green edges in $\mathcal{L}(i,i+1)$ made by time $\lambda$. Third and finally, as long as no red edge has been explored, the number of times the agent has made a descending traversal of a green edge in $\mathcal{L}(i,i+1)$ is always equal
to the number of times the agent has made an ascending traversal of a
green edge in $\mathcal{L}(i,i+1)$ (i.e., from level $i+1$ to level
$i$), give or take one. From these three key points and the fact that
there are $\Omega(w^2)$ green edges in $\mathcal{L}(i,i+1)$
(cf. Lemma~\ref{lem:conse}), it follows that the number of times the
agent makes a descending traversal of a green edge of this layer by
time $\lambda$ is in $\Omega(w^2)$, and so is the penalty.

To finish, let us now complete the overview with some intuitive explanations about the second stage. As stated before, the goal of this stage is to transform the instance $I=(G,0, \alpha)$ obtained in the first stage into an instance $I'=(G',0, \alpha)$ of $\mathcal{I}(\alpha, r)$ with only $\Theta(w)$ gadgets instead of $\Theta(r\times w^2)$. This reduction on the number of gadgets is done with merge operations that consist in ``merging'' some gadgets of $G$ into one gadget in $G'$ and setting the neighborhood of the new merged gadget as the union of the neighborhoods of the initial gadgets. An important property to check is that the source node of $G'$ (i.e., the node with label $0$) must have eccentricity $r$ for $I'$ to be a valid instance of $\mathcal{I}(\alpha, r)$. This is guaranteed by two facts. Firstly, merging nodes can only reduce distances between nodes and so the eccentricity of the source node in $G'$ is at most the eccentricity $r$ in $G$. Secondly, by the construction of $G$ (cf. Section~\ref{sec:preli}), all gadgets of $G$ are at distance at least $2$ from node $v_1$ (which plays here the role of the source node) and at distance at least $r-2$ from the tail tip $u$ of $L$. Since each path from $v_1$ to $u$ in $G$ contains a gadget, each path in $G'$ from $v_1$ to $u$ contains a merged gadget that is at distance at least $2$ from $v_1$ and at least $r-2$ from $u$. It follows that the eccentricity of the source node in $G'$ is also at least $r$.

Another important property to check, with merge operations, is that the new graph $G'$ must ``look'' the same from the point of view of non-gadget nodes contained in the layers of $G$, while being simple (i.e., without multiple edges). Indeed, this is also needed to ensure that $I'$ is a valid instance of $\mathcal{I}(\alpha, r)$, while guaranteeing that the execution of $A$ is the same in $I$ and $I'$ during the first $\mu$ edge traversals and thus the penalty up until the visit of the first gadget is the same in $I$ and $I'$. Hence, we cannot use trivial merge operations such as simply merging all gadgets of each layer into one gadget, or otherwise we immediately get a multigraph. We need something more subtle. Actually, we can show that the above property can be satisfied if the neighborhoods of the merged gadgets are disjoint, notably because this can avoid creating a multigraph, while keeping the same port numbers at non-gadget nodes for all edges connecting them to gadget nodes.

As a result, one of the key parts of the second stage is finding $\Theta(w)$ sets of gadgets that each have gadgets with disjoint neighborhoods. Intuitively, one can proceed in two steps. First, we consider each layer separately. Let us consider some layer $\mathcal{L}_G(i,i+1)$ of a graph $G$. One associates each gadget in this layer to an edge connecting its two neighbors in the layer. Finding sets of gadgets with disjoint neighborhoods can thus be viewed as finding sets of non-adjacent edges in this new graph. Using the fact that {the degree of nodes of level $i$ and $i+1$ in the graph induced by the edges of $\mathcal{L}_G(i,i+1)$ is in $\Theta(w)$, one can find $\Theta(w)$ such sets using a well-known bound for edge-coloring of graphs of bounded degree (cf. König's Theorem~\cite{konig1916}). Hence, one can merge the $\Theta(w^2)$ gadgets in the layer into $\Theta(w)$ gadgets, each corresponding to a color in the edge-coloring. With this first step, the number of gadgets is down from $\Theta(r\times w^2)$ to $\Theta(r\times w)$. The second step uses the fact that two gadgets in non-consecutive layers do not have common neighbors. Hence, one can partition the layers into odd and even layers such that gadgets in odd (resp. even) do not share neighbors. Then, one can take one merged gadget in each odd (resp. even) layer and merge all these gadgets into one. This permits to merge $\Theta(r)$ gadgets into one and by repeating this operation one can diminish the number of gadgets from $\Theta(r\times w)$ to $\Theta(w)$. This strategy, which is the crux of the second stage, is implemented in the proof of Theorem~\ref{theo:theo1} and constitutes the finishing touch to show our impossibility result.

\subsection{Formal Proof}
\label{sec:forproof}

Before going any further, we need to introduce some extra notations. Let $\alpha$ (resp. $r$) be any positive real (resp. any positive integer). Let $A(\alpha,r)$ be an algorithm solving distance-constrained exploration for all instances of $\mathcal{I}(\alpha,r)$ and let $(G,l_s,\alpha)$ be an instance of $\mathcal{I}(\alpha,r)$. We will denote by ${\tt cost}(A,(G,l_s,\alpha))$ the number of edge traversals prescribed by $A(\alpha,r)$ with instance $(G,l_s,\alpha)$. Moreover, for every integer $0\leq i \leq {\tt cost}(A,(G,l_s,\alpha))$, we will denote by ${\tt M}(A,(G,l_s,\alpha),i)$ (resp. ${\tt E}(A,(G,l_s,\alpha),i)$) the memory of the agent (resp. the set of edges that have been traversed by the agent) after the first $i$ edge traversals prescribed by $A(\alpha,r)$ with instance $(G,l_s,\alpha)$. We will also denote by $\overline{{\tt E}(A,(G,l_s,\alpha),i)}$ the set of edges of $G$ that are not in ${\tt E}(A,(G,l_s,\alpha),i)$. Finally, for every integer $1\leq i \leq {\tt cost}(A,(G,l_s,\alpha))$, we will denote by ${\tt node}(A,(G,l_s,\alpha),i)$ (resp. ${\tt edge}(A,(G,l_s,\alpha),i)$) the node of $G$ that is occupied after the $i$th edge traversal (resp. the edge of $G$ that is explored during the $i$th edge traversal) prescribed by $A(\alpha,r)$ with instance $(G,l_s,\alpha)$. By convention, the source node will be sometimes denoted by ${\tt node}(A,(G,l_s,\alpha),0)$.

Algorithm~\ref{alg:alg0} gives the pseudocode of function {\tt AdversaryBehavior}, which corresponds to the first stage outlined in Section~\ref{sec:int}. It takes as input the parameters $r$ and $\alpha$ of the distance-constrained exploration problem, an integer $w$ and an exploration algorithm $A$ solving the problem for all instances of $\mathcal{I}(\alpha,r)$. Under certain conditions (see Lemma~\ref{lem:penagadget}), the function returns a graph $G$ with the following two properties. The first property is that $(G,0,\alpha)$ is an instance of $\mathcal{I}(\alpha,r)$ and $G\in\mathcal{F}(\lfloor(1+\alpha)r\rfloor+1,w,r)$. The second property is that the penalty incurred by $A$ on instance $(G,0,\alpha)$, by the time of the first visit of a gadget, is in $\Omega(w^2)$ (i.e., the target penalty of the first stage). To achieve this, function {\tt AdversaryBehavior} relies on function {\tt GraphModification}, whose pseudocode is provided in Algorithm~\ref{alg:alg1}. Conceptually, {\tt AdversaryBehavior} proceeds in steps $0,1,2,3,$ etc. In step $0$ (cf. line~\ref{ligne:alg0:1} of Algorithm~\ref{alg:alg0}), we choose an arbitrary graph $G_0$ of $\mathcal{F}(\lfloor(1+\alpha)r\rfloor+1,w,r)$. In step $x\geq1$ (cf. lines~\ref{ligne:alg0:2} to~\ref{ligne:alg0:last} of Algorithm~\ref{alg:alg0}), we begin with a graph $G_{x-1}$ of $\mathcal{F}(\lfloor(1+\alpha)r\rfloor+1,w,r)$ for which the first $(x-1)$ edge traversals prescribed by $A$ from the node labeled $0$ are ``conformed'' with the bad scenario of the first stage. The goal is to ensure that this conformity extends to the first $x$ edge traversals as well. This is accomplished by deriving a graph $G_{x}$, which also belongs to $\mathcal{F}(\lfloor(1+\alpha)r\rfloor+1,w,r)$, from graph $G_{x-1}$. The derivation is handled by function {\tt GraphModification} that uses as inputs the graph $G_{x-1}$, the index $x-1$, the algorithm $A$ and the parameter $\alpha$. If $x<{\tt cost}(A,(G_x,0,\alpha))$, then there exists a $(x+1)$th edge traversal in the execution of $A$ from the node labeled $0$ in $G_x$, and thus we can continue the process with step $x+1$. Otherwise, the algorithm stops and returns graph $G_x$.

When we mentioned above that in step $x\geq1$, we aim at extending the bad scenario till the $x$th edge traversal included, we mean that {\tt GraphModification($G_{x-1},\alpha,A,x-1$)} returns a graph $G_{x}$ such that $(1)$ ${\tt M}(A,(G_{x-1},0,\alpha),x-1)={\tt M}(A,(G_{x},0,\alpha),x-1)$ (i.e., from the agent's perspective, there is no difference between the first $x-1$ edge traversals in $G_{x-1}$ and those in $G_{x}$) and $(2)$, if possible, the $x$th edge traversal in $G_{x}$ brings the agent closer to the penultimate level (in order to eventually force incrementing the penalty). The first point is particularly important for at least preserving the penalty already incurred, while the second is important for increasing it. However, while the first point will be always guaranteed by {\tt GraphModification} (cf. Lemma~\ref{lem:adv1}), the second is not always useful or even achievable. This may happen when the agent is at a node of the last level or has previously explored a red edge (in which cases the goal of the bad scenario should have been already met, as explained in Section~\ref{sec:int}) or when the agent is located at the source node (in which case the next move, if any, is in line with the bad scenario as it can only make the agent go down to a node of level $1$). This may also happen when the agent decides to recross an already explored edge as modifying the traversal would violate the agent's memory (but, this is not an issue, as it increases the penalty as intended). In all these situations (cf. the test of line~\ref{ligne:test} of Algorithm~\ref{alg:alg1}), no modifications will be made and $G_x=G_{x-1}$. (In the rest of the explanations, all line references are to Algorithm~\ref{alg:alg1}, which we will omit to mention for the sake of brevity).


Assume that the test at line~\ref{ligne:test} evaluates to true during the execution of {\tt GraphModification($G_{x-1},\alpha,A,x-1$)} and denote by $G'_{x-1}$ the graph obtained right after the (possible) modifications made to $G_{x-1}$ by lines~\ref{ligne:test:f}~to~\ref{ligne:modif:e:2}. When evaluating the test at line~\ref{ligne:test}, the agent is thus located at some node $u$ of some level $i<\ell=\lfloor(1+\alpha)r\rfloor+1$ after the first $x-1$ edge traversals prescribed by $A$ in $G_{x-1}$. At this point, we may face several possible situations, which will be all addressed in the formal proof. But here, one is perhaps more important to pinpoint than the others. It is when the following two conditions are satisfied right before the $x$th edge traversal in $G_{x-1}$. The first condition, call it {$c1(u,i)$}, is that node $u$ is incident to at least one unexplored edge of $\mathcal{L}(i,i+1)$ and the second condition, call it {$c2(i)$}, is that there is at least one unexplored green edge in $\mathcal{L}(i,i+1)$. This situation is interesting because lines~\ref{ligne:test:f}~to~\ref{ligne:modif:e:2} guarantee that the $x$th edge traversal in $G'_{x-1}$ will correspond to a traversal of a green edge leading the agent from node $u$ (of level $i$) to a node $u'$ of level $i+1$, as intended in the bad scenario. Note that, if $i+1=\ell=\lfloor(1+\alpha)r\rfloor+1$ (i.e., the index of the last level in the graph) then algorithm $A$ violates the distance constraint (in view of the fact that no red edge has been yet visited), which is a contradiction. Hence, we have $i<\ell-1$ and, according to the bad scenario we aim at forcing the agent into, we want the $(x+1)$th move from node $u'$ (of level $i+1$) to lead the agent to level $i+2$ or to be a traversal of a previously explored edge. In the situation where $c1(u',i+1)$ and $c2(i+1)$ are also satisfied right before the $(x+1)$th edge traversal in $G'_{x-1}$, the test at line~\ref{ligne:test:u} evaluates to false and $G'_{x-1}=G_x$. This is a good situation for the bad scenario. Indeed, if the $(x+1)$th move in $G_x$ is not a traversal of a previously explored edge, then, in the next call to {\tt GraphModification}, the test at line~\ref{ligne:test} evaluates to true and lines~\ref{ligne:test:f}~to~\ref{ligne:modif:e:2} will ensure that the $(x+1)$th edge traversal in $G_x$ will correspond to a traversal of a green edge leading the agent to a node $u''$ of level $i+2$.

But what if we are not in such a good situation, where both $c1(u',i+1)$ and $c2(i+1)$ are satisfied right before the $(x+1)$th edge traversal in $G'_{x-1}$? In fact, if $c2(i+1)$ is not satisfied, then there is at least one layer for which at least half the green edges has been explored, which means that the target penalty has already been paid (as explained in Section~\ref{sec:int} and shown in the proof of Lemma~\ref{lem:penagadget}). Hence, the tricky case is when $c2(i+1)$ holds but $c1(u',i+1)$ does not, right before the $(x+1)$th edge traversal in $G'_{x-1}$. This is exactly where lines~\ref{llllll} to~\ref{ligne:modif:3} come into the picture (the test at line~\ref{ligne:test:u} then evaluates to true during the execution of {\tt GraphModification($G_{x-1},\alpha,A,x-1$)}). Using function {\tt switch\mbox{-}edges}, these lines attempt to reroute the $x$th edge traversal in $G'_{x-1}$ towards a node $v$, different from $u'$ but also of level $i+1$, such that $c1(v,i+1)$ holds right before the $(x+1)$th traversal in $G_x$. The feasibility of such rerouting will be guaranteed as long as some conditions are met, particularly that there is no layer for which at least half the green edges has been explored (cf. Lemma~\ref{lem:adv2}).

The switch performed at line~\ref{ligne:modif:3} involves two green edges. One of these edges is naturally $\{u,u'\}$. The other is a green edge (from the same layer as $\{u,u'\}$) resulting from the merge of two red edges achieved by {\tt move\mbox{-}gadget}, which is performed, right before the switch, at line~\ref{ligne:modif:4}. Proceeding in this way, rather than simply selecting an existing green edge, turns out to be essential to maximize the number of possible reroutings ensuring condition c1, and thus to get the desired penalty of $\Omega(w^2)$ before the first exploration of a red edge.

\begin{algorithm}[H]
\label{alg:alg0}
\caption{Function {\tt AdversaryBehavior($r,\alpha,A,w$)}}
\SetNoFillComment
\SetKwBlock{Repeat}{repeat}{}
\small
\tcc{Step 0}
$G :=$ any graph of $\mathcal{F}(\lfloor(1+\alpha)r\rfloor+1,w,r)$; $x := 1$\;\label{ligne:alg0:1}
\Repeat{
\tcc{Step x}
$G := $ {\tt GraphModification($G,\alpha,A,x-1$)}\;\label{ligne:alg0:2}

\eIf{$x < {\tt cost}(A,(G,0,\alpha))$\label{ligne:alg0:22}}
{
$x := x+1$\;\label{ligne:alg0:222}
}
{
\Return $G$\;\label{ligne:alg0:last}
}
}
\end{algorithm}

\begin{algorithm}[H]
\label{alg:alg1}
\caption{Function {\tt GraphModification($G,\alpha,A,t$)}}
\small
$G' := G$; $r :=$ the eccentricity of the node having label $0$ in $G'$\;\label{ligne:1}
$u := {\tt node}(A,(G',0,\alpha),t)$; $e := {\tt edge}(A,(G',0,\alpha),t+1)$\;\label{ligne:2}
$p_u := $ the port of $e$ at node $u$; $\ell :=$ the number of levels in $G'$\;
\If{there is no red edge in ${\tt E}(A,(G',0,\alpha),t)$ \AND $e\notin{\tt E}(A,(G',0,\alpha),t)$ \AND $u$ is a node of some level $1\leq i< \ell$ in $G'$\label{ligne:test}}
{
\If{$e\notin\mathcal{L}_{G'}(i,i+1)$ \AND $u$ is an endpoint of an edge $f\in\mathcal{L}_{G'}(i,i+1)\cap\overline{{\tt E}(A,(G',0,\alpha),t)}$\label{ligne:test:f}}
{
$p'_u :=$ the port of edge $f$ at node $u$; $G':= {\tt switch\mbox{-}ports}(G',u,p_u,p'_u)$\;\label{ligne:modif:1}
 $e := {\tt edge}(A,(G',0,\alpha),t+1)$\;}\label{ligne:modif:e:1}\label{ligne:7}

\If{$e$ is a red edge of some layer $\mathcal{L}$ \AND there is a green edge $k\in\mathcal{L}\cap\overline{{\tt E}(A,(G',0,\alpha),t)}$\label{ligne:test:k}}
{
$u' := {\tt node}(A,(G',0,\alpha),t+1)$\;\label{ligne:modif:2:deb}
$G' := {\tt move\mbox{-}gadget}(G',k,u')$; $e := {\tt edge}(A,(G',0,\alpha),t+1)$\;\label{ligne:modif:2}\label{ligne:modif:e:2}
}

$u' := {\tt node}(A,(G',0,\alpha),t+1)$\;\label{ligne:modif:2:fin} \label{ligne:11}
 
\If{$i<\ell-1$ \AND $u'$ is a node of level $i+1$ in $G'$ \AND $u'$ is not an endpoint of an edge $\in\mathcal{L}_{G'}(i+1,i+2)\cap\overline{{\tt E}(A,(G',0,\alpha),t)}$\label{ligne:test:u}}{

$D :=$ the set made of the two endpoints of $e$, plus all gadgets in $G'$ that are incident to a red edge of $\mathcal{L}_{G'}(i,i+1)$ and that are adjacent to at least one of the endpoints of $e$\;\label{llllll}
                
$N_i :=$ the set of all nodes of level $i$ in $G'$ that are not adjacent to a node of $D$\;

$N_{i+1} :=$ the set of all nodes of level $i+1$ in $G'$ that are not adjacent to a node of $D$ and that are incident to at least one edge of $\mathcal{L}_{G'}(i+1,i+2)\cap\overline{{\tt E}(A,(G',0,\alpha),t)}$\;

\If{there exist a green edge $h\in(\mathcal{L}_{G'}(i,i+1)\cap\overline{{\tt E}(A,(G',0,\alpha),t)})\setminus \{e\}$ and a gadget $g$ adjacent to a node of $N_i$ and to a node of $N_{i+1}$\label{ligne:test:h}}
{
$G' := {\tt move\mbox{-}gadget}(G',h,g)$\;\label{ligne:modif:4} 
$k :=$ the green edge of $G'$ created by the previous call to ${\tt move\mbox{-}gadget}$\;\label{ligne:yoyo} \label{ligne:18}
$v :=$ the node of level $i+1$ that is incident to edge $k$\;
$p_{v} :=$ the port of edge $k$ at node $v$\;
$p_{u'} :=$ the port at node $u'$ of $e$\;
$G' := {\tt switch\mbox{-}edges}(G',u',v,p_{u'},p_v)$\;\label{ligne:modif:3}
  }
	}
}
\Return $G'$\;\label{ligne:fin}\label{ligne:23}
\end{algorithm}

For the next two lemmas, we first observe that for every positive real
$\alpha$, every integer $r \geq 6$, every integer $w\geq 4$, and every
$G \in \mathcal{F}(\lfloor(1+\alpha)r\rfloor+1,w,r)$, we have
$(G,0,\alpha) \in \mathcal{I}(\alpha,r)$, by Lemma~\ref{lem:ecc}. 

\begin{lemma}
  \label{lem:adv1}
  Let $\alpha$ be any positive real. Let $w \geq 16$ and $r\geq 6$ be
  two integers.
  Let $A(\alpha,r)$ be an algorithm solving the
  distance-constrained exploration problem for every instance of
  $\mathcal{I}(\alpha,r)$.
  For every $G\in\mathcal{F}(\lfloor(1+\alpha)r\rfloor+1,w,r)$ and every integer $t$ such that $0\leq t < {\tt
    cost}(A,(G,0,\alpha))$, we let $G^{new} = $ {\tt GraphModification}($G,\alpha,A,t$) and we have the following
  three properties:
\begin{enumerate}

\item $G^{new}\in\mathcal{F}(\lfloor(1+\alpha)r\rfloor+1,w,r)$ and
  $(G^{new},0,\alpha)$ is an instance of $\mathcal{I}(\alpha,r)$;
  
\item During the execution of {\tt GraphModification}($G,\alpha,A,t$), we have 
  ${\tt M}(A,(G,0,\alpha),t)={\tt M}(A,(G',0,\alpha),t)$ after each
  modification of the variable $G'$, consequently ${\tt M}(A,(G,0,\alpha),t)={\tt M}(A,(G^{new},0,\alpha),t)$;

\item If ${\tt node}(A,(G,0,\alpha),t)$ is a node of some level
  $1\leq i\leq\lfloor(1+\alpha)r\rfloor+1$ in $G$, there is no red edge in ${\tt E}(A,(G,0,\alpha),t)$, and, for all $1\leq j \leq
  \lfloor(1+\alpha)r\rfloor$, there is a green edge in
  $\mathcal{L}_{G}(j,j+1)\cap\overline{{\tt E}(A,(G,0,\alpha),t)}$,
  then ${\tt node}(A,(G^{new},0,\alpha),t+1)$ is not a gadget.
\end{enumerate}
\end{lemma}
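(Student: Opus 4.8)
The plan is to establish the three properties in turn, with Property~2 carrying essentially all of the technical load and feeding into Property~3, while Property~1 is immediate. For \emph{Property~1}: inside {\tt GraphModification} the variable $G'$ is initialized to $G\in\mathcal{F}(\lfloor(1+\alpha)r\rfloor+1,w,r)$ --- a well-defined family, since $\lfloor(1+\alpha)r\rfloor+1\geq 2$, $w\geq 16\geq 4$ and $r\geq 6$ --- and thereafter is only overwritten by a graph returned by one of ${\tt switch\mbox{-}ports}$, ${\tt move\mbox{-}gadget}$ or ${\tt switch\mbox{-}edges}$. By Lemma~\ref{lem:modifGraph} each such call returns a graph of the same family, so a trivial induction over the (at most four) overwrites gives $G^{new}\in\mathcal{F}(\lfloor(1+\alpha)r\rfloor+1,w,r)$; and then $(G^{new},0,\alpha)\in\mathcal{I}(\alpha,r)$ is exactly the observation recorded just before the lemma (by Lemma~\ref{lem:ecc}, the node of label $0$ has eccentricity $r$).

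For \emph{Property~2}, I would first isolate, and prove by induction on the number of edge traversals, the following \emph{locality} statement: if $H$ and $H'$ are consistently labeled graphs with source label $0$ sharing the same node set, the same labels and the same degree at every node, and differing only in the endpoints and/or port numbers of edges lying outside ${\tt E}(A,(H,0,\alpha),t)$, then the agent performs exactly the same first $t$ traversals in both, whence ${\tt M}(A,(H,0,\alpha),t)={\tt M}(A,(H',0,\alpha),t)$ and ${\tt E}(A,(H,0,\alpha),t)={\tt E}(A,(H',0,\alpha),t)$. The inductive step is immediate from determinism: equal memories up to step $j<t$ force $A$ to select the same port at the same current node, and the $(j+1)$-th traversed edge, belonging to ${\tt E}(A,(H,0,\alpha),t)$, is untouched by the perturbation, so it has the same endpoints and ports and leads to a node of the same label and degree. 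It then remains to check, for each of the (at most four) modifications performed inside {\tt GraphModification}, that this statement applies. No node ever changes degree or label, as this is built into the definitions of and remarks on ${\tt switch\mbox{-}ports}$, ${\tt move\mbox{-}gadget}$ and ${\tt switch\mbox{-}edges}$. Every edge that gets detached, reattached, removed, or has a port number reassigned is unexplored: $e$ is unexplored by the test at line~\ref{ligne:test}; $f$, the green edge $k$ of line~\ref{ligne:test:k} and $h$ are drawn from $\overline{{\tt E}(A,(G',0,\alpha),t)}$ by the tests at lines~\ref{ligne:test:f}, \ref{ligne:test:k} and~\ref{ligne:test:h}; the two red edges incident to the gadget that ${\tt move\mbox{-}gadget}$ deletes --- and every red edge whose port is reused --- are unexplored because no red edge lies in ${\tt E}(A,(G',0,\alpha),t)$ (again line~\ref{ligne:test}); and the edges ${\tt move\mbox{-}gadget}$ creates (including the green edge used by ${\tt switch\mbox{-}edges}$ at line~\ref{ligne:modif:3}) are brand new. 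Applying the locality statement to each modification in order --- using that the earlier modifications have already preserved ${\tt M}(A,(G',0,\alpha),t)$, so that ${\tt E}(A,(G',0,\alpha),t)$ equals the fixed set ${\tt E}(A,(G,0,\alpha),t)$ throughout and all these ``unexplored'' conditions stay meaningful --- yields the claimed chain of equalities; and since $t<{\tt cost}(A,(G,0,\alpha))$ the algorithm has not halted after $t$ traversals in $G$, hence (by the equality of memories) not in $G^{new}$ either, so ${\tt node}(A,(G^{new},0,\alpha),t+1)$ is well defined.

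For \emph{Property~3}, write $u={\tt node}(A,(G,0,\alpha),t)$, $\ell=\lfloor(1+\alpha)r\rfloor+1$ and $e={\tt edge}(A,(G,0,\alpha),t+1)$. Since no red edge has been explored, the agent has visited no gadget, and --- because the critical node can be reached only via a gadget, hence only via a red edge --- not the critical node either; so every explored edge, and thus every path known to the agent, avoids the critical node, and by Lemma~\ref{lem:ecc}(2) any such path from the source to a node of level $\ell$ has length at least $\ell>(1+\alpha)r$. If $u$ were at level $\ell$, the agent would therefore know no path of length at most $(1+\alpha)r$ back to the source, contradicting the correctness of $A$; hence $i<\ell$. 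If $e\in{\tt E}(A,(G,0,\alpha),t)$ the test at line~\ref{ligne:test} fails, $G^{new}=G$, and $e$ --- an already explored edge incident to $u$ in a configuration with no explored red edge --- is a green edge or the edge joining $u$ to the node of label $0$, so ${\tt node}(A,(G^{new},0,\alpha),t+1)$ is a level node or the source, not a gadget. Otherwise $e$ is unexplored, the test at line~\ref{ligne:test} passes, and the claim follows from a short case analysis of the branches executed: the $(t+1)$-th traversal can reach a gadget only along a red edge (necessarily of $\mathcal{L}(i-1,i)$ or $\mathcal{L}(i,i+1)$), but whenever the port $A$ selects at $u$ would take it along such a red edge, lines~\ref{ligne:modif:1}--\ref{ligne:modif:e:2} reassign that very port --- via ${\tt switch\mbox{-}ports}$ and/or ${\tt move\mbox{-}gadget}$ --- to a green edge of the relevant layer obtained by splitting one of its unexplored green edges, which exists by the hypothesis that every layer of $G$ carries an unexplored green edge (a property unaffected by the modifications made up to that point). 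After these lines the $(t+1)$-th traversal thus ends at a node of level $i-1$ or $i+1$, and the possible ${\tt switch\mbox{-}edges}$ at line~\ref{ligne:modif:3}, whether or not its precondition fires, only reroutes that traversal from one level-$(i+1)$ node to another. In every case ${\tt node}(A,(G^{new},0,\alpha),t+1)$ is a level node, hence not a gadget.

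The main obstacle is Property~2: the delicate point is certifying, modification by modification, that every edge and port touched lies outside the explored set and that no degree ever changes, while the auxiliary quantities ($e$, $u'$, the layers, $D$, $N_i$, $N_{i+1}$) are all recomputed against the evolving $G'$ --- this is precisely where the global invariant ``no red edge explored'' and the stagewise preservation of ${\tt M}(A,(G',0,\alpha),t)$ have to be threaded carefully. Once Property~2 and Lemmas~\ref{lem:modifGraph} and~\ref{lem:ecc} are available, Properties~1 and~3 are short.
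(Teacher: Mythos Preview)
Your proposal is correct and follows essentially the same route as the paper: Property~1 via Lemma~\ref{lem:modifGraph} and Lemma~\ref{lem:ecc}, Property~2 by checking modification-by-modification that only unexplored edges are touched while degrees and labels are preserved, and Property~3 by ruling out $i=\ell$ via the distance constraint and then tracing the branches of lines~\ref{ligne:test}--\ref{ligne:modif:3}. Two small remarks: your phrase ``a green edge \ldots\ obtained by splitting one of its unexplored green edges'' inverts the mechanics of {\tt move\mbox{-}gadget} (the port ends up on the green edge obtained by \emph{merging} the two old red edges; it is the auxiliary green edge $k$ that gets split), though your conclusion is unaffected; and your Property~3 analysis, by explicitly allowing the red edge to lie in $\mathcal{L}(i-1,i)$, is in fact slightly more careful than the paper's, which tacitly asserts $e\in\mathcal{L}(i,i+1)$ after lines~\ref{ligne:test:f}--\ref{ligne:modif:e:1}.
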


\begin{proof}
Throughout this proof, each time we refer to a specific line, it is one of Algorithm~\ref{alg:alg1}, and thus we
omit to mention it, in order to facilitate the reading.

  Consider the first property. By hypothesis,
  $G\in\mathcal{F}(\lfloor(1+\alpha)r\rfloor+1,w,r)$. In {\tt
    GraphModification}, $G'$ is initialized to $G$ (see
  line~\ref{ligne:1}) and during the execution of {\tt
    GraphModification}, $G'$ may only be modified using functions {\tt
    switch\mbox{-}ports} (line~\ref{ligne:modif:1}), {\tt
    move\mbox{-}gadget} (lines~\ref{ligne:modif:2}
  and~\ref{ligne:modif:4}), and {\tt switch\mbox{-}edges}
  (line~\ref{ligne:modif:3}). So, by Lemma~\ref{lem:modifGraph}, at the beginning and after each modification, the graph $G'$ always belongs to $\mathcal{F}(\lfloor(1+\alpha)r\rfloor+1,w,r)$ and so does the graph $G^{new}$ returned by {\tt GraphModification}($G,\alpha,A,t$). Moreover, $(G^{new},0,\alpha)$ is an instance
  of $\mathcal{I}(\alpha,r)$, by Lemma~\ref{lem:ecc}. This concludes
  the proof of the first property.

  Consider the second property and let $G'(j)$ be the value of $G'$, in the call {\tt
    GraphModification}($G,\alpha,A,t$),
  right before executing line $j$.
  $G'$ is initialized at line~\ref{ligne:1} and may be modified
  at line~\ref{ligne:modif:1} using functions {\tt
    switch\mbox{-}ports}, at
  lines~\ref{ligne:modif:2},~\ref{ligne:modif:4} using {\tt
    move\mbox{-}gadget}, and/or at line~\ref{ligne:modif:3} using {\tt
    switch\mbox{-}edges}. 
  So, we have to show that for every $j \in\{\ref{ligne:2},
  \ref{ligne:7}, \ref{ligne:11}, \ref{ligne:18},\ref{ligne:23}\}$, if
  line $j-1$ is executed, then ${\tt M}(A,(G,0,\alpha),t)={\tt
    M}(A,(G'(j),0,\alpha),t)$.
First, since $G'$ is initialized to $G$ at line \ref{ligne:1}, ${\tt M}(A,(G,0,\alpha),t)={\tt
    M}(A,(G'(2),0,\alpha),t)$ trivially holds.
The other cases may only happen if the test of line \ref{ligne:test}
is true. So from now on, assume this is the case.  After this test,
$G'$ may be modified using functions {\tt switch\mbox{-}ports}
(line~\ref{ligne:modif:1}), {\tt move\mbox{-}gadget}
(lines~\ref{ligne:modif:2},~\ref{ligne:modif:4}), and/or {\tt
  switch\mbox{-}edges} (line~\ref{ligne:modif:3}). Note that these
functions do not modify the number of nodes, the number of edges, the
nodes' degrees, or the nodes' labeling. Hence, to establish that the
property holds in all remaining cases, it is sufficient to show that
each call to these functions does not alter edges in ${\tt
  E}(A,(G,0,\alpha),t)$. To that goal and by their definitions, we just
need to show that these calls do not involve already explored edges.
\begin{itemize}
  \item Consider first the call to ${\tt
    switch\mbox{-}ports}(G',u,p_u,p'_u)$ at
    line~\ref{ligne:modif:1}. Here, $p_u$ (resp. $p'_u$) is a port of
    an edge $e$ (resp. $f$) and from the test at line \ref{ligne:test}
    (resp. line \ref{ligne:test:f}) we know that $e$ (resp. $f$) is an
    edge of $\overline{{\tt E}(A,(G,0,\alpha),t)}$. Thus, the call to
    ${\tt switch\mbox{-}ports}(G',u,p_u,p'_u)$ does not alter edges in
    ${\tt E}(A,(G,0,\alpha),t)$. Consequently, ${\tt
      E}(A,(G'(\ref{ligne:7}),0,\alpha),t) = {\tt
      E}(A,(G,0,\alpha),t)$, which implies ${\tt
      M}(A,(G,0,\alpha),t)={\tt M}(A,(G'(\ref{ligne:7}),0,\alpha),t)$.

\item Consider now the call to ${\tt move\mbox{-}gadget}(G',k,u')$ at
  line~\ref{ligne:modif:2}. This call involves two red edges (incident
  to a gadget $u'$) and a green edge $k$.  Now, from the previous
  case, we can deduce that ${\tt E}(A,(G'(\ref{ligne:test:k}),0,\alpha),t) = {\tt E}(A,(G),0,\alpha),t)$. So, the two
  involved red edges do not belong to ${\tt E}(A,(G'(\ref{ligne:test:k}),0,\alpha),t) = {\tt E}(A,(G),0,\alpha),t)$, by
  hypothesis and lines \ref{ligne:test:k} and \ref{ligne:modif:2:deb}, and the green edge $k$ belongs to $\overline{{\tt
      E}(A,(G'(\ref{ligne:test:k}),0,\alpha),t)}$ (see line
  \ref{ligne:test:k}) and so not to ${\tt
    E}(A,(G,0,\alpha),t)$. Hence, the call to ${\tt
    move\mbox{-}gadget}(G',k,u')$ does not alter edges in ${\tt
    E}(A,(G,0,\alpha),t)$. Consequently, ${\tt
    E}(A,(G'(\ref{ligne:11}),0,\alpha),t) = {\tt
    E}(A,(G,0,\alpha),t)$, which implies ${\tt
    M}(A,(G,0,\alpha),t)={\tt M}(A,(G'(\ref{ligne:11}),0,\alpha),t)$.

\item Based on line \ref{ligne:test:h}, the call
  to ${\tt move\mbox{-}gadget}(G',h,g)$ at line \ref{ligne:modif:4}
  can be handled similarly to the previous case and so we have ${\tt
    E}(A,(G'(\ref{ligne:18}),0,\alpha),t) = {\tt
    E}(A,(G,0,\alpha),t)$, which implies ${\tt
    M}(A,(G,0,\alpha),t)={\tt M}(A,(G'(\ref{ligne:18}),0,\alpha),t)$.

\item Finally, let us focus on the call to ${\tt
  switch\mbox{-}edges}(G',u',v,p_{u'},p_v)$ at line
  \ref{ligne:modif:3}. Here, $p_v$ (resp. $p_{u'}$) is the port of an
  edge $k=\{v,v'\}$ (resp. edge $e=\{u,u'\}$) in $G'(\ref{ligne:modif:3})$. Edges $e$ and $k$ are the only edges of $G'(\ref{ligne:modif:3})$ impacted by this call to ${\tt
  switch\mbox{-}edges}$. From the previous
  cases, we know that we have ${\tt E}(A,(G,0,\alpha),t) = {\tt
    E}(A,(G'(\ref{ligne:modif:4}),0,\alpha),t)$. Moreover, $v$ and $v'$
  are linked to the gadget $g$ in $G'(\ref{ligne:modif:4})$. Thus,
  $G'(\ref{ligne:modif:4})$ does not contain $k$, by Lemma
  \ref{lem:conse}. Consequently, $k \notin {\tt
    E}(A,(G'(\ref{ligne:18}),0,\alpha),t)$, which in turn implies
  that $k \notin {\tt E}(A,(G,0,\alpha),t)$.
  
   Let us focus now on edge $e$. We know that to execute
   line \ref{ligne:modif:3}, the test at line \ref{ligne:test} shall
   be true, so initially $e \notin {\tt E}(A,(G,0,\alpha),t)$. Then,
   if $e$ is modified at line~\ref{ligne:modif:e:1}, $e$ is set to an
   edge $f \notin {\tt E}(A,(G,0,\alpha),t)$; see the first case. If
   $e$ is modified at line~\ref{ligne:modif:e:2}, then using a reasoning similar to the one used for edge $k$, we can conclude that $e \notin {\tt
     E}(A,(G,0,\alpha),t)$. Then, $e$ is no more modified. So, $e \notin {\tt
     E}(A,(G,0,\alpha),t)$ right before executing line \ref{ligne:modif:3}.

   To
   conclude, $k \notin {\tt E}(A,(G,0,\alpha),t)$ and $e \notin {\tt
     E}(A,(G,0,\alpha),t)$ when ${\tt
     switch\mbox{-}edges}(G',u',v,p_{u'},p_v)$ is performed and so
   this call does not alter edges in ${\tt E}(A,(G,0,\alpha),t)$. 
   Consequently, ${\tt M}(A,(G,0,\alpha),t)={\tt
     M}(A,(G'(\ref{ligne:23}),0,\alpha),t)$ and we are done with the
  proof of the second property.
\end{itemize}

  Finally, we consider the third property. Let $u = {\tt
    node}(A,(G,0,\alpha),t)$. Assume first that $u$ is a node of level
  $\lfloor(1+\alpha)r\rfloor+1$. It follows that the critical
  node has been visited by the time the $t$th edge traversal is completed in the execution of $A$ with instance $(G,0,\alpha)$, because otherwise, $A$ violates the distance constraint with this instance by Lemma~\ref{lem:ecc}, which is a contradiction. However, if the critical node has been visited by the time the $t$th edge traversal is completed, it means that ${\tt E}(A,(G,0,\alpha),t)$ contains at least one red edge, which is impossible by hypothesis.
  Hence, to show the third property, we can assume that $u$ is a node of some level $1\leq i <
  \lfloor(1+\alpha)r\rfloor+1$. Recall that the labels of the nodes of level $i$ in $G'(j)$, if line~j is executed, are the same as those of level $i$ in $G$ (as $G'(j)\in\mathcal{F}(\lfloor(1+\alpha)r\rfloor+1,16k,r)$ as shown in the proof of the first property. Thus, since by the second property of the lemma, already proven above, we have ${\tt M}(A,(G,0,\alpha),t)={\tt M}(A,(G'(j),0,\alpha),t)$, we know that ${\tt node}(A,(G'(j),0,\alpha),t)$ is also a node of level $i$, if line~$j$ is executed.

Since there is no red edge in ${\tt E}(A,(G,0,\alpha),t)$ (by hypothesis), if the test at line
  \ref{ligne:test} is false, then the next edge to be traversed has
  been already explored and so is not a red edge, still by
  hypothesis. Consequently, ${\tt node}(A,(G^{new},0,\alpha),t+1)$ is not a
  gadget.  Assume otherwise that the test at line \ref{ligne:test} is
  true. By hypothesis, there is a green edge in
  $\mathcal{L}_{G}(i,i+1)\cap\overline{{\tt
      E}(A,(G,0,\alpha),t)}$ and thus in $\mathcal{L}_{G'(\ref{ligne:test:f})}(i,i+1)\cap\overline{{\tt
      E}(A,(G'(\ref{ligne:test:f}),0,\alpha),t)}$. So, the test at line \ref{ligne:test:f}, lines \ref{ligne:modif:1}-\ref{ligne:modif:e:1}, and the fact that ${\tt node}(A,(G'(\ref{ligne:modif:e:1}),0,\alpha),t)$ is a node of level $i$, ensure that $e
  \in \mathcal{L}_{G'(\ref{ligne:test:k})}(i,i+1)$ when the test of line \ref{ligne:test:k} is
  performed. Moreover, if $G'$ is modified at line
  \ref{ligne:modif:1}, then this modification does not alter its
  topology (only some port numbers are modified). So, there still is a
  green edge in $\mathcal{L}_{G'(\ref{ligne:test:k})}(i,i+1)\cap\overline{{\tt
      E}(A,(G'(\ref{ligne:test:k}),0,\alpha),t)}$ when the test of line \ref{ligne:test:k}
  is performed. Consequently, the test at line \ref{ligne:test:k},
  lines \ref{ligne:modif:2:deb}-\ref{ligne:modif:2} and the fact that ${\tt node}(A,(G'(\ref{ligne:11}),0,\alpha),t)$ is a node of level $i$, ensure that ${\tt edge}(A,(G'(\ref{ligne:11}),0,\alpha),t+1)$ is a green edge of $\mathcal{L}_{G'(\ref{ligne:11})}(i,i+1)$  and so ${\tt node}(A,(G'(\ref{ligne:11}),0,\alpha),t+1)$ is not gadget. Thus, if $G'$
  is no more modified after line~\ref{ligne:11}, we are done as $G'(\ref{ligne:11})=G^{new}$. 

Otherwise, the condition at line~\ref{ligne:test:h} evaluates to true and line~\ref{ligne:modif:4} to~\ref{ligne:modif:3} are executed. In view of the above explanations, note that ${\tt edge}(A,(G'(\ref{ligne:11}),0,\alpha),t+1)=\{u,u'\}$, where $u$ and $u'$ are respectively of level $i$ and $i+1$.

With the {\tt move\mbox{-}gadget} operation of line~\ref{ligne:modif:4}, a gadget $g$, which is adjacent to a node $v\in N_i$ and a node $v'\in N_{i+1}$, and the red edges incident to $g$ are replaced by an edge $k=\{v,v'\}$, while a green edge $\{x,x'\}$ of $(\mathcal{L}_{G'(\ref{ligne:test:u})}(i,i+1)\cap\overline{{\tt E}(A,(G'(\ref{ligne:test:u}),0,\alpha),t)})\setminus \{e\}$ (with $e={\tt edge}(A,(G'(\ref{ligne:test:u}),0,\alpha),t+1)=\{u,u'\}$) is replaced by a gadget $g'$ and two red edges $\{x,g'\}$ and $\{x',g'\}$. Observe that since $v\in N_i$ and $v'\in N_{i+1}$, there is no edge in $G'(\ref{ligne:test:u})$ linking $v$ or $v'$ to an endpoint of $e={\tt edge}(A,(G'_{\ref{ligne:test:u}},0,\alpha),t+1)=\{u,u'\}$ or a gadget adjacent to an endpoint of $e$. In view of the {\tt move\mbox{-}gadget} operation at line~\ref{ligne:modif:4} and the fact that ${\tt M}(A,(G'({\ref{ligne:test:u}}),0,\alpha),t)={\tt M}(A,(G'({\ref{ligne:modif:4}}),0,\alpha),t)={\tt M}(A,(G'({\ref{ligne:modif:3}}),0,\alpha),t)$ (cf. the second property proven earlier), this remains true just before executing the {\tt switch\mbox{-}edge} operation of line~\ref{ligne:modif:3}, i.e., there is no edge in $G'(\ref{ligne:modif:3})$ linking $v$ or $v'$ to an endpoint of $e={\tt edge}(A,(G'_{\ref{ligne:modif:3}},0,\alpha),t+1)=\{u,u'\}$ or a gadget adjacent to an endpoint of $e$. Hence, by the definition of the {\tt switch\mbox{-}edge} operation, when executing line~\ref{ligne:modif:3}, edge $\{u,u'\}$ and $\{v,v'\}$ are replaced by edges $\{u,v'\}$ and $\{v,u'\}$. In particular, the port at node $u$ of edge $\{u,v'\}$ is the same as the port that had edge $\{u,u'\}$ at node $u$. Thus, since $u={\tt node}(A,(G'(\ref{ligne:test:u}),0,\alpha),t)$ and ${\tt M}(A,(G'({\ref{ligne:test:u}}),0,\alpha),t)={\tt M}(A,(G'({\ref{ligne:modif:3}}),0,\alpha),t)={\tt M}(A,(G'({\ref{ligne:fin}}),0,\alpha),t)$, it follows that $v'={\tt node}(A,(G'(\ref{ligne:fin}),0,\alpha),t+1)$. Note that, since node $v'$ belongs to $N_{i+1}$, we know that it is a node of level $i+1$ in $G'(\ref{ligne:test:u})$. However, the set of labels of the nodes at level $i+1$ in
  $G'(\ref{ligne:test:u})$ is identical to the set of labels of the nodes at level
  $i+1$ in $G'(\ref{ligne:fin})$ because $G'(\ref{ligne:test:u})$ and $G'(\ref{ligne:fin})$ both belong to $\mathcal{F}(\lfloor(1+\alpha)r\rfloor+1,16k,r)$ (cf. the proof of the first property). Therefore, ${\tt node}(A,(G'(\ref{ligne:fin}),0,\alpha),t+1)={\tt node}(A,(G^{new},0,\alpha),t+1)$ is a node of level $i+1$, which concludes the proof of the third property, and thus the proof of this lemma.
\end{proof}

\begin{lemma}
\label{lem:adv2}
Let $\alpha$ be any positive real. Let $k>0$ and $r\geq 6$ be two integers.
Let $A(\alpha,r)$ be an algorithm solving the distance-constrained exploration problem for every instance of $\mathcal{I}(\alpha,r)$.
For every $G\in\mathcal{F}(\lfloor(1+\alpha)r\rfloor+1,16k,r)$ and every integer $t$ such that $1\leq t < {\tt cost}(A,(G,0,\alpha))$, if the following three conditions are satisfied: 
\begin{itemize}
\item ${\tt node}(A,(G,0,\alpha),t)$ is a node of some level $1\leq i\leq\lfloor(1+\alpha)r\rfloor$ corresponding to an endpoint of an edge of $\mathcal{L}_{G}(i,i+1)\cap\overline{{\tt E}(A,(G,0,\alpha),t)}$ and
\item there is no red edge in ${\tt E}(A,(G,0,\alpha),t)$ and
\item for every layer $\mathcal{L}$ of $G$, at least half the green edges of $\mathcal{L}$ are not in ${\tt E}(A,(G,0,\alpha),t)$;
\end{itemize} 
then we let $G^{new}=$ {\tt
  GraphModification}($G,\alpha,A,t$) and at least one of the
following two properties holds:
\begin{itemize}
\item $i<\lfloor(1+\alpha)r\rfloor$ and ${\tt node}(A,(G^{new},0,\alpha),t+1)$ is a node of level $i+1$ corresponding to an endpoint of an edge of $\mathcal{L}_{G^{new}}(i+1,i+2)\cap\overline{{\tt E}(A,(G^{new},0,\alpha),t+1)}$;
\item ${\tt edge}(A,(G^{new},0,\alpha),t+1)\in{\tt E}(A,(G^{new},0,\alpha),t)$.
\end{itemize}
\end{lemma}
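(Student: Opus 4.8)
The plan is a line-by-line analysis of the call {\tt GraphModification}$(G,\alpha,A,t)$ under the three hypotheses. Throughout I rely on Lemma~\ref{lem:adv1}: its first property keeps every intermediate value of $G'$ (hence $G^{new}$) in $\mathcal{F}(\lfloor(1+\alpha)r\rfloor+1,16k,r)$ with $(G',0,\alpha)\in\mathcal{I}(\alpha,r)$; its second property gives ${\tt M}(A,(G,0,\alpha),t)={\tt M}(A,(G',0,\alpha),t)$ after every modification, so the node occupied after $t$ traversals, and thus its level $i$, is the same in every $G'$ and in $G^{new}$ (the labelling of each level being invariant under the three operations), and since $t<{\tt cost}(A,(G,0,\alpha))$ a $(t+1)$-th traversal is also performed in $G^{new}$. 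Also, as noted in the proof of Lemma~\ref{lem:adv1}, no operation ever touches an explored edge, so ${\tt E}(A,(G',0,\alpha),t)$ stays equal to ${\tt E}(A,(G,0,\alpha),t)$, and hypotheses~2 and~3 therefore hold for every intermediate $G'$ as well. Write $\ell=\lfloor(1+\alpha)r\rfloor+1$ and $\beta=16k\lfloor 16k/4\rfloor=64k^2$. First I handle the case where the test of line~\ref{ligne:test} is false: by hypothesis~2 there is no red edge in ${\tt E}(A,(G,0,\alpha),t)$ and, by hypothesis~1, the occupied node lies at some level $1\le i\le\lfloor(1+\alpha)r\rfloor<\ell$, so the test can only fail because ${\tt edge}(A,(G,0,\alpha),t+1)\in{\tt E}(A,(G,0,\alpha),t)$; then $G^{new}=G$ and the second conclusion holds. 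So from now on assume the test of line~\ref{ligne:test} is true.

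The next step reuses, essentially verbatim, the argument from the proof of the third property of Lemma~\ref{lem:adv1} to show that right after line~\ref{ligne:11} the edge $e$ the agent is about to traverse is an \emph{unexplored green edge} of $\mathcal{L}_{G'}(i,i+1)$ joining $u={\tt node}(A,(G',0,\alpha),t)$ (of level $i$) to $u'={\tt node}(A,(G',0,\alpha),t+1)$ (of level $i+1$): hypothesis~1 makes the second conjunct of the test of line~\ref{ligne:test:f} true, so after lines~\ref{ligne:test:f}--\ref{ligne:modif:e:1} one has $e\in\mathcal{L}_{G'}(i,i+1)$; and if $e$ is then still red, hypothesis~3 and Lemma~\ref{lem:conse} supply an unexplored green edge of that layer (it has $\lceil\beta/8\rceil=8k^2$ green edges, at least half of them unexplored), so the {\tt move\mbox{-}gadget} of line~\ref{ligne:modif:2} turns $e$ into an unexplored green edge. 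Now branch on $i$. If $i=\lfloor(1+\alpha)r\rfloor=\ell-1$, the test of line~\ref{ligne:test:u} is false, so $G^{new}=G'$ as of line~\ref{ligne:11} and the $(t+1)$-th traversal in $G^{new}$ is a green traversal bringing the agent to a node of the last level $\ell$ while no red edge was ever traversed; since a red edge is the only way to reach a gadget and a gadget the only way to reach the critical node, $v_2$ has degree $0$ in the explored subgraph, so every explored path from the node labelled $0$ to the current node avoids $v_2$ and hence has length $\ge\ell>(1+\alpha)r$ by the second property of Lemma~\ref{lem:ecc} --- contradicting that $A$ respects the distance constraint on $(G^{new},0,\alpha)\in\mathcal{I}(\alpha,r)$; so this branch is vacuous and I may assume $i<\ell-1$. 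If, in addition, the test of line~\ref{ligne:test:u} is false, then $u'$ is an endpoint of an edge of $\mathcal{L}_{G'}(i+1,i+2)\cap\overline{{\tt E}(A,(G',0,\alpha),t)}$, no further modification occurs, $G^{new}=G'$, and the $(t+1)$-th traversal is precisely $e$, a layer-$(i,i+1)$ edge; since that traversal does not touch layer $(i+1,i+2)$, the first conclusion holds with ${\tt node}(A,(G^{new},0,\alpha),t+1)=u'$.

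The substantial case is $i<\ell-1$ with the test of line~\ref{ligne:test:u} true, i.e. $u'$ has no unexplored edge towards level $i+2$; here I must prove the test of line~\ref{ligne:test:h} succeeds. A green edge $h\ne e$ exists as above (layer $(i,i+1)$ has at least $\lceil\beta/8\rceil/2=4k^2$ unexplored green edges, one being $e$, leaving at least $3$). The heart is the existence of a gadget $g$ incident to a node of $N_i$ and to a node of $N_{i+1}$, obtained by counting inside $\mathcal{L}_{G'}(i,i+1)$. That layer (of a graph still in $\mathcal{F}(\ell,16k,r)$) contains $\lfloor 7\beta/8\rfloor=56k^2$ gadgets, each attached to exactly one node of level $i$ and one of level $i+1$, and --- by the construction of $\mathcal{F}$, preserved by the three operations --- every node of level $i$ or $i+1$ has exactly $\lfloor 16k/4\rfloor=4k$ incident edges in the layer, hence is the endpoint of at most $4k$ of its gadgets. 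The crucial observation is that the forbidden sets are small: with $D$, $N_i$, $N_{i+1}$ as in lines~\ref{llllll}ff., the nodes of level $i$ outside $N_i$ are just $u$, the green-neighbours of $u'$, and the level-$i$ mates of the gadgets of $D$ hanging on $u'$ --- the last two families together using only $u'$'s budget of $4k$ layer-edges, so at most $1+4k$ nodes of level $i$ lie outside $N_i$; symmetrically at most $1+4k$ nodes of level $i+1$ are adjacent to $D$, and (using hypothesis~3 applied to layer $(i+1,i+2)$, whose green edges and explored green edges are untouched so far) at most $\lceil\beta/8\rceil/(8k)=k$ nodes of level $i+1$ lack an unexplored edge in $\mathcal{L}_{G'}(i+1,i+2)$, so at most $1+5k$ nodes of level $i+1$ lie outside $N_{i+1}$. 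Consequently the number of gadgets with an endpoint outside $N_i$ or outside $N_{i+1}$ is at most $4k(1+4k)+4k(1+5k)=36k^2+8k<56k^2$, so a good $g$ remains. I expect this counting --- making the estimate beat $56k^2$, which forces exploiting the $4k$-edge budget rather than a naive product bound --- to be the main obstacle of the proof.

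Once lines~\ref{ligne:modif:4}--\ref{ligne:modif:3} fire, I finish as follows. The {\tt move\mbox{-}gadget} of line~\ref{ligne:modif:4} turns the green edge $h$ into two red edges and creates a green edge $k$ whose level-$(i+1)$ endpoint $v$ is in $N_{i+1}$ and whose level-$i$ endpoint $v'$ is in $N_i$; since $v$ and $v'$ are non-adjacent to every node of $D$ (in particular to $u,u'$ and to every gadget of $D$), and since $h\ne e$ and {\tt move\mbox{-}gadget} creates only edges among $v$, $v'$, $g$ and the endpoints of $h$, none of the newly created edges makes $v$ or $v'$ a neighbour of, or a gadget-mate of, $u$ or $u'$; hence the preconditions of ${\tt switch\mbox{-}edges}(G',u',v,p_{u'},p_v)$ at line~\ref{ligne:modif:3} are met. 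That call replaces $e=\{u,u'\}$ and $k$ by $\{u,v\}$ and an edge joining $u'$ to $v'$, without altering the port of $e$ at $u$; hence ${\tt edge}(A,(G^{new},0,\alpha),t+1)=\{u,v\}$ and ${\tt node}(A,(G^{new},0,\alpha),t+1)=v$. Finally $v\in N_{i+1}$ is a node of level $i+1$ incident to an edge of $\mathcal{L}_{G'}(i+1,i+2)\cap\overline{{\tt E}(A,(G',0,\alpha),t)}$, and neither the layer-$(i,i+1)$ operations nor the step-$(t+1)$ traversal touches layer $(i+1,i+2)$, so that edge is still unexplored at time $t+1$ in $G^{new}$; together with $i<\ell-1=\lfloor(1+\alpha)r\rfloor$, this is exactly the first conclusion, completing the proof.
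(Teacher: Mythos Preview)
Your proof is correct and follows essentially the same approach as the paper: a line-by-line analysis of {\tt GraphModification} that establishes $e$ is an unexplored green edge of $\mathcal{L}(i,i+1)$ after line~\ref{ligne:11}, rules out $i=\ell-1$ via the distance constraint, and in the main case shows the test at line~\ref{ligne:test:h} succeeds by a gadget count. Your counting constants differ slightly from the paper's (you bound $|V_i\setminus N_i|\le 1+4k$ and $|V_{i+1}\setminus N_{i+1}|\le 1+5k$, the latter using hypothesis~3 on layer $(i{+}1,i{+}2)$, whereas the paper gets $4k$ and $6k$), but both leave a positive surplus among the $56k^2$ gadgets.
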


\begin{proof}
Throughout this proof, each time we refer to a specific line, it is one of Algorithm~\ref{alg:alg1}, and thus we
omit to mention it, in order to facilitate the reading.

As in the proof of Lemma~\ref{lem:adv1}, we let $G'(j)$ be the value of variable $G'$, in the call {\tt GraphModification}($G,\alpha,A,t$), right before executing line $j$. Hence, since $G^{new}=G'(\ref{ligne:fin})$, we just have to show that at least one of the following two properties holds:

\begin{itemize}
\item $1\leq i<\lfloor(1+\alpha)r\rfloor$ and ${\tt node}(A,(G'(\ref{ligne:fin}),0,\alpha),t+1)$ is a node of level $i+1$ corresponding to an endpoint of an edge of $\mathcal{L}_{G'(\ref{ligne:fin})}(i+1,i+2)\cap\overline{{\tt E}(A,(G'({\ref{ligne:fin}}),0,\alpha),t+1)}$;
\item ${\tt edge}(A,(G'(\ref{ligne:fin}),0,\alpha),t+1)\in{\tt E}(A,(G'(\ref{ligne:fin}),0,\alpha),t)$.
\end{itemize}

Observe that if ${\tt edge}(A,(G,0,\alpha),t+1)\in {\tt E}(A,(G,0,\alpha),t)$, then $G'(\ref{ligne:fin})=G$ and we have 
${\tt edge}(A,(G'(\ref{ligne:fin}),0,\alpha),t+1)\in{\tt E}(A,(G'(\ref{ligne:fin}),0,\alpha),t)$, which means that the lemma holds. Hence, we can assume that ${\tt edge}(A,(G,0,\alpha),t+1)\not\in {\tt E}(A,(G,0,\alpha),t)$. It follows, in view of the conditions of the lemma,  that the condition at line~\ref{ligne:test} evaluates to true. We have the following two claims.

\begin{claim}\label{claim:triv}
For every $1\leq j \leq \ref{ligne:fin}$, if line~j is executed, then (1) $G'(j)\in\mathcal{F}(\lfloor(1+\alpha)r\rfloor+1,16k,r)$, (2) there is no red edge in ${\tt E}(A,(G'(j),0,\alpha),t)$, and (3) for every $1\leq h\leq\lfloor(1+\alpha)r\rfloor$, the number of green edges in $\mathcal{L}_{G'(j)}(h,h+1)\cap\overline{{\tt E}(A,(G'(j),0,\alpha),t)}$ is at least $2$.
\end{claim}

\begin{proofclaim}
Assume that line~j is executed. In Algorithm~\ref{alg:alg1}, the modification of variable $G'$ are made only with the operations ${\tt switch\mbox{-}ports}$, ${\tt move\mbox{-}gadgets}$ and ${\tt switch\mbox{-}edges}$. Hence, by Lemma~\ref{lem:modifGraph}, we have $G'(j)\in\mathcal{F}(\lfloor(1+\alpha)r\rfloor+1,16k,r)$, which proves the first part. 

Concerning the second part, assume by contradiction that there is a red edge in  ${\tt E}(A,(G'(j),0,\alpha),t)$. It follows that there is an integer $t'\leq t$ such that  ${\tt node}(A,(G'(j),0,\alpha),t')$ is a gadget. Since the set of gadgets' labels in $G$ is the same as in $G'(j)$ and ${\tt M}(A,(G,0,\alpha),t)={\tt M}(A,(G'(j),0,\alpha),t)$ by Lemma~\ref{lem:adv1}, we know that ${\tt node}(A,(G,0,\alpha),t')$ is a gadget and ${\tt edge}(A,(G,0,\alpha),t')$ is a red edge. This contradicts the fact there is no red edge in ${\tt E}(A,(G,0,\alpha),t)$ and proves the second part.

Now, suppose by contradiction that the third part does not hold. This means there is an integer $1\leq h'\leq\lfloor(1+\alpha)r\rfloor$ such that the number of green edges in $\mathcal{L}_{G'(j)}(h',h'+1)\cap\overline{{\tt E}(A,(G'(j),0,\alpha),t)}$ is at most $1$. By Lemma~\ref{lem:conse}, we know that the number of green edges in
  $\mathcal{L}_{G'(j)}(h',h'+1)$ is equal to the number of green edges in $\mathcal{L}_{G}(h',h'+1)$, namely $8k^2$. By assumption, we know that at most half the green edges of $\mathcal{L}_{G}(h',h'+1)$ are in ${\tt E}(A,(G,0,\alpha),t)$, which means that the number of green edges in $\mathcal{L}_{G}(h',h'+1)\cap\overline{{\tt E}(A,(G,0,\alpha),t)}$ is at least $4k^2\geq 4$. Consequently, there is at least one more green edge in $\mathcal{L}_{G'(j)}(h',h'+1)\cap{\tt
    E}(A,(G'(j),0,\alpha),t)$ than in
  $\mathcal{L}_{G}(h',h'+1)\cap{\tt
    E}(A,(G,0,\alpha),t)$. However, note that
  the set of labels of the nodes at level $h'$ (resp. $h'+1$) in
  $G$ is identical to the set of labels of the nodes at level
  $h'$ (resp. $h'+1$) in $G'(j)$. Also note that each green edge of
  $\mathcal{L}_{G}(h',h'+1)$
  (resp. $\mathcal{L}_{G'(j)}(h',h'+1)$) is incident to a node of
  level $h'$ and to a node of level $h'+1$ in $G$
  (resp. $G'(j)$). Hence, the fact that there is at least one more
  green edge in $\mathcal{L}_{G'(j)}(h',h'+1)\cap{\tt
    E}(A,(G'(j),0,\alpha),t)$  than in
   $\mathcal{L}_{G}(h',h'+1)\cap{\tt
    E}(A,(G,0,\alpha),t)$ implies
  there exists an integer $t'\leq t$ such that the label of ${\tt
    node}(A,(G,0,\alpha),t')$ is different from the label of
  ${\tt node}(A,(G'(j),0,\alpha),t')$. This contradicts the fact
  that ${\tt M}(A,(G,0,\alpha),t)={\tt
    M}(A,(G'(j),0,\alpha),t)$ by Lemma~\ref{lem:adv1}, which
  proves the third part, and thus the claim.
\end{proofclaim}

\begin{claim}\label{claim:green_edge}
Right after line~\ref{ligne:modif:2:fin}, $u'$ is a node of level $i+1$. Moreover, ${\tt edge}(A,(G'_{\ref{ligne:test:u}},0,\alpha),t+1)$ is a green edge of $\mathcal{L}_{G'_{\ref{ligne:test:u}}}(i,i+1)$.
\end{claim}

\begin{proofclaim}
Recall that ${\tt node}(A,(G,0,\alpha),t)$ is a node of some level $1\leq i\leq\lfloor(1+\alpha)r\rfloor$ and the labels of the nodes of level $i$ in $G'(j)$, if line~j is executed, are the same as those of level $i$ in $G$ (as $G'(j)\in\mathcal{F}(\lfloor(1+\alpha)r\rfloor+1,16k,r)$ by Claim~\ref{claim:triv}). Thus, since by Lemma~\ref{lem:adv1} we have ${\tt M}(A,(G,0,\alpha),t)={\tt M}(A,(G'(j),0,\alpha),t)$, we know that ${\tt node}(A,(G'(j),0,\alpha),t)$ is also a node of level $i$ (again if line~j is executed).

Note that $G=G'(\ref{ligne:test:f})$. Besides, by assumption, ${\tt node}(A,(G,0,\alpha),t)$ is a node of some level $1\leq i\leq\lfloor(1+\alpha)r\rfloor$ corresponding to an endpoint of an edge of $\mathcal{L}_{G}(i,i+1)\cap\overline{{\tt E}(A,(G,0,\alpha),t)}$. Hence, if the condition at line~\ref{ligne:test:f} of Algorithm~\ref{alg:alg1} is false, we immediately know that variable $e$ is an edge of $\mathcal{L}_{G'(\ref{ligne:test:k})}(i,i+1)$ when testing the condition at line~\ref{ligne:test:k}. However, this also holds even if the condition at line~\ref{ligne:test:f} of Algorithm~\ref{alg:alg1} is true, in view of the definition of ${\tt switch\mbox{-}ports}$, lines~\ref{ligne:modif:1} and~\ref{ligne:modif:e:1} and the fact that ${\tt node}(A,(G'(\ref{ligne:modif:e:1}),0,\alpha),t)$ is a node of level $i$. Consequently, if $e$ is not a red edge when testing the condition at line~\ref{ligne:test:k}, we know that $e$ is green, which means that ${\tt edge}(A,(G'({\ref{ligne:test:u}}),0,\alpha),t+1)$ is a green edge of $\mathcal{L}_{G'({\ref{ligne:test:u}})}(i,i+1)$.

Otherwise, $e$ is a red edge when testing the condition at line~\ref{ligne:test:k} and we know that this condition evaluates to true because there is at least one green edge in $\mathcal{L}_{G'({\ref{ligne:test:k}})}(i,i+1)\cap\overline{{\tt E}(A,(G'({\ref{ligne:test:k}}),0,\alpha),t)}$ by Claim~\ref{claim:triv}. Thus, in view of the definition of ${\tt move\mbox{-}gadgets}$, lines~\ref{ligne:modif:2:deb} and~\ref{ligne:modif:e:2} and the fact that ${\tt node}(A,(G'(\ref{ligne:modif:2:fin}),0,\alpha),t)$ is a node of level $i$, we again have the guarantee that ${\tt edge}(A,(G'({\ref{ligne:test:u}}),0,\alpha),t+1)$ is a green edge of $\mathcal{L}_{G'({\ref{ligne:test:u}})}(i,i+1)$.

Finally, since ${\tt node}(A,(G'({\ref{ligne:test:u}}),0,\alpha),t)$ is a node of level $i$ and ${\tt edge}(A,(G'({\ref{ligne:test:u}}),0,\alpha),t+1)$ is a green edge of $\mathcal{L}_{G'({\ref{ligne:test:u}})}(i,i+1)$, it means that ${\tt node}(A,(G'({\ref{ligne:test:u}}),0,\alpha),t+1)$ is necessarily of level $i+1$, i.e., variable $u'$ is a node of level $i+1$ at the end of the execution of line~\ref{ligne:modif:2:fin}. This ends the proof of the claim.
\end{proofclaim}
~\\

Assume that $i=\lfloor(1+\alpha)r\rfloor$. In this case, the condition at line~\ref{ligne:test:u} evaluates to false and the algorithm terminates without further modifications. Hence, in view of Claim~\ref{claim:green_edge} and line~\ref{ligne:modif:2:fin}, we know that ${\tt node}(A,(G'({\ref{ligne:fin}}),0,\alpha),t+1)$ is a node of level $i+1=\lfloor(1+\alpha)r\rfloor+1$. We also know by Claims~\ref{claim:triv} and~\ref{claim:green_edge} that there is no red edge in ${\tt E}(A,(G'(\ref{ligne:test:u}),0,\alpha),t+1)$ and thus in ${\tt E}(A,(G'(\ref{ligne:fin}),0,\alpha),t+1)$. Let $H$ be the subgraph of $G'({\ref{ligne:fin}})$ induced by the edges in ${\tt E}(A,(G'({\ref{ligne:fin}}),0,\alpha),t+1)$. Since, there is no red edge in ${\tt E}(A,(G'(\ref{ligne:fin}),0,\alpha),t+1)$, $H$ does not contain the critical node (visiting the critical node would require traversing a red edge beforehand) and thus, by Lemma~\ref{lem:ecc}, the distance in $H$ from ${\tt node}(A,(G'({\ref{ligne:fin}}),0,\alpha),t+1)$ to the source node is at least $\lfloor(1+\alpha)r\rfloor+1$. This violates the distance constraint and contradicts the definition of algorithm $A$ because $(G'(\ref{ligne:fin}),0,\alpha)\in\mathcal{I}(\alpha,r)$ as $G'(\ref{ligne:fin})\in\mathcal{F}(\lfloor(1+\alpha)r\rfloor+1,16k,r)$. As a result, we can assume that $i<\lfloor(1+\alpha)r\rfloor$.

Consider the execution of the algorithm from the start of line~\ref{ligne:test:u}. If at the beginning of line~\ref{ligne:test:u}, variable $u'$ is an endpoint of an edge of $\mathcal{L}_{G'({\ref{ligne:test:u}})}(i+1,i+2)\cap\overline{{\tt E}(A,(G'({\ref{ligne:test:u}}),0,\alpha),t)}$ then, in view of Claim~\ref{claim:green_edge} and lines~\ref{ligne:modif:2:fin} and~\ref{ligne:test:u}, the algorithm terminates with ${\tt node}(A,(G'(\ref{ligne:fin}),0,\alpha),t+1)$ corresponding to an endpoint of an edge of $\mathcal{L}_{G'(\ref{ligne:fin})}(i+1,i+2)\cap\overline{{\tt E}(A,(G'(\ref{ligne:fin}),0,\alpha),t+1)}$, which proves the lemma. Hence, we can assume that when testing the condition  of line~\ref{ligne:test:u}, variable $u'$ is not an endpoint of an edge $\in\mathcal{L}_{G'}(i+1,i+2)\cap\overline{{\tt E}(A,(G',0,\alpha),t)}$. Observe that in this case, the condition of line~\ref{ligne:test:u} necessarily evaluates to true since $u'$ is a node of level $i+1$ at the end of the execution of line~\ref{ligne:modif:2:fin} (cf. Claim~\ref{claim:green_edge}) and $i<\lfloor(1+\alpha)r\rfloor$.

Now, we want to show that the condition at line~\ref{ligne:test:h} evaluates to true. To do so, we first prove below that in $G'(\ref{ligne:test:u})$ there exists a gadget $g$ adjacent to a node of $N_i$ and to a node of $N_{i+1}$. 

Observe that the endpoint $u$ (resp. $u'$) of $e={\tt edge}(A,(G'_{\ref{ligne:test:u}},0,\alpha),t+1)$ in level $i$ (resp. $i+1$) is incident to exactly $\left\lfloor\frac{w}{4}\right\rfloor=4k$ edges in $\mathcal{L}_{G'(\ref{ligne:test:u})}(i,i+1)$ (note that in view of Claim~\ref{claim:green_edge} and line~\ref{ligne:modif:2:fin}, $u$ is necessarily ${\tt node}(A,(G'(\ref{ligne:test:u}),0,\alpha),t)$). The $4k$ edges of $\mathcal{L}_{G'(\ref{ligne:test:u})}(i,i+1)$ incident to $u$ (resp. $u'$) are made only of green edges connecting $u$ (resp. $u'$) to nodes of level $i+1$ (resp. $i$) and of red edges incident to gadgets, each of which is adjacent to exactly one node of level $i+1$ (resp. $i$). Hence, in $G'(\ref{ligne:test:u})$, the number of nodes of level $i+1$ (resp. $i$) that are adjacent to a node of $D$ is at most $4k$. Denote by $V_i$ (resp. $V_{i+1}$) the set of nodes at level $i$ (resp. $i+1$) in $G'(\ref{ligne:test:u})$.


It follows that $|V_i\setminus N_i|\leq 4k$. 
Observe that $U:=\mathcal{L}_{G'(\ref{ligne:test:u})}(i+1,i+2)\cap {\tt E}(A,(G'(\ref{ligne:test:u}),0,\alpha),t)$ is a subset of the green edges of $\mathcal{L}_{G'(\ref{ligne:test:u})}(i+1,i+2)$ since red edges are not in ${\tt E}(A,(G'(\ref{ligne:test:u}),0,\alpha),t)$ by Claim~\ref{claim:triv}. In view of Lemma~\ref{lem:conse}, it follows that $|U|\leq 8k^2$. Since in $G'(\ref{ligne:test:u})$ every node of level $i+1$ is connected to exactly $4k$ edges in  $\mathcal{L}_{G'(\ref{ligne:test:u})}(i+1,i+2)$, there can only be at most $2k$ nodes of level $i+1$ in $G'(\ref{ligne:test:u})$ that are not incident to an edge of $\mathcal{L}_{G'(\ref{ligne:test:u})}(i+1,i+2)\setminus U$. Hence, from what we explained above, we necessarily have $|V_{i+1}\setminus N_{i+1}|\leq 4k +2k = 6k$.

Using Lemma~\ref{lem:conse}, we can show there are exactly $56k^2$ gadgets adjacent to a node of level $i$ and a node of level $i+1$. Observe that at most $\left\lfloor\frac{w}{4}\right\rfloor 4k =16k^2$ of these gadgets are connected to a node in $V_i\setminus N_i$ and at most $\left\lfloor\frac{w}{4}\right\rfloor 6k = 24k^2$ of these gadgets are connected to a node in $V_{i+1}\setminus N_{i+1}$. 
Let $q$ be the number of the gadgets connected to both a node of $N_{i}$ and a node of $N_{i+1}$.
We have :

\begin{eqnarray*}
  q & \geq & 56k^2 - 16k^2 - 24k^2\\
  & \geq & 16k^2\\
  & \geq & 1 \mbox{ since $k\geq 1$}
\end{eqnarray*}

It follows that, in $G'(\ref{ligne:test:u})$, there is at least one gadget $g$ adjacent to a node of $N_i$ and to a node of $N_{i+1}$. Moreover, by Claim~\ref{claim:triv}, there exists a green edge in $(\mathcal{L}_{G'(\ref{ligne:test:u})}(i,i+1)\cap\overline{{\tt E}(A,(G'(\ref{ligne:test:u}),0,\alpha),t)})\setminus \{e\}$. Hence, the condition at line~\ref{ligne:test:h} evaluates to true and line~\ref{ligne:modif:4} to~\ref{ligne:modif:3} are executed. 

With the {\tt move\mbox{-}gadget} operation of line~\ref{ligne:modif:4}, the gadget $g$, which is adjacent to node $v\in N_i$ and node $v'\in N_{i+1}$, and the red edges incident to $g$ are replaced by an edge $k=\{v,v'\}$, while a green edge $\{x,x'\}$ of $(\mathcal{L}_{G'(\ref{ligne:test:u})}(i,i+1)\cap\overline{{\tt E}(A,(G'(\ref{ligne:test:u}),0,\alpha),t)})\setminus \{e\}$ is replaced by a gadget $g'$ and two red edges $\{x,g'\}$ and $\{x',g'\}$. Observe that since $v\in N_i$ and $v'\in N_{i+1}$, there is no edge in $G'(\ref{ligne:test:u})$ linking $v$ or $v'$ to an endpoint of $e={\tt edge}(A,(G'_{\ref{ligne:test:u}},0,\alpha),t+1)=\{u,u'\}$ or a gadget adjacent to an endpoint of $e$. In view of the {\tt move\mbox{-}gadget} operation at line~\ref{ligne:modif:4} and the fact that ${\tt M}(A,(G'({\ref{ligne:test:u}}),0,\alpha),t)={\tt M}(A,(G'({\ref{ligne:modif:4}}),0,\alpha),t)={\tt M}(A,(G'({\ref{ligne:modif:3}}),0,\alpha),t)$ (cf. Lemma~\ref{lem:adv1}), this remains true just before executing the {\tt switch\mbox{-}edge} operation of line~\ref{ligne:modif:3}, i.e., there is no edge in $G'(\ref{ligne:modif:3})$ linking $v$ or $v'$ to an endpoint of $e={\tt edge}(A,(G'(\ref{ligne:modif:3})),0,\alpha),t+1)=\{u,u'\}$ or a gadget adjacent to an endpoint of $e$. 

Hence, by the definition of the {\tt switch\mbox{-}edge} operation, when executing line~\ref{ligne:modif:3}, edge $\{u,u'\}$ and $\{v,v'\}$ are replaced by edges $\{u,v'\}$ and $\{v,u'\}$. In particular, the port at node $u$ of edge $\{u,v'\}$ is the same as the port that had edge $\{u,u'\}$ at node $u$. Thus, since $u$ is ${\tt node}(A,(G'(\ref{ligne:test:u}),0,\alpha),t)$ and ${\tt M}(A,(G'({\ref{ligne:test:u}}),0,\alpha),t)={\tt M}(A,(G'({\ref{ligne:modif:3}}),0,\alpha),t)={\tt M}(A,(G'({\ref{ligne:fin}}),0,\alpha),t)$, it follows that $v'={\tt node}(A,(G'(\ref{ligne:fin}),0,\alpha),t+1)$. Recall that, since node $v'$ belongs to $N_{i+1}$, we know that it is a node of level $i+1$ in $G'(\ref{ligne:test:u})$ that is incident to an edge of $\mathcal{L}_{G'(\ref{ligne:test:u})}(i+1,i+2)\cap\overline{{\tt E}(A,(G'(\ref{ligne:test:u}),0,\alpha),t)}$. Note that the operation {\tt move\mbox{-}gadget}  (resp. {\tt switch\mbox{-}edge}), executed at line~\ref{ligne:modif:4} (resp. line~\ref{ligne:modif:3}), never affected edges belonging to the layer $\mathcal{L}_{G'(\ref{ligne:modif:4})}(i+1,i+2)$ (resp. $\mathcal{L}_{G'(\ref{ligne:modif:3})}(i+1,i+2)$), the nodes' labels, or the nodes' degrees. Moreover, the set of labels of the nodes at level $i+1$ in
  $G'(\ref{ligne:test:u})$ is identical to the set of labels of the nodes at level
  $i+1$ in $G'(\ref{ligne:fin})$ because $G'(\ref{ligne:test:u})$ and $G'(\ref{ligne:fin})$ both belong to $\mathcal{F}(\lfloor(1+\alpha)r\rfloor+1,16k,r)$ by Claim~\ref{claim:triv}. Therefore, $v'={\tt node}(A,(G'(\ref{ligne:fin}),0,\alpha),t+1)$ is a node of level $i+1$ corresponding to an endpoint of an edge of $\mathcal{L}_{G'(\ref{ligne:fin})}(i+1,i+2)\cap\overline{{\tt E}(A,(G'(\ref{ligne:fin}),0,\alpha),t+1)}$. This ends the proof of this lemma.

\end{proof}

Roughly speaking, the next lemma gives a lower bound on the penalty incurred by a distance-constrained exploration algorithm when running in some graphs of $\mathcal{F}$, by the time of the first visit of a gadget. Those graphs are determined by function {\tt AdversaryBehavior}, the execution of which corresponds to the first stage of our overall strategy outlined in Section~\ref{sec:int}.

\begin{lemma}
\label{lem:penagadget}
Let $\alpha$ be any positive real, $r\geq 6$ be any integer and $A(\alpha,r)$ be any algorithm solving distance-constrained exploration for all instances of $\mathcal{I}(\alpha,r)$. Let $G={\tt AdversaryBehavior}(r,\alpha,A,16k)$, where $k$ is any positive integer. We have the following two properties:  
\begin{enumerate}
\item $(G,0,\alpha)$ is an instance of $\mathcal{I}(\alpha,r)$ such that $G\in\mathcal{F}(\lfloor(1+\alpha)r\rfloor+1,16k,r)$;
\item The penalty incurred by $A$ on instance $(G,0,\alpha)$, by the time of the first visit of a gadget, is at least $k^2$.
\end{enumerate}
\end{lemma}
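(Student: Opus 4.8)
The plan is to unfold the sequence of graphs produced by ${\tt AdversaryBehavior}(r,\alpha,A,16k)$ and to show that the graph it returns already forces a large penalty \emph{before the first gadget is visited}. Write $\ell=\lfloor(1+\alpha)r\rfloor+1$ and $w=16k$; by Lemma~\ref{lem:conse} every layer of a graph in $\mathcal{F}(\ell,w,r)$ has exactly $8k^2$ green edges. Let $G_0$ be the graph picked in Step~0 of Algorithm~\ref{alg:alg0} and $G_x={\tt GraphModification}(G_{x-1},\alpha,A,x-1)$. I would first check that ${\tt AdversaryBehavior}$ terminates: $\mathcal{F}(\ell,w,r)$ is finite, so some graph recurs infinitely often in the sequence, $A$ has finite cost on every instance of $\mathcal{I}(\alpha,r)$, and at a recurrence far enough out the loop test of Algorithm~\ref{alg:alg0} fails. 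Let $G_0,\dots,G_N=G$ be the resulting finite sequence. By Lemma~\ref{lem:adv1}(1) each $G_x$ lies in $\mathcal{F}(\ell,w,r)$ and $(G_x,0,\alpha)\in\mathcal{I}(\alpha,r)$; the case $x=N$ is exactly Property~1. By Lemma~\ref{lem:adv1}(2) and chaining, ${\tt M}(A,(G,0,\alpha),t)={\tt M}(A,(G_t,0,\alpha),t)$ for every $t\le N$, so the execution of $A$ on $(G,0,\alpha)$ has the same first-$t$-move prefix as on $(G_t,0,\alpha)$. I will use freely that the type of a node or edge (gadget / level-$i$ node / green edge of a given layer / already explored / $\dots$) is determined by the agent's memory, so such properties pass between $G$ and the $G_t$'s; in particular ``no red edge has been explored'' is the same as ``no gadget has been visited''.

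Now introduce, for the execution of $A$ on $(G,0,\alpha)$, the first time $\lambda$ at which some layer $\mathcal{L}_G(i^{\ast},i^{\ast}+1)$ has at least $4k^2$ (half) of its green edges in ${\tt E}(A,(G,0,\alpha),\cdot)$, and the first time $\mu$ at which the agent occupies a gadget. The next key step is $\lambda<\mu$. I would prove by induction on $t\le\lambda$ that ${\tt E}(A,(G,0,\alpha),t)$ contains no red edge: the agent at time $t$ is either $v_1$ (all of whose neighbours lie in $V_1$, so the next move reaches no gadget) or a level node, and in the latter case --- since before time $\lambda$ every layer still has an unexplored green edge --- Lemma~\ref{lem:adv1}(3) applied to $G_t$ at step $t$ and transferred to $G$ says the next move does not reach a gadget, hence explores no red edge. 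Since the first gadget visit must happen along a red edge, this gives $\mu>\lambda$; it also shows $\lambda$ and $\mu$ are finite, since otherwise no gadget would ever be visited although $A$ explores $G$. Fix the witnessing layer $\mathcal{L}_G(i^{\ast},i^{\ast}+1)$, with $1\le i^{\ast}\le\ell-1$, and note that move $\lambda$ is the first traversal of a new green edge of that layer, so at time $\lambda$ every layer still has at least $4k^2$ unexplored green edges.

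For Property~2, recall that the penalty accrued by step $\mu$ equals the number of moves among the first $\mu$ that re-traverse an already explored edge; I will exhibit that many such moves. Call a step $s\in[1,\lambda]$ a $D$-step (resp.\ an $A$-step) if move $s$ traverses a green edge from level $i^{\ast}$ to level $i^{\ast}+1$ (resp.\ the reverse), and let $d,a,T=d+a$ be their counts. Since at time $\lambda$ at least $4k^2$ distinct green edges of $\mathcal{L}_G(i^{\ast},i^{\ast}+1)$ have been traversed and each such traversal is a $D$-step or an $A$-step, $T\ge 4k^2$. Before $\mu>\lambda$ the agent never leaves the subgraph spanned by $\{v_1\}\cup V_1\cup\dots\cup V_\ell$, whose only edges are the $v_1$--$V_1$ edges and the green layer edges; the only edges of this subgraph that cross the partition $(\{v_1\}\cup V_1\cup\dots\cup V_{i^{\ast}},\,V_{i^{\ast}+1}\cup\dots\cup V_\ell)$ are the green edges of $\mathcal{L}_G(i^{\ast},i^{\ast}+1)$, and the agent starts on the left side; hence $|d-a|\le 1$, so $2d\ge T-1\ge 4k^2-1$ and thus $d\ge 2k^2$. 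It remains to match the $D$-steps to distinct re-traversing moves occurring before $\mu$. A $D$-step that is itself a re-traversal is counted directly. For a $D$-step $s$ that is \emph{not} a re-traversal, the agent at time $s-1$ is a level-$i^{\ast}$ node incident to an unexplored edge of $\mathcal{L}(i^{\ast},i^{\ast}+1)$, so Lemma~\ref{lem:adv2} applies to $G_{s-1}$ at step $s-1$; iterating it along the resulting descent --- each move is either a re-traversal or a green step one level down that lands on a node incident to an unexplored edge of the next layer, and the ``at least half the green edges unexplored'' precondition survives because a single descent touches each layer at most once --- the descent cannot reach level $\ell$ (that would violate the distance constraint, cf.\ Lemma~\ref{lem:ecc} and the proof of Lemma~\ref{lem:adv2}), so it ends with a re-traversal at some step $\rho(s)>s$. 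During $\{s,\dots,\rho(s)-1\}$ the agent stays strictly below level $i^{\ast}$, which forces the $\rho(s)$ to be pairwise distinct, distinct from the re-traversing $D$-steps, and all $<\mu$. Therefore the penalty incurred by the time of the first gadget visit is at least $d\ge 2k^2\ge k^2$, which is Property~2.

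The step I expect to be the crux is this last one: organizing the descent runs delivered by Lemma~\ref{lem:adv2} into an injection from the productive descending traversals of the critical layer to distinct re-traversals, while controlling both that the precondition of Lemma~\ref{lem:adv2} stays valid along each descent and that each produced re-traversal lands strictly before $\mu$. The surrounding bookkeeping --- termination of ${\tt AdversaryBehavior}$, the memory-transfer arguments that let one apply Lemmas~\ref{lem:adv1} and~\ref{lem:adv2} to the $G_t$'s and read the conclusions off for $G$, and the $|d-a|\le 1$ balance --- is routine once these invariants are pinned down.
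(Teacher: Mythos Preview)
Your proposal is correct and follows essentially the same approach as the paper's proof. Both arguments unfold the sequence $G_0,G_1,\dots,G_\tau$, use Lemma~\ref{lem:adv1}(2) to transfer memory between the $G_t$'s and the final graph, use Lemma~\ref{lem:adv1}(3) to show no gadget is visited by the time $\lambda$ at which some layer first has half its green edges explored, use the cut argument to get $|d-a|\le 1$ and hence $d\ge 2k^2-1$, and use Lemma~\ref{lem:adv2} iterated along a descent to attach a re-traversal to each productive down-crossing of the critical layer. The only cosmetic difference is in the final accounting: the paper (Claim~\ref{theo1:claim3}) argues that between any two \emph{consecutive} first-time descending traversals there must be a re-traversal, yielding penalty $\ge {\tt DOWN}-1$, whereas you build an explicit injection $s\mapsto\rho(s)$ from non-re-traversing $D$-steps to re-traversals and count re-traversing $D$-steps directly. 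Your termination argument for ${\tt AdversaryBehavior}$ via finiteness of $\mathcal{F}(\ell,w,r)$ is also slightly more explicit than the paper's one-line appeal to $A$ having finite cost.
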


\begin{proof}
The execution of function {\tt AdversaryBehavior}$(r,\alpha,A,16k)$ can be viewed as a sequence of steps 0,1,2, etc. Precisely, in step $0$, (cf. line~\ref{ligne:alg0:1} of Algorithm~\ref{alg:alg0}), we choose any graph $G_0=(V_0,E_0)\in\mathcal{F}(\lfloor(1+\alpha)r\rfloor+1,16k,r)$. By definition, $(G_0,0,\alpha)$ is then necessarily an instance of $\mathcal{I}(\alpha,r)$. In step $x\geq1$ (cf. lines~\ref{ligne:alg0:2} to~\ref{ligne:alg0:last} of Algorithm~\ref{alg:alg0}), we check whether $x< {\tt cost}(A,(G_x,0,\alpha))$ or not, where $G_x=(V_x,E_x)$ is the graph returned by ${\tt GraphModification}(G_{x-1},\alpha,A,x-1)$. If $x<{\tt cost}(A,(G_x,0,\alpha))$, we proceed to step $x+1$. Otherwise, we definitively stop (step $x$ is then the last step) and function {\tt AdversaryBehavior}$(r,\alpha,A,16k)$ returns graph $G_x$. Note that, instance $(G_x,0,\alpha)$ is solvable by $A(\alpha,r)$ (and thus function ${\tt cost}$ is used with well-defined parameters when checking whether $x<{\tt cost}(A,(G_x,0,\alpha))$) because using the first property of Lemma~\ref{lem:adv1} and the fact that $(G_0,0,\alpha)\in \mathcal{I}(\alpha,r)$, it can be shown by induction on $x$ that $(G_x,0,\alpha)\in\mathcal{I}(\alpha,r)$. Using the same property, we can also prove by induction on $x$ that $G_x\in\mathcal{F}(\lfloor(1+\alpha)r\rfloor+1,16k,r)$, which particularly means that $G_x$ has the same size and the same order as $G_0$, i.e., $|V_0|=|V_x|$ and $|E_0|=|E_x|$. Hence, since $A(\alpha,r)$ is an algorithm solving distance-constrained exploration for all instances of $\mathcal{I}(\alpha,r)$, the behavior of the adversary is necessarily made of a finite number of steps. Therefore, there exists a last step, the index of which will be denoted by $\tau$. In light of the above explanations, we know that function {\tt AdversaryBehavior}$(r,\alpha,A,16k)$ returns graph $G_{\tau}$ and $G_{\tau}$ (resp. $(G_{\tau},0,\alpha)$) belongs to $\mathcal{F}(\lfloor(1+\alpha)r\rfloor+1,16k,r)$ (resp. $\mathcal{I}(\alpha,r)$), which means that the first property of the lemma holds.

The purpose of the rest of this proof is to show that second property also holds, i.e., the penalty incurred by $A(\alpha,r)$ on instance $(G_{\tau},0,\alpha)$, by the time of the first visit of a gadget, is at least $k^2$.


Consider an execution $\mathcal{E}$ of algorithm $A(\alpha,r)$ on instance $(G_{\tau},0,\alpha)$. Let $\lambda$ be the first time in execution $\mathcal{E}$ when half the green edges of some layer have been completely traversed at least once. Such a layer corresponds to $\mathcal{L}_{G_{\tau}}(j,j+1)$ for some integer $1\leq j\leq \lfloor(1+\alpha)r\rfloor$ because, by construction, every graph of $\mathcal{F}(\lfloor(1+\alpha)r\rfloor+1,16k,r)$ is made of exactly $\lfloor(1+\alpha)r\rfloor$ layers. Denote by ${\tt DOWN}$ (resp. ${\tt UP}$) the number of times the agent has traversed, during the time interval $[0,\lambda]$ in execution $\mathcal{E}$, a green edge of $\mathcal{L}_{G_{\tau}}(j,j+1)$ from a node of level $j$ (resp. $j+1$) to a node of level $j+1$ (resp. $j$).

We have the following two claims.

\begin{claim}
\label{theo1:claim0}
${\tt cost}(A,(G_{\tau},0,\alpha))=\tau$. Moreover, for every $0\leq x<\tau$, we have ${\tt cost}(A,(G_x,0,\alpha))>x$ and ${\tt M}(A,(G_x,0,\alpha),x)={\tt M}(A,(G_\tau,0,\alpha),x)$.
\end{claim}

\begin{proofclaim}
In view of lines~\ref{ligne:alg0:22} to~\ref{ligne:alg0:last} of Algorithm~\ref{alg:alg0}, we immediately know that ${\tt cost}(A,(G_{\tau},0,\alpha))\leq \tau$ and, for every $0\leq x <\tau$, ${\tt cost}(A,(G_{x},0,\alpha))>x$. This particularly means that ${\tt cost}(A,(G_{\tau-1},0,\alpha))>\tau-1$: after the first $\tau-1$ edges traversals in $G_{\tau-1}$, the algorithm asks the agent to make a $\tau$th edge traversal by taking some port $p$ from some node of label $l$. Since $A$ is deterministic and, by Lemma~\ref{lem:adv1},  ${\tt M}(A,(G_{\tau-1},0,\alpha),\tau-1)={\tt M}(A,(G_{\tau},0,\alpha),\tau-1)$, it follows that ${\tt cost}(A,(G_{\tau},0,\alpha))> \tau-1$, i.e., with instance $(G_{\tau},0,\alpha)$, algorithm $A$ also asks the agent to take port $p$ from a node of label $l$ after the first $\tau-1$ edges traversals in $G_{\tau}$. This implies that ${\tt cost}(A,(G_{\tau},0,\alpha))= \tau$. Thus, it just remains to show that for every $0\leq x<\tau$, ${\tt M}(A,(G_{x},0,\alpha),x)={\tt M}(A,(G_{\tau},0,\alpha),x)$. We can do this using a descending induction.

We mentioned above that ${\tt M}(A,(G_{\tau-1},0,\alpha),\tau-1)={\tt M}(A,(G_{\tau},0,\alpha),\tau-1)$, meaning that the base case $x=\tau-1$ immediately holds. Moreover, if for an integer $0<x\leq\tau-1$, we have ${\tt M}(A,(G_{x},0,\alpha),x)={\tt M}(A,(G_{\tau},0,\alpha),x)$, then we have  ${\tt M}(A,(G_{x-1},0,\alpha),x-1)={\tt M}(A,(G_{\tau},0,\alpha),x-1)$. Indeed, ${\tt M}(A,(G_{x},0,\alpha),x)={\tt M}(A,(G_{\tau},0,\alpha),x)$ implies that ${\tt M}(A,(G_{x},0,\alpha),x-1)={\tt M}(A,(G_{\tau},0,\alpha),x-1)$ and, by Lemma~\ref{lem:adv1}, we have ${\tt M}(A,(G_{x-1},0,\alpha),x-1)={\tt M}(A,(G_{x},0,\alpha),x-1)$. This proves the inductive step, and concludes the proof of this claim.
\end{proofclaim}

\begin{claim}
\label{theo1:claim0bis}
For every integer $0\leq x<\tau$ and for every integer $1\leq h \leq  \lfloor(1+\alpha)r\rfloor$, the number of green edges in $\mathcal{L}_{G_{x}}(h,h+1)\cap{\tt E}(A,(G_{x},0,\alpha),x)$ (resp. $\mathcal{L}_{G_{x}}(h,h+1)\cap\overline{{\tt E}(A,(G_{x},0,\alpha),x)}$) is equal to the number of green edges in $\mathcal{L}_{G_{\tau}}(h,h+1)\cap{\tt E}(A,(G_{\tau},0,\alpha),x)$ (resp. $\mathcal{L}_{G_{\tau}}(h,h+1)\cap\overline{{\tt E}(A,(G_{\tau},0,\alpha),x)}$)
\end{claim}

\begin{proofclaim}
  Suppose, for the sake of contradiction, that the claim does not
  hold. First, recall that
  $G_x\in\mathcal{F}(\lfloor(1+\alpha)r\rfloor+1,16k,r)$, for every
  integer $0\leq x \leq\tau$. So, by Lemma~\ref{lem:conse}, we know that
  for every integer $0\leq x<\tau$ and for every integer $1\leq h \leq
  \lfloor(1+\alpha)r\rfloor$, the number of green edges in
  $\mathcal{L}_{G_{x}}(h,h+1)$ is equal to the number of green edges
  in $\mathcal{L}_{G_{\tau}}(h,h+1)$. Since we assume that the claim
  does not hold, this necessarily implies that there exist an integer
  $0\leq x'<\tau$ and an integer $1\leq h' \leq
  \lfloor(1+\alpha)r\rfloor$ such that the number of green edges in
  $\mathcal{L}_{G_{x'}}(h',h'+1)\cap{\tt E}(A,(G_{x'},0,\alpha),x')$
  is not equal to the number of green edges in
  $\mathcal{L}_{G_{\tau}}(h',h'+1)\cap{\tt
    E}(A,(G_{\tau},0,\alpha),x')$. This means there is at least one
  more green edge in $\mathcal{L}_{G_{x'}}(h',h'+1)\cap{\tt
    E}(A,(G_{x'},0,\alpha),x')$ than in
  $\mathcal{L}_{G_{\tau}}(h',h'+1)\cap{\tt
    E}(A,(G_{\tau},0,\alpha),x')$ or vice versa. However, note that
  the set of labels of the nodes at level $h'$ (resp. $h'+1$) in
  $G_{\tau}$ is identical to the set of labels of the nodes at level
  $h'$ (resp. $h'+1$) in $G_{x'}$. Also note that each green edge of
  $\mathcal{L}_{G_{\tau}}(h',h'+1)$
  (resp. $\mathcal{L}_{G_{x'}}(h',h'+1)$) is incident to a node of
  level $h'$ and to a node of level $h'+1$ in $G_{\tau}$
  (resp. $G_{x'}$). Hence, the fact that there is at least one more
  green edge in $\mathcal{L}_{G_{x'}}(h',h'+1)\cap{\tt
    E}(A,(G_{x'},0,\alpha),x')$ than in
  $\mathcal{L}_{G_{\tau}}(h',h'+1)\cap{\tt
    E}(A,(G_{\tau},0,\alpha),x')$ or vice versa, necessarily implies
  there exists an integer $x''\leq x'$ such that the label of ${\tt
    node}(A,(G_{x'},0,\alpha),x'')$ is different from the label of
  ${\tt node}(A,(G_{\tau},0,\alpha),x'')$. This contradicts the fact
  that ${\tt M}(A,(G_{x'},0,\alpha),x')={\tt
    M}(A,(G_\tau,0,\alpha),x')$ (cf. Claim~\ref{theo1:claim0}), which
  proves the claim.
\end{proofclaim}

~~\\
In the rest of this proof, we will denote by $\Gamma$ the set of all the gadgets' labels in $G_{\tau}$. Since for any two graphs of $\mathcal{F}(\lfloor(1+\alpha)r\rfloor+1,16k,r)$, their gadgets' labels are identical, we know that, for every $0\leq x <\tau$, the set of all the gadgets' labels in $G_{x}$ is also $\Gamma$. We continue with the following two claims.

\begin{claim}
\label{theo1:claim1}
During the time interval $[0,\lambda]$ in execution $\mathcal{E}$, the agent never occupies a gadget of $G_{\tau}$.
\end{claim}

\begin{proofclaim}
According to the definition of time $\lambda$, we know that at this time, the agent completes the $s$th edge traversal of execution $\mathcal{E}$, for some integer $1\leq s \leq\tau$ and this traversal corresponds to a traversal of a green edge of $G_{\tau}$. Therefore, at time $\lambda$, the agent cannot occupy a gadget. Moreover, at time $0$, the agent occupies the source node, i.e., the node having label $0$ in $G_{\tau}$, which is not a gadget. Finally, since the neighbors of the source node of $G_{\tau}$ all belong to level $1$, ${\tt node}(A,(G_{\tau},0,\alpha),1)$ is necessarily a node of this level and, thus, cannot be a gadget either.

As a result, to prove the claim, we only need to show that for every integer $2\leq z<s$, ${\tt node}(A,(G_{\tau},0,\alpha),z)$ is a not a gadget. Assume for the sake of contradiction that there exists an integer $2\leq z<s$ such that ${\tt node}(A,(G_{\tau},0,\alpha),z)$ is a gadget. Also assume, without loss of generality, that $z$ is the smallest integer for which this occurs. Since the source node and ${\tt node}(A,(G_{\tau},0,\alpha),1)$ are not gadgets, it follows that before the end of the $z$th edge traversal in execution $\mathcal{E}$, no gadget is visited. In particular, this trivially implies that ${\tt node}(A,(G_{\tau},0,\alpha),z-1)$ cannot be a gadget. But it also implies that ${\tt node}(A,(G_{\tau},0,\alpha),z-1)$ cannot be the critical node of $G_{\tau}$ or a node belonging to the tail of $G_{\tau}$. Indeed, if it was not true, then the agent would have necessarily occupied a gadget before making the $z$th edge traversal of execution $\mathcal{E}$ because all paths from the source node to the critical node or a node of the tail pass through a gadget, which would be a contradiction. 

Consequently, ${\tt node}(A,(G_{\tau},0,\alpha),z-1)$ is either the source node of $G_{\tau}$ or a node belonging to some level in $G_{\tau}$. In the first case, we know that ${\tt node}(A,(G_{\tau},0,\alpha),z)$ is a node of level $1$ because the neighbors of the source node of $G_{\tau}$ all belong to this level. However, this contradicts the assumption that ${\tt node}(A,(G_{\tau},0,\alpha),z)$ is a gadget. Hence, ${\tt node}(A,(G_{\tau},0,\alpha),z-1)$ is necessarily a node belonging to some level $p$ in $G_{\tau}$. Moreover, the labels of the nodes of level $p$ in $G_{\tau}$ are the same as those of level $p$ in $G_{z-1}$ and, by Claim~\ref{theo1:claim0}, ${\tt M}(A,(G_{z-1},0,\alpha),z-1)={\tt M}(A,(G_\tau,0,\alpha),z-1)$. Thus ${\tt node}(A,(G_{z-1},0,\alpha),z-1)$ is a node of level $p$ in $G_{z-1}$.

Recall that, in execution $\mathcal{E}$, the end of the $s$th edge traversal, which occurs at time $\lambda$, is the first time when half the green edges of some layer in $G_{\tau}$ have been completely traversed at least once. Thus, since $2\leq z<s$, we know by Lemma~\ref{lem:conse} that for every integer $1\leq h \leq  \lfloor(1+\alpha)r\rfloor$, the number of green edges in $\mathcal{L}_{G_{\tau}}(h,h+1)\cap\overline{{\tt E}(A,(G_{\tau},0,\alpha),z-1)}$ is at least $4k^2>1$. By Claim~\ref{theo1:claim0bis}, this means that for every integer $1\leq h \leq  \lfloor(1+\alpha)r\rfloor$ there is at least one green edge in $\mathcal{L}_{G_{z-1}}(h,h+1)\cap\overline{{\tt E}(A,(G_{z-1},0,\alpha),z-1)}$. 

Also recall that before the end of the $z$th edge traversal in execution $\mathcal{E}$, no gadget is visited i.e., no node with a label in $\Gamma$ is visited. Since ${\tt M}(A,(G_{z-1},0,\alpha),z-1)={\tt M}(A,(G_\tau,0,\alpha),z-1)$ (by Claim~\ref{theo1:claim0}), and since all gadgets' labels in $G_{z-1}$ also belong to $\Gamma$ and each red edge of $G_{z-1}$ is incident to a gadget, it follows that there is no red edge in ${\tt E}(A,(G_{z-1},0,\alpha),z-1)$.

As a result, we can apply Lemma~\ref{lem:adv1} to state that ${\tt node}(A,(G_z,0,\alpha),z)$ is not a gadget of $G_z$. This means that the label of ${\tt node}(A,(G_z,0,\alpha),z)$ does not belong to $\Gamma$. However, by Claim~\ref{theo1:claim0}, ${\tt M}(A,(G_z,0,\alpha),z)={\tt M}(A,(G_\tau,0,\alpha),z)$. Hence, the label of ${\tt node}(A,(G_{\tau},0,\alpha),z)$ cannot belong to $\Gamma$ either, which implies that ${\tt node}(A,(G_{\tau},0,\alpha),z)$ is not a gadget of $G_{\tau}$. This is a contradiction with the assumption made above that ${\tt node}(A,(G_{\tau},0,\alpha),z)$ is a gadget, which concludes the proof of this claim.
\end{proofclaim}

By Claim~\ref{theo1:claim1}, we know that during the time interval $[0,\lambda]$ in execution $\mathcal{E}$, the agent never occupies a gadget of $G_{\tau}$, and thus, by extension, the critical node. In view of the construction of family $\mathcal{F}$, this implies that after each traversal of a green edge of $\mathcal{L}_{G_{\tau}}(j,j+1)$ from a node of level $j$ (resp. $j+1$) to a node of level $j+1$ (resp. $j$) completed at some time $\beta<\lambda$ during execution $\mathcal{E}$, the agent can come back to a node level $j$ (resp. $j+1$) in the time interval $(\beta,\lambda]$ only by completing the traversal of a green edge of $\mathcal{L}_{G_{\tau}}(j,j+1)$ from a node of level $j+1$ (resp. $j$) in the time interval $(\beta,\lambda]$. From this, it follows that ${\tt DOWN}={\tt UP}+ \epsilon$ for some $\epsilon\in[-1..1]$. Moreover, by Lemma~\ref{lem:conse}, we know that the number of green edges in every layer of $G_{\tau}$ is $8k^2$. Consequently, at time $\lambda$, exactly $4k^2$ green edges of $\mathcal{L}_{G_{\tau}}(j,j+1)$ have been entirely explored. This means that ${\tt DOWN}+{\tt UP}=4k^2$ and thus ${\tt DOWN}\geq 2k^2-1$. Since $k 
\geq 1$, we can state that  ${\tt DOWN}-1\geq k^2$. Hence, to finish the proof of the second property of the lemma, it is enough to show the claim below.

\begin{claim}
\label{theo1:claim3}
The penalty incurred by algorithm $A(\alpha,r)$ on instance $(G_{\tau},0,\alpha)$ by time $\lambda$ is at least  ${\tt DOWN}-1$.
\end{claim}

\begin{proofclaim}
In the proof of this claim, we will refer to each traversal in execution $\mathcal{E}$ of a previously unexplored green edge of $\mathcal{L}_{G_{\tau}}(j,j+1)$ from a node of level $j$ to a node of level $j+1$ as a \emph{descending traversal}. We will say that two descending traversals are {\em consecutive} if there is no other descending traversal made between them (however, other types of traversals could still occur between two descending traversals that are consecutive).

Note that there are ${\tt DOWN}$ descending traversals that are completed by time $\lambda$ in execution $\mathcal{E}$, and that ${\tt DOWN}>k^2\geq 1$ (as ${\tt DOWN}-1\geq k^2$ and $k\geq1$). Therefore, the claim holds if we have the following property: for every pair of positive integers $\varphi<\varphi'$ such that the $\varphi$th and $\varphi'$th edge traversals in execution $\mathcal{E}$ are two consecutive descending traversals completed by time $\lambda$, there exists an integer $\varphi\leq\varphi''<\varphi'$ such that the $\varphi''$th edge traversal in execution $\mathcal{E}$ passes through a previously explored edge. So, assume, for the sake of contradiction, there exist two positive integers $\varphi<\varphi'$ such that the $\varphi$th and $\varphi'$th edge traversals in execution $\mathcal{E}$ are two consecutive descending traversals completed by time $\lambda$, and there is no integer $\varphi\leq\varphi''<\varphi'$ for which the $\varphi''$th edge traversal in execution $\mathcal{E}$ passes through a previously explored edge. Since ${\tt cost}(A,(G_{\tau},0,\alpha))=\tau$ by Claim~\ref{theo1:claim0}, we have $\varphi'\leq\tau$. Moreover, $\varphi\geq2$ since, by definition, the edges of $G_{\tau}$ connected to the node labeled $0$ are not green and do not belong to any layer.

Note that, by Claim~\ref{theo1:claim1}, during the time interval $[0,\lambda]$, the agent never occupies a gadget of $G_{\tau}$ in execution $\mathcal{E}$, i.e., it never occupies a node with a label belonging to $\Gamma$. Hence, for every integer $\mu<\varphi'$, there is no red edge in ${\tt E}(A,(G_{\mu},0,\alpha),\mu)$ because ${\tt M}(A,(G_{\mu},0,\alpha),\mu)={\tt M}(A,(G_\tau,0,\alpha),\mu)$ (by Claim~\ref{theo1:claim0}), all gadgets' labels in $G_{\mu}$ also belong to $\Gamma$, and each red edge of $G_{\mu}$ is incident to a gadget. Also note that, in view of Claim~\ref{theo1:claim0bis} and the fact that in each layer of $G_{\tau}$ at most half the green edges have been explored by time $\lambda$ in execution $\mathcal{E}$, we can state that at least half the green edges of each layer of $G_{\mu}$ are not in ${\tt E}(A,(G_{\mu},0,\alpha),\mu)$, for every integer $\mu<\varphi'$. Finally, observe that ${\tt node}(A,(G_{\tau},0,\alpha),\varphi-1)$ and ${\tt node}(A,(G_{\tau},0,\alpha),\varphi'-1)$ (resp. ${\tt node}(A,(G_{\tau},0,\alpha),\varphi)$ and ${\tt node}(A,(G_{\tau},0,\alpha),\varphi')$) are nodes of level $j$ (resp. $j+1$) in view of the definitions of $\varphi$ and $\varphi'$. Thus, ${\tt node}(A,(G_{\varphi-1},0,\alpha),\varphi-1)$ and ${\tt node}(A,(G_{\varphi'-1},0,\alpha),\varphi'-1)$ (resp. ${\tt node}(A,(G_{\varphi},0,\alpha),\varphi)$ and ${\tt node}(A,(G_{\varphi'},0,\alpha),\varphi')$) are nodes of level $j$ (resp. $j+1$) because
the labels of the nodes of level $j$ (resp. $j+1$) in $G_{\tau}$ are the same as those of level $j$ (resp. $j+1$) in $G_{x}$, for every $0\leq x<\tau$, and because, by Claim~\ref{theo1:claim0},  ${\tt M}(A,(G_{x},0,\alpha),x)={\tt M}(A,(G_\tau,0,\alpha),x)$, still for every $0\leq x<\tau$. 

We have two cases to consider. The first case is when, for every integer $\varphi\leq\mu<\varphi'$, ${\tt node}(A,(G_{\mu},0,\alpha),\mu)$ is a node of some level $1\leq h\leq\lfloor(1+\alpha)r\rfloor$ corresponding to an endpoint of an edge of $\mathcal{L}_{G_\mu}(h,h+1)\cap\overline{{\tt E}(A,(G_\mu,0,\alpha),\mu)}$. In this case, based on the observations given in the previous paragraph and the fact that $2\leq\varphi\leq\mu<\varphi'\leq\tau$, we can use Lemma~\ref{lem:adv2} to prove by induction on $\mu$ that one of the following two properties is true: (1) for every $\varphi\leq\mu<\varphi'$, ${\tt node}(A,(G_{\mu+1},0,\alpha),\mu+1)$ is a node of level $j+1+\mu-\varphi$ or (2) for some $\varphi\leq\mu<\varphi'$, ${\tt node}(A,(G_{\mu},0,\alpha),\mu)$ is a node of level $j+\mu-\varphi$ and ${\tt edge}(A,(G_{\mu+1},0,\alpha),\mu+1)\in{\tt E}(A,(G_{\mu+1},0,\alpha),\mu)$. If the first property is true then ${\tt node}(A,(G_{\varphi'-1},0,\alpha),\varphi'-1)$ is a node of some level $\sigma>j$, which contradicts what we have stated above i.e., ${\tt node}(A,(G_{\varphi'-1},0,\alpha),\varphi'-1)$ is a node of level $j$. If the second property is true then, for some $\varphi\leq\mu<\varphi'$, ${\tt node}(A,(G_{\tau},0,\alpha),\mu)$ is a node of level $j+\mu-\varphi$ and ${\tt edge}(A,(G_{\tau},0,\alpha),\mu+1)\in{\tt E}(A,(G_{\tau},0,\alpha),\mu)$ because the labels of the nodes of level $j+\mu-\varphi$ in $G_{\mu}$ are the same as those of level $j+\mu-\varphi$ in $G_{\tau}$ and because, by Claim~\ref{theo1:claim0}, ${\tt M}(A,(G_{\mu},0,\alpha),\mu)={\tt M}(A,(G_\tau,0,\alpha),\mu)$ and ${\tt M}(A,(G_{\mu+1},0,\alpha),\mu+1)={\tt M}(A,(G_\tau,0,\alpha),\mu+1)$. As explained below, this leads to a contradiction, whether $\mu+1=\varphi'$ or not.
\begin{itemize}
\item If $\mu+1=\varphi'$ then 
${\tt node}(A,(G_{\tau},0,\alpha),\mu)={\tt node}(A,(G_{\tau},0,\alpha),\varphi'-1)$ and thus ${\tt node}(A,(G_{\tau},0,\alpha),\mu)$ is a node of level $j$. Since we stated above that ${\tt node}(A,(G_{\tau},0,\alpha),\mu)$ is a node of level $j+\mu-\varphi$, it follows that $\mu=\varphi$. However, we also stated earlier that ${\tt node}(A,(G_{\tau},0,\alpha),\varphi)$ is a node of level $j+1$. Hence, ${\tt node}(A,(G_{\tau},0,\alpha),\mu)$ is both at levels $j$ and $j+1$, which is a contradiction.

\item If $\mu+1\ne\varphi'$, we precisely have $\varphi<\mu+1<\varphi'$ and, in execution $\mathcal{E}$, the $(\mu+1)$th edge traversal passes through a previously explored edge, which is again a contradiction.
\end{itemize}

As a result, the first case cannot occur.

The second case, which is complementary to the first one, is when for some $\varphi\leq\mu<\varphi'$ ${\tt node}(A,(G_{\mu},0,\alpha),\mu)$ is not a node of some level $1\leq h\leq\lfloor(1+\alpha)r\rfloor$ corresponding to an endpoint of an edge of $\mathcal{L}_{G_\mu}(h,h+1)\cap\overline{{\tt E}(A,(G_\mu,0,\alpha),\mu)}$. Without loss of generality, assume that $\mu$ is the smallest integer at least equal to $\varphi$ for which such a situation occurs. To analyze this second case, we first suppose that $\varphi<\mu$. This means that $3\leq\mu<\tau$ as $\varphi\geq2$ and $\mu<\varphi'\leq\tau$. In view of the assumption of minimality of $\mu$, the fact that $\varphi<\mu$ implies that ${\tt node}(A,(G_{\mu-1},0,\alpha),\mu-1)$ is a node of some level $1\leq h\leq\lfloor(1+\alpha)r\rfloor$ corresponding to an endpoint of an edge of $\mathcal{L}_{G_{\mu-1}}(h,h+1)\cap\overline{{\tt E}(A,(G_{\mu-1},0,\alpha),\mu-1)}$. Moreover, according to the explanations given just before the analysis of the first case, we also know that there is no red edge in ${\tt E}(A,(G_{\mu-1},0,\alpha),\mu-1)$ and at least half the green edges of each layer of $G_{\mu-1}$ are not in ${\tt E}(A,(G_{\mu-1},0,\alpha),\mu-1)$. Hence, by Lemma~\ref{lem:adv2}, either (1) ${\tt node}(A,(G_{\mu},0,\alpha),\mu)$ is a node of level $1\leq h+1\leq\lfloor(1+\alpha)r\rfloor$ corresponding to an endpoint of an edge of $\mathcal{L}_{G_\mu}(h+1,h+2)\cap\overline{{\tt E}(A,(G_\mu,0,\alpha),\mu)}$ or (2) ${\tt edge}(A,(G_{\mu},0,\alpha),\mu)\in{\tt E}(A,(G_{\mu},0,\alpha),\mu-1)$. The first situation leads to an immediate contradiction with the definition of the currently analyzed case. In the second situation, we can state that ${\tt edge}(A,(G_{\tau},0,\alpha),\mu)\in{\tt E}(A,(G_{\tau},0,\alpha),\mu-1)$ because, by Claim~\ref{theo1:claim0}, ${\tt M}(A,(G_{\mu},0,\alpha),\mu)={\tt M}(A,(G_\tau,0,\alpha),\mu)$. However, this means that, in execution $\mathcal{E}$, the $\mu$th edge traversal passes through a previously explored edge, while $\varphi\leq\mu<\varphi'$: this is also a contradiction. Thus, to complete the analysis of the second case and thereby prove this claim, suppose now that $\varphi=\mu$. This means that $2\leq\mu<\tau$.

According to what we have stated earlier, we know that there is no red edge in ${\tt E}(A,(G_{\mu-1},0,\alpha),\mu-1)$ and at least half the green edges of each layer of $G_{\mu-1}$ are not in ${\tt E}(A,(G_{\mu-1},0,\alpha),\mu-1)$. We also know that ${\tt node}(A,(G_{\varphi-1},0,\alpha),\varphi-1)$, which corresponds to ${\tt node}(A,(G_{\mu-1},0,\alpha),\mu-1)$, is a node of level $j$ in $G_{\varphi-1}=G_{\mu-1}$. Hence, if ${\tt node}(A,(G_{\mu-1},0,\alpha),\mu-1)$ is an endpoint of an edge of $\mathcal{L}_{G_{\mu-1}}(j,j+1)\cap\overline{{\tt E}(A,(G_{\mu-1},0,\alpha),\mu-1)}$, we can apply Lemma~\ref{lem:adv2} to establish that: (1) ${\tt node}(A,(G_{\mu},0,\alpha),\mu)$ is a node of level $1\leq j+1\leq\lfloor(1+\alpha)r\rfloor$ corresponding to an endpoint of an edge of $\mathcal{L}_{G_\mu}(j+1,j+2)\cap\overline{{\tt E}(A,(G_\mu,0,\alpha),\mu)}$ or (2) ${\tt edge}(A,(G_{\mu},0,\alpha),\mu)\in{\tt E}(A,(G_{\mu},0,\alpha),\mu-1)$. Using the same arguments as those used when $\varphi<\mu$, these two situations lead to contradictions. Consequently, ${\tt node}(A,(G_{\mu-1},0,\alpha),\mu-1)$ is not an endpoint of an edge of $\mathcal{L}_{G_{\mu-1}}(j,j+1)\cap\overline{{\tt E}(A,(G_{\mu-1},0,\alpha),\mu-1)}$. This means that the $4k$ edges of $\mathcal{L}_{G_{\mu-1}}(j,j+1)$ that are incident to ${\tt node}(A,(G_{\mu-1},0,\alpha),\mu-1)$  all belong to ${\tt E}(A,(G_{\mu-1},0,\alpha),\mu-1)$. Keep in mind that the set of labels of the nodes at level $j$ (resp. $j+1$) in $G_{\tau}$ is identical to the set of labels of the nodes at level $j$ (resp. $j+1$) in $G_{\mu-1}$. Also recall that each green edge of $\mathcal{L}_{G_{\tau}}(j,j+1)$ (resp. $\mathcal{L}_{G_{\mu-1}}(j,j+1)$) is incident to a node of level $j$ and to a node of level $j+1$ in $G_{\tau}$ (resp. $G_{\mu-1}$). As a result,  since ${\tt M}(A,(G_{\mu-1},0,\alpha),\mu-1)={\tt M}(A,(G_\tau,0,\alpha),\mu-1)$ (cf. Claim~\ref{theo1:claim0}), we know that the $4k$ edges of $\mathcal{L}_{G_{\tau}}(j,j+1)$ that are incident to ${\tt node}(A,(G_{\tau},0,\alpha),\mu-1)={\tt node}(A,(G_{\tau},0,\alpha),\varphi-1)$ all belong to ${\tt E}(A,(G_{\tau},0,\alpha),\mu-1)={\tt E}(A,(G_{\tau},0,\alpha),\varphi-1)$. This implies that the $\varphi$th edge traversal in execution $\mathcal{E}$ (from ${\tt node}(A,(G_{\tau},0,\alpha),\varphi-1)$ to ${\tt node}(A,(G_{\tau},0,\alpha),\varphi)$) corresponds to a traversal of a previously explored edge, which is a contradiction. This concludes the analysis of the second case and, by extension, the proof of this claim.
\end{proofclaim}

\end{proof}

We are now ready to prove the following theorem. The proof is based on a construction that corresponds to the second stage of our overall strategy outlined in Section~\ref{sec:int}.

\begin{theorem}
\label{theo:theo1}
Let $\alpha$ be any positive real and let $r\geq 6$ be any integer. Let $A(\alpha,r)$ be an algorithm solving distance-constrained exploration for all instances of $\mathcal{I}(\alpha,r)$. For every integer $k>1$, there exists an instance $(G=(V,E),l_s,\alpha)$ of $\mathcal{I}(\alpha,r)$ such that $|V|\geq k$ and for which the penalty of $A$ is at least $\left(\frac{|V|}{16\left(2\left\lfloor(1+\alpha)r\right\rfloor+3\right)}\right)^2$.
\end{theorem}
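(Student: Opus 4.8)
The plan is to feed the output of the first stage into a second, ``compression'' stage: first I would invoke Lemma~\ref{lem:penagadget} to get a graph on which $A$ already wastes $\Omega(k^2)$ moves before it has touched any gadget, and then I would \emph{merge gadgets} of that graph so that its order drops from $\Theta(rk^2)$ to $\Theta(rk)$, while arranging that $A$ behaves identically on the two graphs until it first reaches a gadget. Concretely, put $w=16k$, $\ell=\lfloor(1+\alpha)r\rfloor+1$, and $G={\tt AdversaryBehavior}(r,\alpha,A,16k)$. By Lemma~\ref{lem:penagadget}, $(G,0,\alpha)\in\mathcal{I}(\alpha,r)$, $G\in\mathcal{F}(\ell,w,r)$, and the penalty of $A$ on $(G,0,\alpha)$ incurred by the time of its first visit of a gadget is at least $k^2$; by Lemma~\ref{lem:conse} the order of $G$ is dominated by its $(\ell-1)\cdot 56k^2$ gadgets, which is precisely the slack to be removed.

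The tool is a \emph{merge} operation. By construction every gadget of $G$ has degree $3$: it is joined to the critical node $v_2$ and to two \emph{level-neighbors} (one at each end of its layer), and no two gadgets of $G$ are adjacent. Given a set $S$ of gadgets (of $G$, or of a graph already obtained from $G$ by such operations) whose sets of level-neighbors are pairwise disjoint, ${\tt merge}(S)$ deletes all gadgets of $S$ and their incident edges and adds one new node $\hat{g}$ adjacent to $v_2$ and to every level-neighbor of a gadget of $S$, keeping at each level node the port number it used towards its former gadget, assigning fresh port numbers at $\hat{g}$ and re-numbering at $v_2$, and giving $\hat{g}$ the label of an arbitrary gadget of $S$; disjointness of the level-neighborhoods and non-adjacency of gadgets guarantee the result is simple and consistently labeled. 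To choose which sets to merge I would follow the overview: for each layer $\mathcal{L}_G(i,i+1)$ form the bipartite graph $H_i$ with one edge per gadget of the layer joining that gadget's two level-neighbors; since every level node meets at most $\lfloor w/4\rfloor=4k$ edges of a layer, $\Delta(H_i)\leq 4k$, so by König's edge-coloring theorem~\cite{konig1916} $H_i$ has a proper $4k$-edge-coloring. A color class is a matching, hence a set of layer-$i$ gadgets with pairwise disjoint level-neighborhoods; and gadgets in layers $i,i'$ with $|i-i'|\geq 2$ trivially have disjoint level-neighborhoods. Splitting the layer indices into odd and even ones, for each color $t$ the set of all color-$t$ gadgets in odd-indexed (resp. even-indexed) layers therefore has pairwise disjoint level-neighborhoods; applying ${\tt merge}$ to each of these at most $2\cdot 4k=8k$ sets yields a graph $G'=(V',E')$ with at most $8k$ gadgets, whose $w\ell$ level nodes and $r-1$ other non-gadget nodes are untouched. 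Hence $|V'|\leq 16k\lfloor(1+\alpha)r\rfloor+24k+(r-1)\leq 16k(2\lfloor(1+\alpha)r\rfloor+3)$ (using $k\geq 1$ and $\lfloor(1+\alpha)r\rfloor\geq r\geq 6$), while $|V'|\geq w\ell\geq k$.

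Next I would verify that $(G',0,\alpha)\in\mathcal{I}(\alpha,r)$, i.e. that $v_1$ (the label-$0$ node) still has eccentricity $r$. Since merging nodes can only shorten distances (and since no edge of $G$ has both endpoints in a merged set, every walk of $G$ projects to a walk of $G'$ of the same length), the eccentricity of $v_1$ in $G'$ is at most its eccentricity $r$ in $G$. For the reverse inequality I would use the tail: in $G'$ the tail $L$ and $v_2$ are unchanged, $v_1$ is adjacent only to level-$1$ nodes, and $v_2$ is adjacent only to the first tail node and to the merged gadgets; so every path in $G'$ from $v_1$ to the tail tip passes through some merged gadget $\hat{g}$, with $d_{G'}(v_1,\hat{g})\geq 2$ (since $\hat{g}$ is not adjacent to $v_1$) and $d_{G'}(\hat{g},\text{tail tip})\geq 1+d_{G'}(v_2,\text{tail tip})=1+(r-3)=r-2$ (since any $\hat{g}$--tail-tip path must reach $v_2$, and $\hat{g}\neq v_2$, and the tail is unchanged). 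Such a path thus has length at least $r$, so $d_{G'}(v_1,\text{tail tip})\geq r$, and together with the upper bound the eccentricity of $v_1$ in $G'$ is exactly $r$.

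Finally I would show the executions of $A$ on $(G,0,\alpha)$ and on $(G',0,\alpha)$ coincide until the first visit of a gadget, so the penalty bound transfers. Let $W=\{v_1\}\cup V_1\cup\cdots\cup V_\ell$. Merging affected only gadgets, the $v_2$--gadget edges, and the re-attached gadget--level-node edges, none of which lies inside $W$; hence the subgraph induced by $W$ together with its port numbers, and the degree of every vertex of $W$, are the same in $G$ and $G'$, and at every vertex of $W$ each port either stays in $W$ (leading to the same vertex, via the same incoming port, in both graphs) or leaves $W$ to a gadget in both graphs. Since $v_2$, the tail and the gadgets are reachable from $v_1$ only through a gadget, the agent stays inside $W$ until it first steps onto a gadget; while inside $W$ its sequence of memories is determined by the data just described, so $A$ makes exactly the same moves in both graphs, traversing the same edges, up to and including the move that first lands on a gadget --- a move that exists because $A$ must explore every gadget, and that traverses a previously-unused edge in both graphs. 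Therefore the penalty of $A$ on $(G',0,\alpha)$ up to its first gadget visit equals that on $(G,0,\alpha)$, i.e. is at least $k^2$; and since the agent must still traverse every remaining edge afterwards, the total penalty of $A$ on $(G',0,\alpha)$ is at least $k^2\geq\left(\frac{|V'|}{16(2\lfloor(1+\alpha)r\rfloor+3)}\right)^2$, so taking $l_s=0$ the instance $(G',0,\alpha)$ is the one required. The hard part will be exactly this last step --- proving that the two executions genuinely agree until a gadget is touched while simultaneously keeping, through the König-coloring and odd/even bookkeeping, only $O(k)$ gadgets alive; by comparison the eccentricity argument is a short graph-distance computation and the size estimate is routine arithmetic.
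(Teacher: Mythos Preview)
Your proposal is correct and follows essentially the same route as the paper: apply Lemma~\ref{lem:penagadget} to obtain $G\in\mathcal{F}(\ell,16k,r)$ with penalty $\geq k^2$ before the first gadget visit, König-edge-color each layer's bipartite graph with $4k$ colors, merge same-color gadgets across layers of a fixed parity into $\leq 8k$ new gadgets, and then verify simplicity, that the eccentricity of the label-$0$ node remains $r$ via the tail, and that the two executions coincide until a gadget is reached. The only cosmetic difference is that the paper computes $|V'|=8k(2\lfloor(1+\alpha)r\rfloor+3)+r-1$ exactly rather than bounding it from above as you do.
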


\begin{proof}
From Lemma~\ref{lem:penagadget}, for every integer $k>1$, there exists an instance $(G=(V,E),0,\alpha)$ of $\mathcal{I}(\alpha,r)$ such that $G\in\mathcal{F}(\lfloor(1+\alpha)r\rfloor+1,16k,r)$ and for which the penalty already incurred by $A$, by the time $\lambda$ of the first visit of a gadget, is at least $k^2$. For ease of reading, we let $w=16k$ and $\ell=\lfloor(1+\alpha)r\rfloor+1$.

For any $1\leq i\leq \ell-1$, we consider the $\left\lfloor \frac{w}{4}\right\rfloor$-regular bipartite graph $B_i$ between vertices of $V_i$ and $V_{i+1}$ that is used in the first step of the construction of $G$ (cf. Section~\ref{sec:preli}). Such a graph can be obtained by taking the subgraph of $G$ induced by $V_i \cup V_{i+1}$ and replacing, for any $1\leq j\leq \left\lfloor\frac{7\beta}{8}\right\rfloor$ (with $\beta=w\left\lfloor\frac{w}{4}\right\rfloor$), the gadget $g^i_j$ and the two edges connecting $g^i_j$ to its neighbor $x_j^i$ in $V_i$ and its neighbor $y_j^{i+1}$ in $V_{i+1}$, by an edge $e_j^i$ connecting $x^i_j$ and $y^{i+1}_j$. The graph $B_i$ admits a proper edge 
coloring with $\left\lfloor \frac{w}{4}\right\rfloor=4k$ colors by the König's Theorem \cite{konig1916} since it is $4k$-regular bipartite graph. For any $1\leq j\leq \left\lfloor\frac{7\beta}{8}\right\rfloor$ and $1\leq i \leq \ell-1$, let $c_j^i$ be the color associated to $e_j^i$ in the proper edge coloring of $B_i$.

For a set of vertex $S$ of a graph $H$, we define the operation of \emph{merging} vertices of $S$ in one vertex $s$ as the operation consisting of removing all vertices of $S$ and all incident edges to a vertex of $S$, then adding a vertex $s$ with an unused label (so that the graph remains consistently labeled), adding an edge $e'$ corresponding to edge $e$ connecting $u$ to $s$ for any edge $e$ that connected $u$ to a vertex of $S$ in $H$ and assigning to $e'$ the same port number as $e$ for $u$ and any valid port number for $s$. For $1\leq p\leq 4k$, let $P_p=\{g_j^i\ |\ c_j^i=p \mbox{ and } i \equiv 0 \mod{2}\}$ and $I_p=\{g_j^i\ |\ c_j^i=p \mbox{ and } i \equiv 1 \mod{2}\}$. We construct graph $G'=(V',E')$ from $G=(V,E)\in\mathcal{F}(\lfloor(1+\alpha)r\rfloor+1,16k,r)$ by merging, for each $1\leq p\leq 4k$, vertices in $P_p$ (resp. $I_p$) in one vertex $x_p$ (resp. $y_p$) and then removing all edges but one between $x_p$ (resp. $y_p$) and the critical node $c(v_2)$ where $c$ is the map $c:V\rightarrow V'$ associating any vertex $u\in V$ to its corresponding vertex, i.e., either the same vertex in $V'$ if $u$ is not a gadget or the vertex $x_p$ (resp. $y_p$) if $u$ was a gadget in $P_p$ (resp. $I_p$) for some $p$. We extend this map $c$ to edges by setting $c(\{u,v\}):=\{c(u),c(v)\}$. For each node $u\in V$, the label of $c(u)\in V'$ is the same as node $u$ if $u$ is not a gadget and any valid label otherwise. We first need to show that graph $G'$ is simple.

\begin{claim}
The graph $G'$ is simple.
\end{claim}

\begin{proofclaim}
It suffices to show that for any pair of vertices $g_j^i, g_{j'}^{i'} \in P_p$ (resp. $I_p$), $g_j^i$ and $g_{j'}^{i'}$ do not have a common neighbor except the critical node. We consider two cases, depending on whether $i=i'$ or not. In the case where $i=i'$, edges $e_j^i$ and $e_{j'}^{i'}=e_{j'}^{i}$ are in the same graph $B_i$. Since $c_j^i=c_{j'}^{i'}$, edges $e_{j}^{i}$ and $e_{j'}^{i'}$ do not have a common endpoint. Hence, vertex $g_{j}^{i}$ and $g_{j'}^{i'}$ do not have a common neighbor except the critical node since their neighbors are endpoints of these two edges. In the case where $i\neq i'$, since $i$ and $i'$ have the same parity, we can assume w.l.o.g. that $i+2\leq i'$. Now, $g_j^i$ is adjacent to vertices in $V_i\cup V_{i+1}$ and $g_{j'}^{i'}$ is adjacent to vertices in $V_{i'}\cup V_{i'+1}$ that is disjoint to $V_i\cup V_{i+1}$ since $i+2\leq i'$.
\end{proofclaim}

We need to prove the following claim in order to show that $(G'=(V',E'),0,\alpha)$ is an instance of $\mathcal{I}(\alpha,r)$.

\begin{claim}
  The eccentricity of node of label $0$ in $G'$ is $r$.
\end{claim}

\begin{proofclaim}
Observe that for any path $P=(\{u_1,u_2\},\{u_2,u_3\}\dots, \{u_{t-1},u_t\})$ in $G$ there exists a path $P'=(\{c(u_1),c(u_2)\},\{c(u_2),c(u_3)\}\dots, \{c(u_{t-1}),c(u_t)\})$ in $G'$ since for any edge $\{u_i,u_{i+1}\}\in E$, there exists an edge $\{c(u_i),c(u_{i+1})\}\in E'$. Hence, for all $u,v\in V(G)$, we know that the distance from $u$ to $v$ in $G$ is at least the distance from $c(u)$ to $c(v)$ in $G'$. The eccentricity of node $c(v_1)$ of label $0$ in $G'$ is at most $r$ since the eccentricity of $v_1$ is $r$ in $G$ by Lemma~\ref{lem:ecc}. It remains to show that the eccentricity of node $c(v_1)$ in $G'$ is at least $r$. Recall that in $G$ all paths from the critical node $v_2$ to $v_1$ have length at least 3 since they pass through a gadget and a node of $V_1$. It follows that, in $G'$, all paths from $c(v_2)$ to $c(v_1)$ have length equal to 3 since they pass through a merged gadget $x_p$ for some $p$ and a node of $V_1$. Since the nodes of $L$ are unchanged in $G'$, the tail tip of $L$ is exactly at distance $r - 3$ from $c(v_2)$ and at distance $r$ from $c(v_1)$.
\end{proofclaim}

Let $t$ be the number of edge traversals performed by $A$ before the time $\lambda$ of the visit of the first gadget.
\begin{claim}\label{claim:same_behavior_merged_graph}
For every $0\leq i<t$, we have ${\tt M}(A,(G,0,\alpha),i)={\tt M}(A,(G',0,\alpha),i)$, 
${\tt node}(A,(G',0,\alpha),i)=c({\tt node}(A,(G,0,\alpha),i))$  and ${\tt E}(A,(G',0,\alpha),i)=c({\tt E}(A,(G,0,\alpha),i))$.
\end{claim}

\begin{proofclaim}
We show by induction on $i$ the statement of the claim. First, we consider the execution of the algorithm $A$ in both graphs $G$ and $G'$ before the first traversal of an edge. We have ${\tt node}(A,(G',0,\alpha),0)=c({\tt node}(A,(G,0,\alpha),0))$ since $v_1$ has label $0$ in $G$ and $c(v_1)$ has label $0$ in $G'$. We have ${\tt M}(A,(G,0,\alpha),0)={\tt M}(A,(G',0,\alpha),0)$ since the local view of the agent at node $v_1$ and $c(v_1)$ is the same in both graphs $G$ and $G'$ as both vertices have the same port numbering around them. We have ${\tt E}(A,(G',0,\alpha),i)=\emptyset=c({\tt E}(A,(G,0,\alpha),i))$ and so the statement of the claim is true for $i=0$. 

Assume the induction holds for $i-1$. Consider the execution of the algorithm $A$ in both graphs $G$ and $G'$ before the traversal of the $i$th edge. By the induction hypothesis, the memory of the agent is the same in both configurations and the agent is at node $u$ in $G$ and node $c(u)$ in $G'$. Hence, the agent will choose the same port number in both configurations and traverse $e = {\tt edge}(A,(G,0,\alpha),i)$ in $G$ and $e'={\tt edge}(A,(G',0,\alpha),i)$. Since $i<t$, no endpoint of $e$ is a gadget and so we have $e'=c(e)$. It follows that ${\tt E}(A,(G',0,\alpha),i)= c(e) \cup {\tt E}(A,(G',0,\alpha),i-1) = c(e) \cup c({\tt E}(A,(G,0,\alpha),i-1)) = c({\tt E}(A,(G,0,\alpha),i))$. After, the traversal of edge $e:=\{u,v\}$, the agent arrives at node $v$ in $G$ and node $c(v)$ in $G'$. Hence, we have ${\tt node}(A,(G',0,\alpha),i)=c({\tt node}(A,(G,0,\alpha),i))$. Since $v$ is not a gadget, the local view of the agent is the same at node $v$ in $G$ and $c(v)$ in $G'$. It follows that the memory of the agent will be the same in both configurations. 
\end{proofclaim}

By Claim~\ref{claim:same_behavior_merged_graph}, the penalty incurred by algorithm $A(\alpha, r)$ on instance $(G', 0, \alpha)$ by time $\lambda$ is the same as the penalty incurred by algorithm $A(\alpha, r)$ on instance $(G, 0, \alpha)$ by time $\lambda$. Observe that in order to construct $G'$ from $G$, we have merged the $(\ell -1)\left\lfloor\frac{7\beta}{8}\right\rfloor$ gadgets of $G$ into $2(4k)$ vertices. Hence, we have:

\begin{eqnarray*}
|V'| & = & |V| - (\ell -1)\left\lfloor\frac{7\beta}{8}\right\rfloor + 2(4k)\\
|V'| & = & w\ell+ (\ell-1)\left\lfloor\frac{7\beta}{8}\right\rfloor+r-1 - (\ell -1)\left\lfloor\frac{7\beta}{8}\right\rfloor + 8k \mbox{~~(by Lemma~\ref{lem:conse})}\\
|V'| & = & 16k\left(\left\lfloor(1+\alpha)r\right\rfloor+1\right)+r-1  + 8k\\
|V'| & = & 8k\left(2\left\lfloor(1+\alpha)r\right\rfloor+3\right)+r-1
\end{eqnarray*}

We have:

\begin{eqnarray*}
k & = & \frac{|V'|-r+1}{8\left(2\left\lfloor(1+\alpha)r\right\rfloor+3\right)} \\
k^2 & = & \left(\frac{|V'|-r+1}{8\left(2\left\lfloor(1+\alpha)r\right\rfloor+3\right)}\right)^2
\end{eqnarray*}

Hence, the penalty incurred by $A(\alpha, r)$ on instance $(G', 0, \alpha)$ is at least
$\left(\frac{|V'|}{16\left(2\left\lfloor(1+\alpha)r\right\rfloor+3\right)}\right)^2$ since $|V'|\geq 2(r-1)$.

\end{proof}

Note that in the statement of Theorem~\ref{theo:theo1}, the instance $(G=(V, E), l_s, \alpha)$ can always be chosen so that $r^2\in o(|V|)$. Therefore, we have the following corollary.

\begin{corollary}
\label{col:col1}
Let $\alpha$ be any positive real and let $r\geq 6$ be any integer. There exists no algorithm solving distance-constrained exploration for all instances $(G=(V,E),l_s,\alpha)$ of $\mathcal{I}(\alpha,r)$ with a linear penalty in $|V|$.
\end{corollary}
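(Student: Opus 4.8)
The plan is to derive the corollary from Theorem~\ref{theo:theo1} by a short asymptotic argument. Fix any positive real $\alpha$ and any integer $r\geq 6$, and suppose, for the sake of contradiction, that there is an algorithm $A(\alpha,r)$ solving distance-constrained exploration for all instances of $\mathcal{I}(\alpha,r)$ with a penalty that is linear in $|V|$, i.e., in $\mathcal{O}(|V|)$. Unwinding this definition, there exist a constant $c>0$ and an integer $n_0$ such that, for every instance $(G=(V,E),l_s,\alpha)\in\mathcal{I}(\alpha,r)$ with $|V|\geq n_0$, the penalty of $A$ on that instance is at most $c\,|V|$.

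Next I would invoke Theorem~\ref{theo:theo1}, observing that, since $\alpha$ and $r$ are fixed, the quantity $D:=16\bigl(2\lfloor(1+\alpha)r\rfloor+3\bigr)$ is a fixed positive constant that does not depend on the size of the graph. For every integer $k>1$, Theorem~\ref{theo:theo1} furnishes an instance $(G=(V,E),l_s,\alpha)\in\mathcal{I}(\alpha,r)$ with $|V|\geq k$ on which the penalty of $A$ is at least $\bigl(|V|/D\bigr)^2$. Letting $k$ grow without bound thus yields instances of $\mathcal{I}(\alpha,r)$ with $|V|$ arbitrarily large on which the penalty of $A$ is at least $\bigl(|V|/D\bigr)^2$, a quantity that is superlinear in $|V|$ precisely because $D$ is a constant.

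Finally I would combine the two bounds. Choose $k$ large enough that the instance provided by Theorem~\ref{theo:theo1} satisfies $|V|\geq\max\{\,n_0,\ c\,D^2+1\,\}$; such a $k$ exists since that theorem guarantees $|V|\geq k$. For this instance the penalty of $A$ is at most $c\,|V|$ by the linearity assumption (as $|V|\geq n_0$), and at the same time at least $\bigl(|V|/D\bigr)^2=(|V|/D^2)\cdot|V|>c\,|V|$ (as $|V|/D^2>c$). This contradiction shows that no such algorithm $A$ can exist, which proves the corollary. I do not expect a genuine obstacle here: the only point requiring care is making the quantifier order in the definition of ``linear penalty'' explicit (the bound $c\,|V|$ must hold uniformly for all sufficiently large instances) and then noting that the lower bound $\bigl(|V|/D\bigr)^2$ of Theorem~\ref{theo:theo1} grows faster than any linear function of $|V|$; the remark preceding the corollary about choosing the instance with $r^2\in o(|V|)$ is not needed for this statement and would only be mentioned in passing.
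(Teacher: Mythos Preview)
Your proof is correct and follows the same route as the paper: with $\alpha$ and $r$ fixed, the denominator $D=16\bigl(2\lfloor(1+\alpha)r\rfloor+3\bigr)$ in Theorem~\ref{theo:theo1} is a constant, so the lower bound $(|V|/D)^2$ is quadratic in $|V|$ and defeats any linear upper bound. You are also right that the paper's remark about choosing the instance so that $r^2\in o(|V|)$ is not needed here; since $r$ is fixed, that condition is automatic once $|V|$ is taken large enough, and your argument makes this explicit without invoking it.
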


\section{Impossibility Result for Fuel-constrained Exploration}

In this section, we observe that a linear penalty in $|V|$ cannot be obtained for the task of fuel-constrained exploration either. Actually, using much simpler arguments, we can quickly get a lower bound stronger than the one established for distance-constrained exploration in Theorem~\ref{theo:theo1}. As for the previous variant, the tuple $(G=(V,E),l_s,\alpha)$ can consistently define an instance of the fuel-constrained exploration problem. Hence, we can reuse it analogously, along with the set $\mathcal{I}(\alpha,r)$, in the statements of the following theorem and corollary dedicated to fuel-constrained exploration.

\begin{theorem}
\label{theo:theo2}
Let $\alpha$ be any positive real and let $r\geq 2$ be any integer such that $r\alpha\geq 1$. Let $A(\alpha,r)$ be an algorithm solving fuel-constrained exploration for all instances of $\mathcal{I}(\alpha,r)$. For every positive integer $k$, there exists an instance $(G=(V,E),l_s,\alpha)$ of $\mathcal{I}(\alpha,r)$ such that $|V|\geq k$ and for which the penalty of $A$ is at least $\frac{|V|^2}{8\alpha}$.
\end{theorem}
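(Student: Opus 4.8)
The plan is to give a purely combinatorial, non-adaptive argument: for each $k$ I would exhibit one fixed graph $G$ on which \emph{any} valid fuel-constrained exploration walk — in particular the one performed by $A$ — is forced to incur penalty at least $|V|^2/(8\alpha)$. No adversary that reacts to $A$'s choices is needed here, which is why the argument is much shorter than in Section~\ref{sec:imp}.

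The graph $G=G_m$, for a large integer $m$ to be fixed at the end, is built as follows: take a path $s=v_0,v_1,\dots,v_{r-1}=c$ with $r-1$ edges, then add $m$ new vertices forming a clique $W$ and join every vertex of $W$ to $c$ (and to no other $v_j$). Then $c$ is a cut vertex separating $W$ from $s$, every vertex of $W$ is at distance exactly $r$ from $s$ while $c$ is at distance $r-1$, so the eccentricity of $s$ is $r$ and $(G_m,l_s,\alpha)\in\mathcal{I}(\alpha,r)$; moreover $G_m$ is simple with $|V|=r+m$ and $|E|=(r-1)+m+\binom{m}{2}$, and since $r\alpha\ge1$ the instance is feasible, so $A$ terminates on it. Write $B=2(1+\alpha)r$ for the tank size and note that $\lfloor B\rfloor-2r=\lfloor B\rfloor-2(r-1)-2$ lies between $2$ and $2\alpha r$, because $\alpha r\ge1$ forces $\lfloor B\rfloor\ge 2r+2$ while trivially $\lfloor B\rfloor\le 2r+2\alpha r$.

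Run $A$ on $G_m$ and cut the exploration walk into \emph{phases} at successive visits to $s$; each phase has at most $\lfloor B\rfloor$ edge traversals. The key observation is about a phase $\Phi$ that explores at least one previously unexplored edge lying inside $W$: such a phase visits a vertex of $W$, so it must pass into and out of the $W$-side through $c$. Hence, since every $v_{j-1}v_j$ is a bridge and (for every phase but possibly the last one) $\Phi$ is a closed walk based at $s$, $\Phi$ traverses each of the $r-1$ path edges an even number $\ge 2$ of times; and it spends at least two further traversals on edges $cw$ ($w\in W$) entering and leaving the clique. The remaining budget of at most $\lfloor B\rfloor-2(r-1)-2=\lfloor B\rfloor-2r$ traversals then bounds the number of new $W$-internal edges $\Phi$ can explore (the last phase, which need not return to $s$, is handled with the weaker bound $\lfloor B\rfloor-(r-1)$). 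Since $W$ contains $\binom{m}{2}$ internal edges, summing over phases shows that the number $q$ of phases exploring a new $W$-internal edge satisfies $q\ge 1+(\binom{m}{2}-B)/(\lfloor B\rfloor-2r)\ge 1+(\binom{m}{2}-B)/(2\alpha r)$ once $m$ is large enough that $\binom{m}{2}\ge B$.

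To finish, each of those $q$ phases (barring possibly the last) contributes at least $2(r-1)$ traversals of path edges, while every edge of $G_m$ is traversed at least once; so the total number of traversals is at least $2(r-1)(q-1)+\bigl(|E|-(r-1)\bigr)$, and subtracting $|E|$ gives penalty at least $2(r-1)(q-1)-(r-1)=(r-1)(2q-3)$. Combining with the bound on $q$, the penalty is at least $(r-1)\bigl(\binom{m}{2}/(\alpha r)-c(r,\alpha)\bigr)$ where $c(r,\alpha)=2(1+\alpha)/\alpha+1$ does not depend on $m$. Since $(r-1)/r\ge 1/2$ for $r\ge2$, this is $\tfrac{1}{4\alpha}m^2\bigl(1-o(1)\bigr)$ as $m\to\infty$, whereas $|V|^2/(8\alpha)=(r+m)^2/(8\alpha)=\tfrac{1}{8\alpha}m^2\bigl(1+o(1)\bigr)$; thus I can pick $m$ large enough that both $|V|=r+m\ge k$ and the penalty exceeds $|V|^2/(8\alpha)$, which proves the theorem. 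The only delicate point is the phase bookkeeping in the third paragraph: squeezing out the extra ``$-2$'' by keeping $c$ outside the clique (rather than inside it) is precisely what preserves the constant $1/8$ in the tight boundary case $r=2,\alpha=1/2$, and one must be a little careful with the final, possibly non-returning, phase — but beyond that the estimates are routine.
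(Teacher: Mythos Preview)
Your proof is correct and takes essentially the same approach as the paper: both use a lollipop graph (a path of length $r-1$ from $s$ to a large clique) and argue that each refuelling excursion can explore only $O(\alpha r)$ clique edges, forcing $\Omega(|V|^2/(\alpha r))$ excursions each paying $\Omega(r)$ penalty on the path. Your phase bookkeeping is more explicit than the paper's somewhat terse calculation, and you conclude via an asymptotic ``for $m$ large enough'' argument rather than fixing $|V|=(8\alpha+6)rk$ up front, but the construction and the key counting idea are the same.
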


\begin{proof}
  Let $G=(V,E)$ be a $\Big([(8\alpha+6)k-1]r+1,r-1\Big)$-lollipop
  graph, i.e., $G$ has $(8\alpha+6)rk$ nodes and consists of a line
  $L$ of $r-1$ nodes and a clique $C$ of $|C| = |V| - (r-1)$ nodes
  connected with a bridge. We designate the endpoint of $L$ that is
  not connected to $C$ as the source node $s$. Finally, we assign
  labels and port numbers in any way ensuring that the resulting graph
  $G=(V,E)$ is consistently labeled. Notice that $|V| \geq k \geq 1$
  and, by construction, the eccentricity of $s$ is $r$.

  Consider now an execution of the fuel-constrained exploration algorithm $A(\alpha,r)$ with the instance $(G=(V,E),l_s,\alpha)$, where $l_s$ is the label of node $s$. The tail of $G$ is defined as the subgraph of $G$ made of the $r-1$ edges that are not in $C$, and the set of nodes that are incident to these edges in $G$, including the source node.
  As $C$ contains $\frac{|C|\cdot (|C|-1)}{2}$ edges,  the agent has to entirely traverse the tail of $G$ at least $2\cdot \frac{|C|\cdot (|C|-1)}{4\alpha r}-1$ times during the execution. Hence, the penalty $P$ of $A$ is at least $\left(\frac{|C|\cdot (|C|-1)}{2\alpha r}-2\right) \cdot (r-1)$. Now, since $r \geq 2$, we have

  \begin{align*}
    P & \geq \frac{|C|\cdot (|C|-1)-4\alpha r}{2\alpha r} \cdot \frac{r}{2}\\
    P & \geq \frac{|C|\cdot (|C|-1)-4\alpha r}{4\alpha}\\
    P & \geq \frac{(|V|-r+1)\cdot (|V|-r)-4\alpha r}{4\alpha}\\
    P & \geq \frac{|V|^2-2r|V|+r^2+|V|-(4\alpha+1)r}{4\alpha}\\
    \intertext{Since $r$ and $\alpha$ are positive, and $|V|\geq 1$, we have}
    P & \geq \frac{|V|^2-2r|V|-(4\alpha+1)r|V|}{4\alpha}\\
    P & \geq \frac{|V|^2-(4\alpha+3)r|V|}{4\alpha}\\
  \intertext{Since $\frac{V^2}{2}-(4\alpha+3)r|V| \geq 0$ for $|V| \geq (8\alpha+6)r$ and $|V| = (8\alpha+6)rk$ with $k \geq 1$, we have}
    P & \geq \frac{|V|^2}{8\alpha}\\
\end{align*}
This concludes the proof of this theorem.
\end{proof}

From Theorem~\ref{theo:theo2}, we immediately get the following corollary.

\begin{corollary}
\label{col:col2}
Let $\alpha$ be any positive real and let $r\geq 2$ be any integer such that $r\alpha\geq 1$. There exists no algorithm solving fuel-constrained exploration for all instances $(G=(V,E),l_s,\alpha)$ of $\mathcal{I}(\alpha,r)$ with a linear penalty in $|V|$.
\end{corollary}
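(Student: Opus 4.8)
The plan is to deduce Corollary~\ref{col:col2} directly from Theorem~\ref{theo:theo2} by a contradiction argument. Fix a positive real $\alpha$ and an integer $r\ge 2$ with $r\alpha\ge 1$, and suppose toward a contradiction that some algorithm $A(\alpha,r)$ solved fuel-constrained exploration for every instance of $\mathcal{I}(\alpha,r)$ with a penalty linear in the order of the graph, say at most $c\cdot|V|$ on every instance $(G=(V,E),l_s,\alpha)\in\mathcal{I}(\alpha,r)$, for some constant $c>0$ (an additive constant can be absorbed in exactly the same way and changes nothing).

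Then I would apply Theorem~\ref{theo:theo2} with any integer $k>8\alpha c$: it produces an instance $(G=(V,E),l_s,\alpha)\in\mathcal{I}(\alpha,r)$ with $|V|\ge k$ on which the penalty of $A$ is at least $\frac{|V|^2}{8\alpha}$. Since $|V|\ge k>8\alpha c$, this forces $\frac{|V|^2}{8\alpha}>c\cdot|V|$, which contradicts the assumed upper bound on the penalty of $A$. Hence no such algorithm can exist, which is exactly the statement of the corollary.

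There is no real obstacle here: all the content sits in Theorem~\ref{theo:theo2}, and the only feature of it being used beyond the bound itself is that its lollipop construction yields graphs of arbitrarily large order (indeed $|V|=(8\alpha+6)rk$ with $k$ unbounded), so the quadratic lower bound $\Omega(|V|^2/\alpha)$ persists along an infinite family of instances of $\mathcal{I}(\alpha,r)$ — precisely what is needed to rule out any $\mathcal{O}(|V|)$ penalty. The only point worth making explicit is that ``linear penalty in $|V|$'' must be read as a single $\mathcal{O}(|V|)$ bound holding uniformly over all of $\mathcal{I}(\alpha,r)$, which is the reading under which the statement is meaningful and under which the argument above applies verbatim.
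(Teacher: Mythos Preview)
Your argument is correct and matches the paper's approach: the paper simply states that the corollary follows immediately from Theorem~\ref{theo:theo2}, and you have spelled out exactly the standard contradiction argument (choose $k>8\alpha c$ so that $\frac{|V|^2}{8\alpha}>c|V|$) that makes this immediate.
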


\section{Conclusion}
In this paper, we addressed open problems posed by Duncan et al. in~\cite{DuncanKK06}, who asked whether distance-constrained or fuel-constrained exploration could always be achieved with a penalty of \(O(|V|)\). For each of these two exploration variants, we completely solved the open problem by providing a strong negative answer, which in turn highlights a clear difference with the task of unconstrained exploration that can always be conducted with a penalty of $\mathcal{O}(|V|)$ (cf. \cite{PanaiteP99}).

It should be noted that we did not make any attempt at optimizing the lower bounds on the penalties: our only concern was
to prove their nonlinearity with respect to $|V|$. However, for the case of fuel-constrained exploration, there is not much room for improvement: we showed that a penalty of \(o(|V|^2)\) cannot be guaranteed, which is almost optimal as it is known that the problem can always be solved with a penalty of \(O(|V|^2)\) (cf.~\cite{DuncanKK06}). 

In contrast, such an observation cannot be made for the case of distance-constrained exploration. Indeed, the best-known algorithm for solving distance-constrained exploration can sometimes entail a penalty of $\Theta(|V|^2)$ (cf.~\cite{DuncanKK06}), while we showed that the penalty for this variant can sometimes be of order $\frac{|V|^2}{r^2}$. Hence, this leaves fully open the intriguing question of the optimal bound on the penalty for the task of distance-constrained exploration.

\bibliographystyle{plain}
\bibliography{biblio}

\end{document}